\title{Scalable Hypergraph Visualization}
\author{
Peter~Oliver,
Eugene~Zhang, \textit{Senior Member,~IEEE,}
and Yue~Zhang, \textit{Member,~IEEE}}
\abstract{Hypergraph visualization has many applications in network data analysis. Recently, a polygon-based representation for hypergraphs has been proposed with demonstrated benefits. However, the polygon-based layout often suffers from excessive self-intersections when the input dataset is relatively large. In this paper, we propose a framework in which the hypergraph is iteratively simplified through a set of atomic operations. Then, the layout of the simplest hypergraph is optimized and used as the foundation for a reverse process that brings the simplest hypergraph back to the original one, but with an improved layout. At the core of our approach is the set of atomic simplification operations and an operation priority measure to guide the simplification process. In addition, we introduce necessary definitions and conditions for hypergraph planarity within the polygon representation. We extend our approach to handle simultaneous simplification and layout optimization for both the hypergraph and its dual. We demonstrate the utility of our approach with datasets from a number of real-world applications.}
\keywords{Hypergraph visualization, scalable visualization, polygon layout, hypergraph embedding, primal-dual visualization}
\long\def\symbolfootnote[#1]#2{\begingroup%
	\def\thefootnote{\fnsymbol{footnote}}\footnote[#1]{#2}\endgroup}
\newtheorem{theorem}{\sffamily Theorem}
\newtheorem{lemma}[theorem]{\sffamily Lemma}
\newtheorem{definition}[theorem]{\sffamily Definition}
\newtheorem{procedure}[theorem]{\sffamily Procedure}
\crefname{table}{Tab.}{Tabs.}
\Crefname{table}{Table}{Tables}
\crefname{section}{Sec.}{Secs.}
\Crefname{section}{Section}{Sections}
\begin{document}

\firstsection{Introduction}\label{sec:introduction}

\maketitle

Hypergraphs are a generalization of graph data structures consisting of a set of vertices and a family of hyperedges. A hyperedge joins any number of $n\geq 1$ vertices and provides a natural way to represent \emph{polyadic} (multi-sided) relationships~\cite{Qu:2017}. Hypergraphs can be thought of as networks of polyadic relationships and have many applications in social sciences, biology, computer science, and engineering where such relationships are prevalent~\cite{Alsallakh:16}.

Hypergraph visualization has seen many advances in recent decades~\cite{Alsallakh:16} with a focus on finding a proper visual metaphor for representing hyperedges (polyadic relationships) to facilitate a number of common analysis tasks. Qu et al.~\cite{Qu:2017} introduce a visual metaphor in which each hyperedge takes the form of a 2D polygon in the plane, whose vertices encode the members of the underlying polyadic relationship. This representation allows the cardinality of a hyperedge (number of vertices) to be easily understood. For example, a paper-author hypergraph dataset, in which each vertex represents an author and each hyperedge a research paper, can be visualized with the polygon metaphor to easily communicate the number of co-authors for each publication (\Cref{fig:teaser}). Qu et al.~\cite{Qu:22} also develop an optimization framework that can automatically generate a high-quality polygon layout for a hypergraph with tens of hyperedges based on a set of visualization design principles that they identify. Recognizing the duality between the vertices and the hyperedges in a hypergraph, they augment their optimization framework to simultaneously generate high-quality layouts for the input hypergraph and its \textit{dual hypergraph} in which the roles of the vertices and hyperedges are reversed.

However, scalability presents a major challenge to their approach for large hypergraph datasets, which can have hundreds or thousands of vertices and hyperedges. With a relatively large dataset, their optimization process can be trapped at local minima, producing suboptimal layouts with excessive overlaps between polygons (\Cref{fig:teaser} (a)). To address this challenge, we introduce a new polygon-based layout optimization framework in which a complex hypergraph is automatically simplified by iteratively applying a set of atomic simplification operations that we have identified. The simplification process terminates when one or more user-specified criteria are met. Next, we generate a polygon layout for the simplified hypergraph using a version of the optimization method of Qu et al.~\cite{Qu:22} which we modify to make efficient use of their optimization energy terms. From there, our framework iteratively inverts the simplification operations until the input hypergraph is recovered. Each time a simplification is inverted, the layout is locally optimized over a neighborhood immediately surrounding the location of the operation. Our framework (\Cref{fig:teaser} (c)) leads to an improved final layout of the original hypergraph compared to the framework of Qu et al.~\cite{Qu:22} (\Cref{fig:teaser} (a)). We extend our framework to handle simultaneous layout optimization of the hypergraph and its dual hypergraph, taking advantage of the fact that our atomic operations naturally simplify both hypergraphs. This is demonstrated in one of our case study examples in \Cref{sec:casestudies} (\Cref{fig:trade}). During the simplification step, the order of atomic operations is determined by a priority measure which we design to reduce the amount of overlap among polygons and preserve local structures such as high-degree and centrally located elements in the visualization.

\setlength{\belowcaptionskip}{-2pt}
\begin{figure}[tbp]
  \centering
  \subfloat[][]{\includegraphics[width=1.0in]{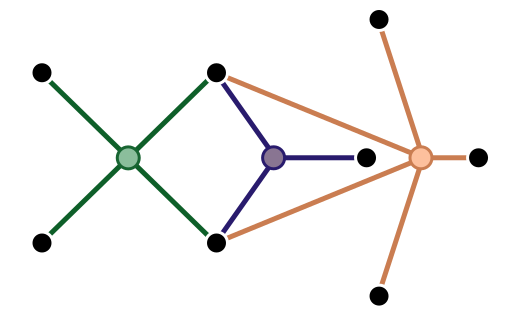}} \hspace{0.125in}
  \subfloat[][]{\includegraphics[width=1.0in]{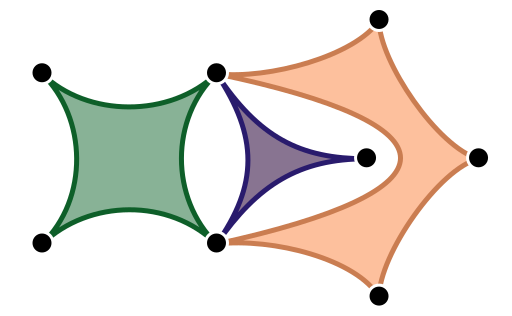}} \hspace{0.125in}
  \subfloat[][]{\includegraphics[width=1.0in]{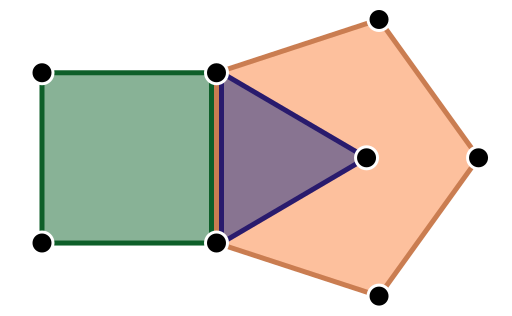}}
  \caption{A hypergraph with a planar K{\"o}nig representation (a) is plane embeddable when the regions can be represented by arbitrary shapes as in Zykov's representation (b). However, with the additional requirement of polygon convexity (c), the hypergraph has unavoidable overlaps using the polygon representation~\cite{Qu:22}.
  }\label{fig:planarity}
\end{figure}

As the polygon-based metaphor requires all polygons to be convex, a hypergraph may not be plane-embeddable even when it is planar (\Cref{fig:planarity}). That is, the requirement of convexity greatly reduces the set of hypergraphs that can be mapped to the plane without self-overlap in the regions representing the hyperedges. We investigate this issue and introduce a new notion of hypergraph planarity with convex polygons. Being able to detect subsets of non-planar hyperedges allows us to save time on attempting to remove overlaps between such hyperedges.

Our framework exceeds the performance of~\cite{Qu:22} for hypergraph datasets with more than $1000$ elements in terms of reducing polygon overlaps which is crucial to visual clarity. To enable our priority-guided simplification, we also introduce a new vertex and hyperedge based statistic called \emph{adjacency factor} which correlates to non-planar sub-hypergraphs.

We demonstrate the utility of our framework with two applications: (1) a paper-author collaboration network and (2) a network of international trade agreements. To evaluate the effectiveness of our layout framework, we conduct a user survey where participants have completed analysis tasks using our final optimized layouts as well as a few scales of simplification. We utilize eye-tracking technology to study participants' exploration of our visualizations while they answered task-driven questions. The preliminary results suggest that our new layout method allowed the survey participants to perform the tasks with relatively high accuracy.

We make the following contributions to hypergraph visualization:
\begin{enumerate}
  \item A novel multi-scale optimization framework for generating high-quality polygon-based visualizations of hypergraphs with thousands of vertices and hyperedges.
  \item A novel priority-guided hypergraph simplification method which is the first to operate on both vertices and hyperedges.
  \item A set of atomic simplification operations which can simplify a hypergraph and its dual hypergraph simultaneously.
  \item A new definition for hypergraph planarity within the polygon visualization metaphor.
\end{enumerate}

\section{Related Work}
\label{sec:prev_work}

In this section, we review past research in graph and hypergraph visualization that is most relevant to our work.

\subsection{Hypergraph Visualization}

Hypergraph visualization has been well explored during recent decades~\cite{Alsallakh:16}. Much of this research has focused on identifying the visual representation of hyperedges, such as matrices~\cite{Kim:2007,Sadana:2014,Lex:2014,valdivia2019analyzing}, bipartite graphs~\cite{Stasko:07,Dork:12,Alsallakh:2013}, and metro lines~\cite{Wu:2020,Jacobsen:2021,Frank:2021}. Region-based visual metaphors derived from Euler and Venn diagrams~\cite{Rogers:08,simonetto2009fully,Stapleton:12,Micallef:14}, represent sets (hyperedges) as closed regions whose overlaps indicate the intersections of their corresponding sets. The vertices in the hypergraph are often not explicitly shown, such as~\cite{Rogers:08}. More recent approaches explicitly represent set elements (vertices) by drawing them as points inside the corresponding regions~\cite{Santamara:2010,Riche:2011,Alsallakh:2013,Arafat:17,simonetto2015simple}. As pointed out in~\cite{Qu:22}, placing the vertices inside the regions can make it difficult to identify the cardinality of the hyperedges. Instead, Zykov~\cite{zykov1974hypergraphs} restricts vertex placement to the boundaries of the regions. Qu et al.~\cite{Qu:2017} represent each hyperedge as a polygon so the vertices of the hyperedge are also the vertices of the polygon. Unlike Zykoy's approach where the region can take arbitrary shapes, Qu et al.~\cite{Qu:2017} require the polygons to be as close to regular as possible and thus convex. With this representation, identifying the cardinality of a hyperedge is the same as recognizing the cardinality of the corresponding polygon. Qu et al.~\cite{Qu:22} identify a number of design principles for polygon-based hypergraph drawings and develop an automatic layout optimization system based on these principles.

However, the objective functions used in~\cite{Qu:22} are not convex, thus leading to local minimums that make the final hypergraph layouts suboptimal, especially for large datasets. In addition, some of the hypergraphs cannot be embedded in the plane without overlaps using the polygon representation, even when they are plane embeddable if the hyperedges are represented by arbitrary (possibly non-convex) shapes. In this paper, we introduce a new multi-scale optimization framework that can lead to improved hypergraph layouts compared to those from~\cite{Qu:22}. Furthermore, we introduce the notion of polygon planarity, which can save on computation attempting to remove overlaps among hyperedges that are inevitable due to the polygon convexity requirement.

\subsection{Graph and Hypergraph Simplification}

Techniques for reducing complexity in graphs have been well studied and provide numerous advantages for improving graph-based algorithm efficiency and graph visualization. Depending on the application, it may be more valuable to reduce the number of graph vertices (coarsening), or the number of edges (sparsification)~\cite{bravo:2019:unifying}.

Graph sparsification algorithms have been studied extensively and two main categories of graph sparsifiers have arisen: \textit{cut sparsifiers} and \textit{spectral sparsifiers}. We review only the most relevant works here. Bencz\'{u}r and Karger~\cite{benczur:1996} introduce cut sparsifiers which approximate every cut in a weighted graph to an arbitrarily small multiplicative error. Spielman and Teng~\cite{spielman:2011:spectral} introduce the stronger notion of spectral sparsifiers which approximate the Laplacian quadratic form of the graph to an arbitrarily small multiplicative error.

Graph coarsening has been primarily used to construct multi-level graph frameworks for graph partitioning problems. Such frameworks transform an input graph $G_0$ into a sequence of smaller graphs $G_1,G_2,...,G_n$ such that each level in the sequence contains fewer vertices than the previous graph. This is usually accomplished through a graph coarsening scheme in which a set of vertices in $G_i$ is merged into a single \textit{multi-node} in the next coarser level $G_{i+1}$. Identifying appropriate vertex sets for merging has followed two main approaches: vertex pair matching~\cite{bui:1993:heuristic,hendrickson:1995:multi,karypis:1998:fast} and vertex grouping based on some graph-based statistics such as high connectivity or affinity~\cite{cheng:1991:improved,garbers:1990:finding,hagen:1991:fast,hagen:1992:new,ron:2011:relaxation,imre:2020:spectrum}.

Several frameworks combine graph sparsification with multi-level coarsening to reduce the number of vertices and edges in a graph. Imre et al.~\cite{imre:2020:spectrum} perform sparsification and coarsening in two separate phases of their algorithm while Bravo-Hermsdorff and Gunderson~\cite{bravo:2019:unifying} present a unified framework that incorporates vertex deletion, vertex contraction, edge deletion, and edge contraction.

Hypergraph sparsification is less studied but has been gaining traction in recent years. Cut sparsifier algorithms have been extended to hypergraphs with near-linear time complexities~\cite{kogan:2015:sketching,chekuri:2018:minimum}, and most recently with a sub-linear time complexity~\cite{chen:2021:hypergraph}. The notion of the Laplacian for undirected hypergraphs is introduced by Louis~\cite{louis:2015:hypergraph}, which has recently been used to design algorithms for producing linear hypergraph spectral sparcifiers~\cite{kapralov:2022:spectral,soma:2019:spectral}.

Multi-level coarsening has also been extended to hypergraph partitioning~\cite{alpert:1996:hybrid,hauck:1997:evaluation,cong:1993:parallel,karypis:1999:multilevel}. Alpert et al.~\cite{alpert:1996:hybrid} first convert the hypergraph to a graph by replacing each hyperedge with a graph clique and applying existing graph coarsening schemes. Karypis et al.~\cite{karypis:1999:multilevel} develop a coarsening scheme that acts directly on the hypergraph in which vertices belonging to selected hyperedges are merged together. More recent applications of hypergraph coarsening have presented distributed hypergraph partitioners using parallel versions of the multi-level technique~\cite{devine:2006:parallel,trifunovic:2008:parallel,kabiljo:2017:social}.

To our knowledge, we are the first to present a unified hypergraph simplification framework that operates on both hypergraph vertices and hyperedges. In addition, we introduce the notion of polygon planarity for hypergraphs and develop a simplification priority function that aims to preserve structures in the hypergraphs as well as reduce unnecessary polygon overlaps.

\section{Background and Notations}
\label{sec:background}

\begin{figure}[t]
  \centering
  \vspace{0.125in}
  \includegraphics[width=1.625in]{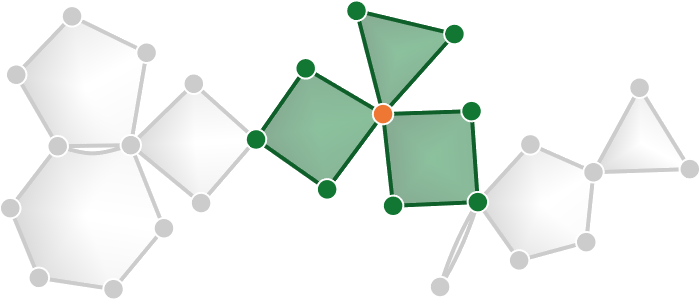}
  \hspace{0.125in}
  \includegraphics[width=1.625in]{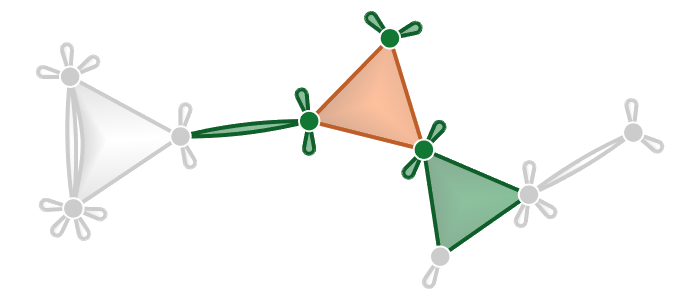}
  \vspace{0.0625in}
  \caption{A primal hypergraph (left) and its dual (right). The neighborhood of a vertex in the primal hypergraph (left: the orange dot) corresponds to the neighborhood of its dual hyperedge in the dual hypergraph (right: the orange triangle).}\label{fig:neighborhood}
\end{figure}

Following the terminology of Berge~\cite{berge1973graphs} and Bretto~\cite{bretto2013hypergraph}, a hypergraph $H=\langle V,E \rangle$ on a finite set of $n$ vertices $V$ is defined by a family of $m$ hyperedges $E$. A hyperedge $e \in E$ contains a non-empty subset of vertices $V_e \subseteq V$ which we say are \emph{incident} to $e$ and \emph{adjacent} to each other. Similarly, a vertex $v \in  V$ is contained by a subset of hyperedges $E_v \subseteq E$ which we say are \emph{incident} to $v$ and \emph{adjacent} to each other. Let $E_e$ denote the set of hyperedges adjacent to $e$ and $V_v$ the set of vertices adjacent to $v$. $H$ is \emph{complete} if all vertices in $V$ are adjacent to each other and \emph{linear} if $|V_e \cap V_f| \leq 1$ for all $e \ne f \in E$. $H$ is \textit{connected} if there exists an alternating sequence of vertices and hyperedges connecting each pair of distinct vertices in $H$. The hypergraphs we consider in this paper are assumed to be connected unless otherwise specified. Consistent with \cite{Qu:22}, we define the \emph{degree} of a vertex $v$ as $\text{deg}(v) = |E_v|$ and the \emph{cardinality} of a hyperedge $e$ as $\text{card}(e) = |V_e|$. Notice that the traditional notion of a graph is simply a hypergraph where every hyperedge has cardinality two.

The \emph{dual hypergraph} $H'=\langle V',E'\rangle$ of $H$ is obtained by swapping the roles of vertices and hyperedges in $H$. For convenience, we call the original hypergraph $H$ the \emph{primal hypergraph}. More precisely, each element $v \in V$ corresponds to a unique element $v' \in E'$ and each element $e \in E$ corresponds to a unique element $e' \in V'$. Furthermore, the incidence and adjacency relationships of corresponding elements in the primal and dual hypergraphs are identical. This means that the degree of a vertex $v \in V$ is the same as the cardinality of the corresponding hyperedge $v' \in E'$ and vice versa. Thus, the dual of a linear hypergraph is also linear~\cite{berge1973graphs}.

For a set of vertices $A \subseteq V$ and a set of hyperedges $J \subseteq E$, Berge~\cite{berge1973graphs} defines the \emph{sub-hypergraph induced by \textit{A}} and the \emph{partial hypergraph generated by \textit{J}} as respectively,
\[H_A = \left\langle A, \{e \cap A \> | \> e \in E, \> e \cap A \ne \varnothing\} \right\rangle \hspace{0.125in} \text{and} \hspace{0.125in} H_J = \langle V_J \subseteq V, \> J \rangle.\]
For a vertex $v \in V$, we call the partial hypergraph generated by the hyperedges incident to $v$, $H_{E_v}$, with vertex set $V_v \cup \{v\}$ the \emph{neighborhood~of~\textit{v}}. We define the neighborhood of a hyperedge $e \in E$ to be the sub-hypergraph induced by the vertices incident to $e$, $H_{V_e}$. In other words, the neighborhood of a vertex or hyperedge simply consists of all its incident and adjacent elements (\Cref{fig:neighborhood}).

For a hypergraph $H=\langle V,E \rangle$, the K{\"o}nig representation ${K(H)=(X \cup Y,D)}$ is a bipartite graph with vertices $x \in X$ for every hypergraph vertex $v \in V$ and vertices $y \in Y$ for every hyperedge $e \in E$. Vertices $x \in X$ and $y \in Y$ form an edge $(x,y) \in D$ only if the hypergraph vertex corresponding to $x$ and the hyperedge corresponding to $y$ are incident in $H$. The hypergraph $H$ is \emph{Zykov planar} if $K(H)$ is a planar graph.

\section{Hypergraph Simplification}

In this section, we describe the building blocks of our multi-scale hypergraph layout optimization framework: (1) the set of atomic hypergraph simplification operations and (2) a number of terms used in our simplification objectives.

\subsection{Atomic Operations}
\label{sec:operations}

Two guiding principles for generating high-quality polygon layouts of hypergraphs are to maximize the regularity of each polygon and minimize overlap between the polygons~\cite{Qu:22}. As such, non-planar hypergraphs are difficult to handle through optimization since their polygon layouts contain necessary overlap leading to local minima in the optimization space. Non-planar sub-hypergraphs tend to appear in clusters of hyperedges that share common vertices and are also caused by structures analogous to $K_5$ and $K_{3,3}$ from graph theory. Certain structures in planar hypergraphs are also prone to overlaps, such as high-degree vertices that do not have enough angular space around them to draw each of their incident hyperedges as non-overlapping regular polygons. In these situations, a conflict between overlap minimization and regularity maximization makes optimization more difficult and requires that one or both of these objectives be compromised in the final results (\Cref{fig:entangled}). A core idea behind our multi-scale layout optimization approach is to use simplification to reduce the challenging configurations in both planar and non-planar portions of the input hypergraph and avoid optimization that terminates prematurely. To achieve this, we present a set of hypergraph simplification operations specifically designed to eliminate these challenging configurations.

\begin{figure}[t]
  \centering
  \subfloat[][]{\includegraphics[height=.75in]{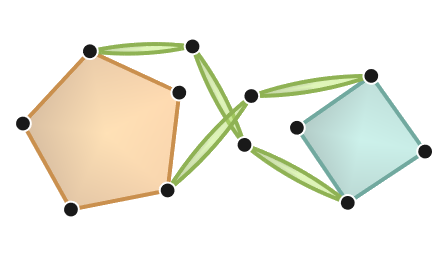}}
  \hspace{0.125in}
  \subfloat[][]{\includegraphics[height=.75in]{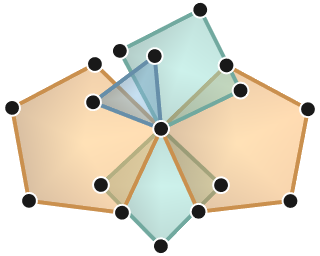}}
  \hspace{0.125in}
  \subfloat[][]{\includegraphics[height=.75in]{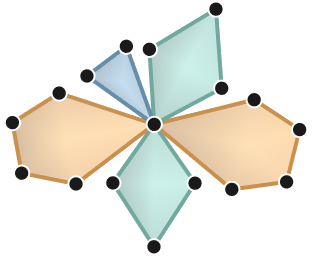}}
  \caption{Examples of avoidable overlaps in layouts of hypergraphs that have a convex polygon representation. In (a), un-twisting the layout would require making one of the polygons temporarily irregular, so the optimization halts before overlap can be resolved. In (b) and (c), there is not enough angular space around the central vertex for all the hyperedges to be drawn as regular polygons. In (b), the overlap minimization objective is compromised. In (c), the regularity maximization objective is compromised.}
  \label{fig:entangled}
\end{figure}

We identify four atomic operations for simplifying a hypergraph $H$:

\begin{enumerate}[nosep]
  \vspace{2pt}
  \itemsep1pt
  \item {\em Vertex removal:} a vertex is removed from $H$.
  \item {\em Hyperedge removal:} a hyperedge is removed from $H$.
  \item {\em Vertex merger:} a pair of adjacent vertices are combined into a single vertex whose set of incident hyperedges is the union of the two inputs.
  \item {\em Hyperedge merger:} a pair of adjacent hyperedges are merged into a single hyperedge whose set of incident vertices is the union of the two inputs.
\end{enumerate}

The vertex removal and hyperedge removal operations form a primal-dual pair in the sense that applying one to the primal hypergraph is equivalent to applying the other to the dual hypergraph (\Cref{fig:operations} (a,b)). The vertex merger and hyperedge merger operations similarly form a primal-dual pair (\Cref{fig:operations} (c,d)). We define the \textit{footprint} of an operation $O$ to be the union of the neighborhoods of its operand elements. For example, if $O$ merges two vertices $u,v \in V(H)$, the footprint of $O$ is given by $H_O = \langle V_u \cup V_v, E_u \cup E_v \rangle$.

Each atomic simplification operation has a corresponding inverse operation:
\begin{enumerate}[nosep]
  \vspace{2pt}
  \itemsep1pt
  \item {\em Vertex addition:} a removed vertex is added back into $H$.
  \item {\em Hyperedge addition:} a removed hyperedge is added back into $H$.
  \item {\em Vertex split:} a merged vertex is split into two vertices with one or more common hyperedges.
  \item {\em Hyperedge split:} a merged hyperedge is split into two hyperedges that contain one or more common vertices.
  \vspace{2pt}
\end{enumerate} These inverse operations are used to reverse simplification and similarly form primal-dual pairs.

A sequence of atomic simplification operations $\{O_1,O_2,\dots,O_n\}$ on a hypergraph $H$ defines a sequence of simplified scales $\{H_0,H_1,H_2,\dots,H_n\}$ where $H_0=H$ and $H_i = O_i(H_{i-1})$. Here $O_i(*)$ denotes applying operation $O_i$ to a hypergraph. In this multi-scale representation, we call $H_0$ the \textit{input} or \textit{original scale}, each $H_i = \langle V_i,E_i \rangle$ ($0 < i \leq n$) the \textit{i-th simplified scale}, and $H_n$ the \textit{coarsest simplified scale}. Given the nature of our atomic operations, each simplified scale is smaller than the previous scale, i.e., $|V(H_i)|+|E(H_i)| > |V(H_{i+1})|+ |E(H_{i+1})|$. By defining a prioritized sequence of atomic operations, we can construct a multi-scale representation where the size or number of non-planar sub-hypergraphs is reduced at each scale. In such a representation, it is generally easier to optimize the polygon layouts of successive simplified scales. This observation is central to our multi-scale layout optimization framework where we start by optimizing the coarsest simplified scale and handle the non-planar sub-hypergraphs on a localized basis while reversing simplification.

\subsection{Simplification Objectives}
\label{sec:statistics}

Numerous vertex and hyperedge based statistics could be used to guide the prioritization of atomic operations to achieve a variety of simplification objectives. We consider three statistics for this purpose: \textit{vertex degree} and \textit{hyperedge cardinality, betweenness centrality,} and \textit{adjacency factor}. Vertex degree and hyperedge cardinality are straightforward to compute based on the incidence relationships present in the hypergraph. Since the incidence relationships are identical between corresponding primal and dual hypergraph elements, the degree of a primal vertex is the same as the cardinality of its dual hyperedge and vice versa. By computing both vertex degree and hyperedge cardinality, we account for the primal and dual hypergraphs simultaneously. Vertex degree gives us an estimate of how much angular space is needed around the vertex for its incident polygons, and hyperedge cardinality gives us an estimate of how much area each hyperedge requires in an optimized polygon layout. As such, simplifying high-degree vertices and high-cardinality hyperedges can leave more space in the layout for neighboring elements and potentially reduce avoidable polygon overlaps. However, simplifying high-degree vertices and high-cardinality hyperedges may not be appropriate if they have an important semantic meaning in the underlying dataset.

\textit{Betweenness centrality} quantifies the proportion of shortest paths passing through a given vertex or hyperedge. Let $\sigma_{st} = \sigma_{ts}$ denote the number of shortest paths between $s,t \in V$, where $\sigma_{ss} = 1$ by convention. Let $\sigma_{st}(v)$ denote the number of shortest paths from $s$ to $t$ passing through $v \in V$. Then the betweenness centrality for $v$ is given by
\begin{equation*}
  C_B(v) = \sum_{s \neq v \neq t \in V}\frac{\sigma_{st}(v)}{\sigma_{st}}.
\end{equation*}
We compute the betweenness centrality of vertices and hyperedges simultaneously by applying the algorithm of Brandes~\cite{Brandes:2001} to the K{\"o}nig graph $K(H)$. To avoid an explicit summation in the betweenness centrality computation of each element, Brandes' algorithm leverages a recursive relationship for partial sums. Their algorithm is able to accumulate partial sums over a single depth-first search and return the betweenness centralities of each vertex. Avoiding simplification of elements with high betweenness centrality can be used to preserve the path structure in the input hypergraph and to preserve path-related features such as hypergraph cycles.

We define a new statistic, adjacency factor, to measure the volume of connections between a given vertex or hyperedge and its adjacent elements. It is an extension of adjacency as defined by Bretto~\cite{bretto2013hypergraph} for their construction of a hypergraph adjacency matrix. Bretto defines the \textit{adjacency} between a pair of vertices $u,v \in V$, $u \neq v$ as $a_{uv}=|\{e \in E : u,v \in V_e\}|$. Adjacency relationships are identical between corresponding primal and dual elements, so it is natural to consider the adjacency between a pair of hyperedges $e,f \in E$ as being equal to the adjacency of their dual vertices $e',f' \in V'$, i.e., $a_{ef} = a_{e'f'}$.
We define the \textit{adjacency factor} of a vertex $v \in V$ and a hyperedge $e \in E$ as
\begin{equation*}
  \text{Adj}(v) = \sum_{u \in V, \> u \neq v}{a_{uv}^t}, \hspace{0.25in} \text{Adj}(e) = \sum_{f \in E, \> f \neq e}{a_{ef}^t},
\end{equation*} where $t \geq 0$ is used to adjust the influence of vertex pairs with multiple shared hyperedges and hyperedge pairs with multiple shared vertices. Notice that setting $t=0$ simply gives the number of vertices adjacent to $v$, i.e. $|V_v|$. Setting $t>0$ results in a larger adjacency factor for vertices having high adjacency with their neighbors. We discuss the ideal value for $t$ in the next section, in conjugation of the planarity issue of polygon representations of hypergraphs.

\begin{figure}[t]
  \centering
  {\fontfamily{phv}\selectfont {\large \textbf{Primal} \hspace{0.5in} \textbf{Dual} \hspace{0.5in} \textbf{Primal} \hspace{0.5in} \textbf{Dual}}} \\
  \subfloat[][]{\includegraphics[width=0.9in]{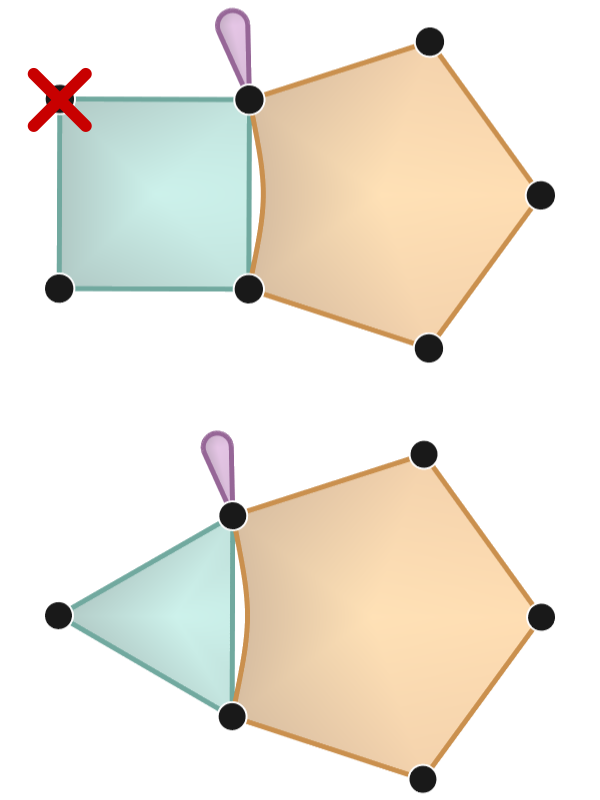}}
  \subfloat[][]{\includegraphics[width=0.9in]{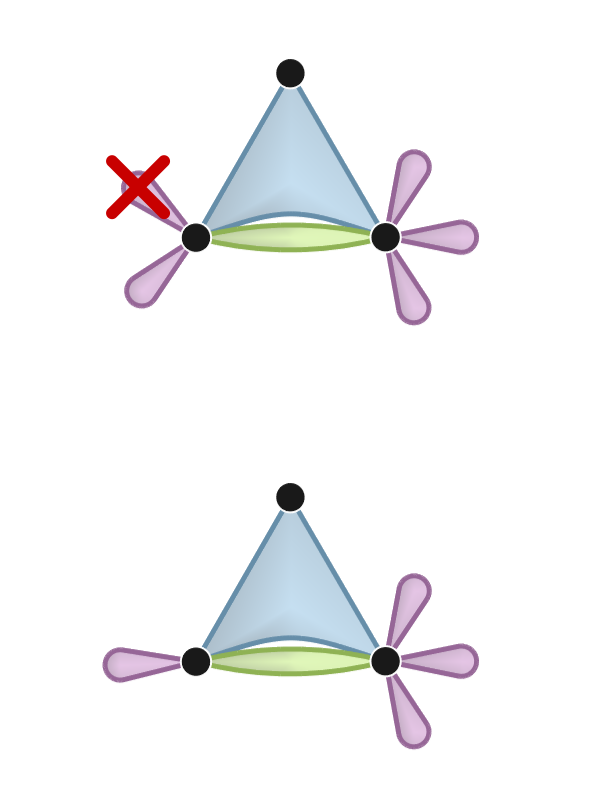}}
  \hspace{-0.13in}\vline
  \subfloat[][]{\includegraphics[width=0.9in]{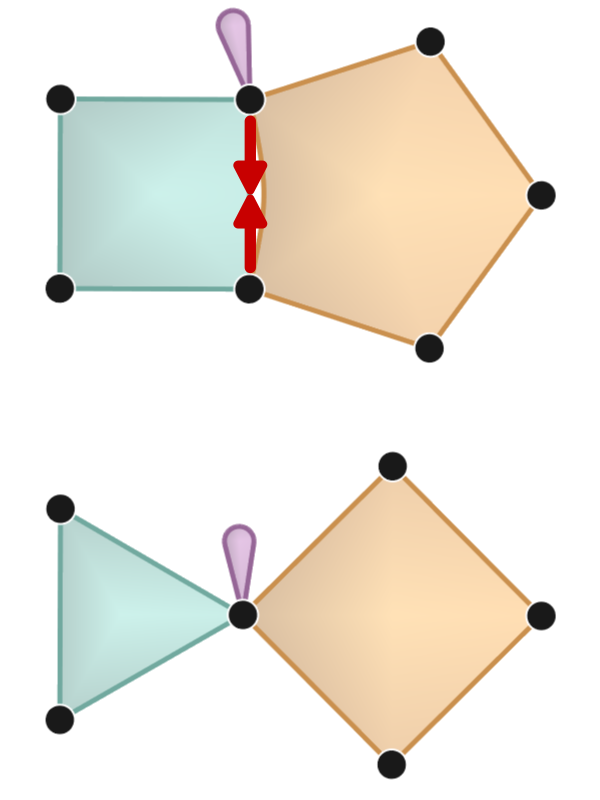}}
  \subfloat[][]{\includegraphics[width=0.9in]{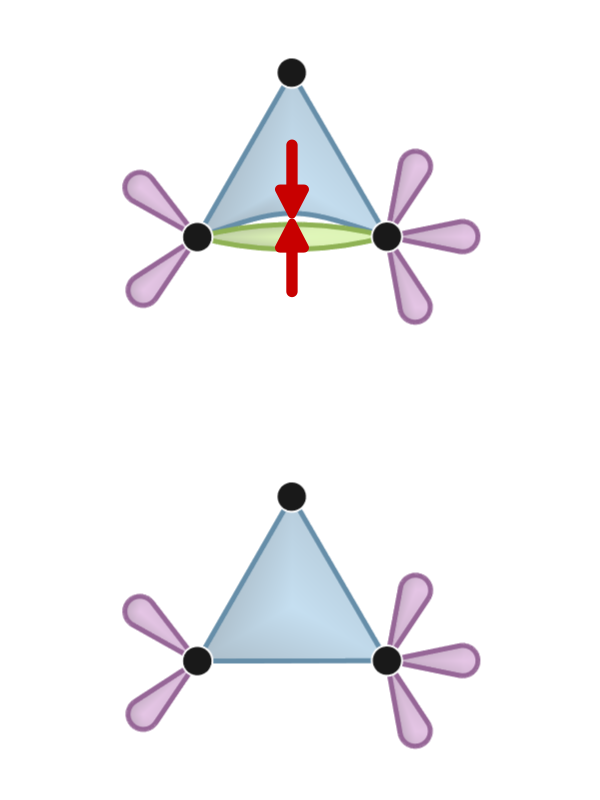}}
  \caption{The four atomic simplification operations. (a) A vertex removal in the primal hypergraph corresponds to (b) a hyperedge removal in the dual hypergraph. (c) A vertex merger in the primal hypergraph corresponds to (d) a hyperedge merger in the dual hypergraph.}\label{fig:operations}
\end{figure}

\subsubsection{Polygon Planarity}
\label{sec:polygon_planarity}

A graph is planar, i.e. an edge crossing-free embedding can be found, if and only if it does not contain a subdivision of the complete graph $K_5$ or complete bipartite graph $K_{3,3}$~\cite{Kuratowski1930}.

Recall that a hypergraph $H$ is Zykov planar if its K{\"o}nig representation $K(H)$ is a planar graph (\Cref{sec:background}). This definition of planarity assumes that hyperedges can be represented as arbitrary closed regions. However, when requiring that the regions be drawn as near-regular polygons, as in \cite{Qu:2017,Qu:22}, Zykov's definition is insufficient. This motivates a new definition for hypergraph planarity for the (near-regular) polygon representation:

\begin{definition}
  A \emph{convex polygon representation} is a drawing of a hypergraph in the plane where each hyperedge is represented as a strictly convex polygon such that the area of intersection between each pair of polygons is zero.
\end{definition}

We say that a hypergraph is \emph{convex polygon planar} if it admits a convex polygon representation. We have identified four \textit{forbidden sub-hypergraphs} that are Zykov planar but lack a convex polygon representation. We begin by defining an \emph{n-adjacent cluster} as the partial hypergraph induced by a set of hyperedges $J \subseteq E$ which contain a set of vertices $X \subseteq V$, $|X| = n \geq 2$, where each hyperedge in $J$ contains all of the vertices in $X$, that is, $v_i \in e_j$ for all $v_i \in X$ and $e_j \in J$. Our first forbidden sub-hypergraph is a 3-adjacent cluster of two hyperedges (\Cref{fig:forbidden} (a)), and the second is a 2-adjacent cluster of three hyperedges (\Cref{fig:forbidden} (b)). Notice that these sub-hypergraphs are a primal-dual pair: if one appears in the primal view, the other appears in the dual view among the corresponding dual elements. Our third forbidden sub-hypergraph is the neighborhood of a vertex $v$ where a proper subset of its incident hyperedges and adjacent vertices form a cycle of length $n \geq 3$ (\Cref{fig:forbidden} (c)). The fourth is the neighborhood of a hyperedge $e$ where a proper subset of its incident vertices and adjacent hyperedges form a cycle of size $n \geq 3$ (\Cref{fig:forbidden} (d)). These sub-hypergraphs also form a primal-dual pair. We refer to these forbidden sub-hypergraphs as containing a strangled vertex or strangled hyperedge respectively.

\begin{theorem} \label{thm:forbidden}
  Let $H$ be a Zykov planar hypergraph. Then $H$ has a convex polygon representation if and only if it does not contain any of the following as a sub-hypergraph:
  \begin{enumerate}[nosep]
    \item[(a)] A 3-adjacent cluster of 2 hyperedges,
    \item[(b)] A 2-adjacent cluster of 3 hyperedges,
    \item[(c)] A strangled vertex,
    \item[(d)] A strangled hyperedge.
  \end{enumerate}
\end{theorem}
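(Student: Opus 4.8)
The plan is to prove the two directions separately. The easy direction is necessity: I will show that each of the four configurations (a)--(d) is itself not convex polygon planar, from which it follows that any hypergraph containing one of them as a sub-hypergraph cannot have a convex polygon representation (since restricting a valid drawing to a sub-hypergraph yields a valid drawing of that sub-hypergraph, deleting polygons and vertices preserves strict convexity and zero-area intersection). For (a), two hyperedges sharing three common vertices would each be a strictly convex polygon through the same three points $v_1, v_2, v_3$; but three points in general position determine a unique triangle, and if the three points are collinear no strictly convex polygon contains all of them --- in either case the two polygons must share a positive-area region (or degenerate), contradiction. For (b), a symmetric argument via the dual, or a direct argument: three hyperedges pairwise sharing the same two vertices $u, v$ each contain the segment $uv$ on or inside their convex hulls, and one can show three strictly convex polygons through two common points must have pairwise positive-area overlap; this is exactly the Zykov-planar-but-not-polygon-planar phenomenon of \Cref{fig:planarity}. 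For (c) and (d), the strangled vertex: if a proper subset of $v$'s incident hyperedges and adjacent vertices form a cycle of length $n \geq 3$, then $v$ lies in the bounded region enclosed by that cycle (the cycle separates $v$ from the rest of the hypergraph in any plane embedding), yet $v$ must also be a polygon vertex of at least one hyperedge not in the cycle (since it is a proper subset), forcing that hyperedge's convex polygon to cross the cycle and overlap one of the cycle's polygons. The argument for (d) is dual.

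The harder direction is sufficiency: assuming $H$ is Zykov planar and contains none of (a)--(d), I must construct a convex polygon representation. The approach I would take is to start from a plane embedding of the K\"onig graph $K(H)$ (which exists by Zykov planarity) and to straighten it into polygons. Concretely, place each hypergraph vertex $v \in V$ at a point $p_v$ in the plane, and for each hyperedge $e$ draw the convex hull of $\{p_v : v \in V_e\}$. The forbidden-configuration hypotheses are exactly what is needed to guarantee this can be done without overlaps: absence of (a) and (b) ensures that no two hyperedges share enough vertices to force coincident convex hulls, and absence of (c) and (d) ensures that the ``nesting'' structure of the plane embedding is compatible with convexity --- no vertex is trapped inside a cycle of hyperedges it also belongs to from outside. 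I would argue by induction on $|V| + |E|$: pick a hyperedge $e$ on the outer face of the embedding (or a low-cardinality / low-degree element), remove it, apply the inductive hypothesis to get a convex polygon representation of the smaller hypergraph, and then show there is enough room to reinsert $e$ as a strictly convex polygon. The reinsertion step is where the forbidden-configuration hypotheses must be invoked to guarantee the needed free space exists.

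I expect the main obstacle to be the sufficiency direction, and specifically the reinsertion/induction step: after deleting an element and re-adding it, one must argue that the convex hull of the re-added hyperedge's vertices does not intersect any other polygon, and this requires controlling the \emph{positions} of the vertices, not just the combinatorial embedding. A clean way around this may be to not induct on the hypergraph directly but instead to work with a canonical ``expanded'' planar graph derived from $K(H)$ --- e.g., triangulate the Zykov embedding, then use a Tutte-style barycentric / convex embedding of that triangulation so that every face is convex, and finally verify that the hypergraph polygons (which are unions of faces incident to a given $y \in Y$ in $K(H)$) come out convex precisely when (a)--(d) are absent. Establishing that last equivalence --- that each forbidden configuration corresponds to an obstruction to simultaneous convexity of a union of faces around a $Y$-vertex --- is the technical heart of the proof, and I would want to phrase it locally (one hyperedge at a time, examining the cyclic order of its vertices and the hyperedges interleaved among them in the rotation system of $K(H)$) so that the four cases in the theorem statement appear as the four ways local convexity can fail.
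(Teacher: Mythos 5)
Your necessity direction is essentially the paper's: the paper also argues each forbidden configuration directly (for (a), three shared points force a common triangle or collinearity; for (b), the line through the two shared vertices splits the plane into two half-planes so the third polygon must invade one of the first two; for (c)/(d), the cycle's polygons tile the convex hull of the cycle, trapping the extra incident polygon). One small slip: three strictly convex polygons through two common points need \emph{not} have pairwise positive overlap (two of them can lie in opposite half-planes); what is true, and what the argument needs, is only that \emph{some} pair overlaps. Your sketch for (c)/(d) also skips the sub-cases the paper treats (the cycle drawn self-crossing, or the central vertex outside the hull of the cycle), though these are routine.

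The sufficiency direction has a genuine gap. Your primary plan (induct on $|V|+|E|$, delete a hyperedge, reinsert it) founders exactly where you say it does: reinsertion requires metric control over vertex positions that the inductive hypothesis does not provide, and you offer no way to obtain it. Your fallback --- triangulate the Zykov embedding, take a Tutte-style convex embedding, and check that each hyperedge polygon, as a ``union of faces around a $Y$-vertex,'' comes out convex --- restates the theorem rather than proving it: a union of convex faces is generally not convex, and Tutte/barycentric embeddings give you no handle on that. The paper's resolution is to build an auxiliary graph in which each hyperedge is a \emph{single facial cycle}, not a union of faces: the \emph{face triangulation graph} $T(H)$ keeps the polygon sides as edges and triangulates only the hypergraph faces (the regions \emph{between} polygons), leaving hyperedge interiors untouched (\Cref{pro:triangulation}). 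Convex polygon representability of $H$ then becomes existence of a \emph{strictly convex representation} of $T(H)$ (\Cref{thm:biconnected}), which is characterized by Thomassen's theorem: every interior vertex must have three internally disjoint paths to the outer face (\Cref{thm:strict}). The four forbidden configurations are then recovered as exactly the ways this three-disjoint-paths condition can fail (plus, for the strangled vertex, the failure of an articulation vertex to reach a face boundary, which requires the separate reduction to biconnected components in \Cref{thm:components} and the face-exchange \Cref{lem:faces}). Your proposal contains neither this auxiliary graph, nor the Thomassen criterion, nor the biconnected decomposition, and without some substitute for them the hard direction does not go through.
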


We refer the reader to \Cref{apx:planarity} for a proof of \Cref{thm:forbidden}. Since the forbidden sub-hypergraphs form primal-dual pairs, we further claim that a hypergraph $H$ has a convex polygon representation if and only if its dual hypergraph $H'$ has a convex polygon representation.

\begin{figure}[btp]
  \centering
  \subfloat[][]{\includegraphics[height=0.9in]{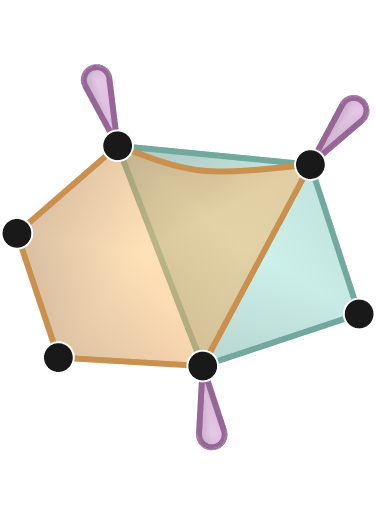}} \hspace{0.2in}
  \subfloat[][]{\includegraphics[height=0.9in]{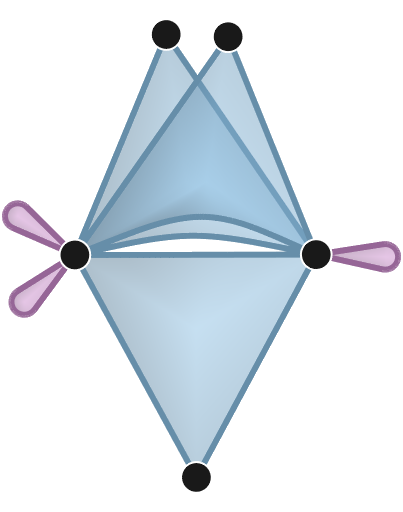}} \hspace{0.2in}
  \subfloat[][]{\includegraphics[height=0.9in]{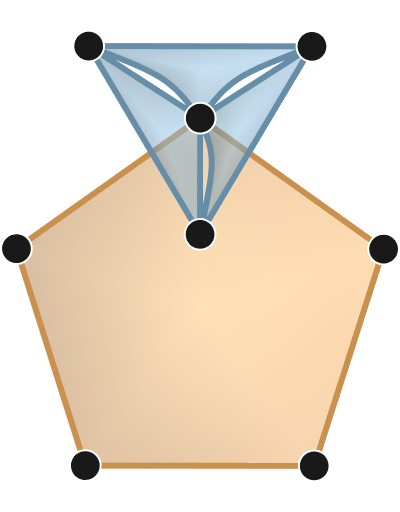}} \hspace{0.2in}
  \subfloat[][]{\includegraphics[height=0.9in]{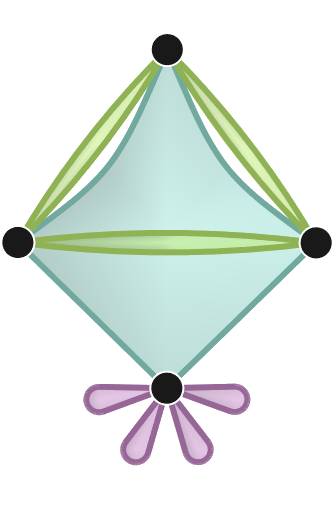}}
  \caption{Examples of the four forbidden sub-hypergraphs in the polygon visualization metaphor: (a) 3-adjacent hyperedge cluster of 2 hyperedges, (b) 2-adjacent hyperedge cluster of 3 hyperedges, (c) strangled vertex, (d) strangled hyperedge. Notice that (b) is the dual of (a) and (d) is the dual of (c).}\label{fig:forbidden}
\end{figure}

We refer to polygon overlaps occurring in a polygon layout of a hypergraph that has a convex polygon representation as {\em avoidable overlaps} (\Cref{fig:entangled}). Otherwise, such overlaps are {\em unavoidable overlaps}. Note that forbidden sub-hypergraphs are the simplest examples of unavoidable polygon overlaps. While a 3-adjacent cluster of 5 hyperedges clearly involves more hyperedge overlaps, it also necessarily contains a 3-adjacent cluster of 2 hyperedges.

Our atomic operations are specifically designed to enable eliminating forbidden sub-hypergraphs. Notice that each of the examples in \Cref{fig:forbidden} can be converted to a hypergraph with a convex polygon representation using a single vertex or hyperedge operation. We find a good correlation between forbidden sub-hypergraphs (\Cref{fig:adjfactor}) and our adjacency factor when $t=2$. Given this correlation, simplifying elements with high adjacency factor can reduce the number and size of non-convex polygon planar sub-hypergraphs.

\section{Scalable Optimization Framework}
\label{sec:technique}

In this section, we detail our multi-scale polygon layout optimization framework which consists of two iterative processes: iterative simplification, and iterative layout refinement. The goal of the simplification process is to construct a sequence of simplified scales from an input hypergraph $H$ such that each successive scale contains fewer areas of potential polygon overlap, either unavoidable overlaps caused by forbidden sub-hypergraphs or avoidable overlaps caused by a lack of space around high-degree vertices and high-cardinality hyperedges. Either type of polygon overlap can lead to challenges in layout optimization and significant visual clutter in the layouts of large hypergraphs, so we address both simultaneously during simplification. Once the sequence of simplified scales $\{H_0,H_1,\dots,H_n\}$ is generated, the goal of the iterative layout refinement process is to produce a high-quality polygon layout for each scale. We achieve this by first optimizing the layout of the coarsest scale $H_n$, then iteratively inverting simplification operations and locally refining the layout of intermediate scales until the original scale is recovered.

\begin{figure}[t]
  \centering
  \includegraphics[height=1.125in]{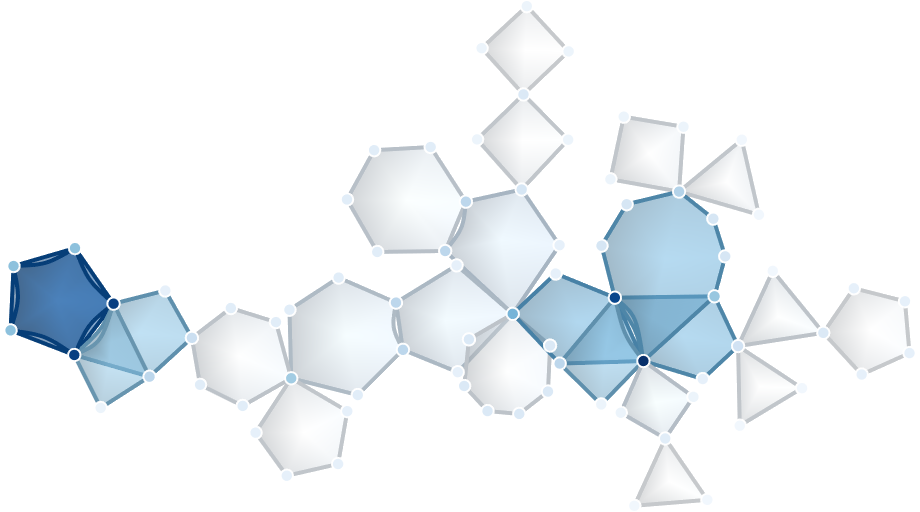}
  \hspace{0.125in}
  \raisebox{0.125in}{\includegraphics[height=0.875in]{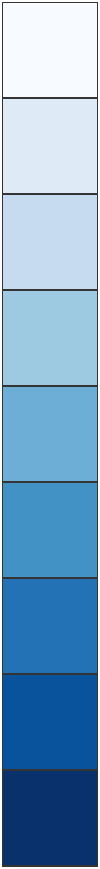}}
  \caption{Hypergraph elements colored according to adjacency factor. The two regions drawn in dark blue indicate elements with high adjacency factor and both correspond to forbidden sub-hypergraphs.}
  \label{fig:adjfactor}
\end{figure}

\begin{figure*}[!t]
  \centering
  \subfloat[][Input hypergraph]{\includegraphics[height=1.25in]{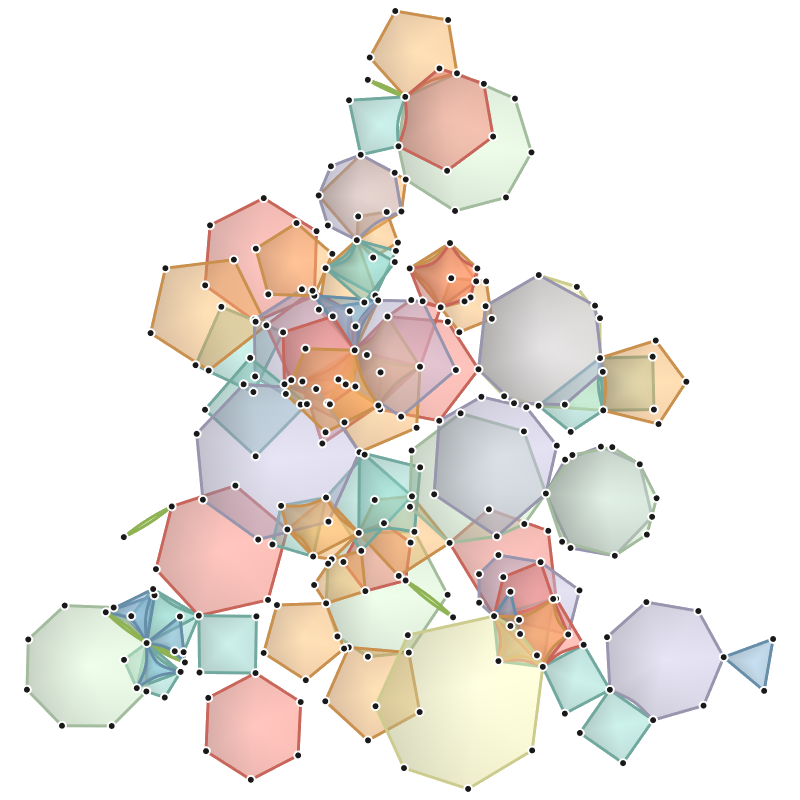}} \hspace{0.125in}
  \subfloat[][$\alpha=1$, $\beta=0$, $\gamma=0$]{\includegraphics[height=1.25in]{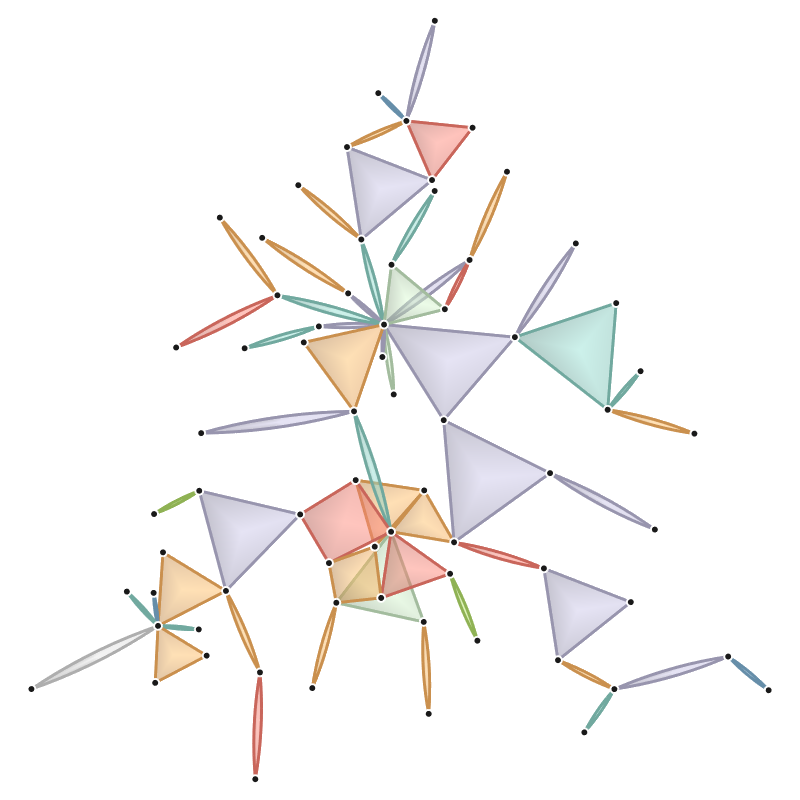}} \hspace{0.125in}
  \subfloat[][$\alpha=0$, $\beta=1$, $\gamma=0$]{\includegraphics[height=1.25in]{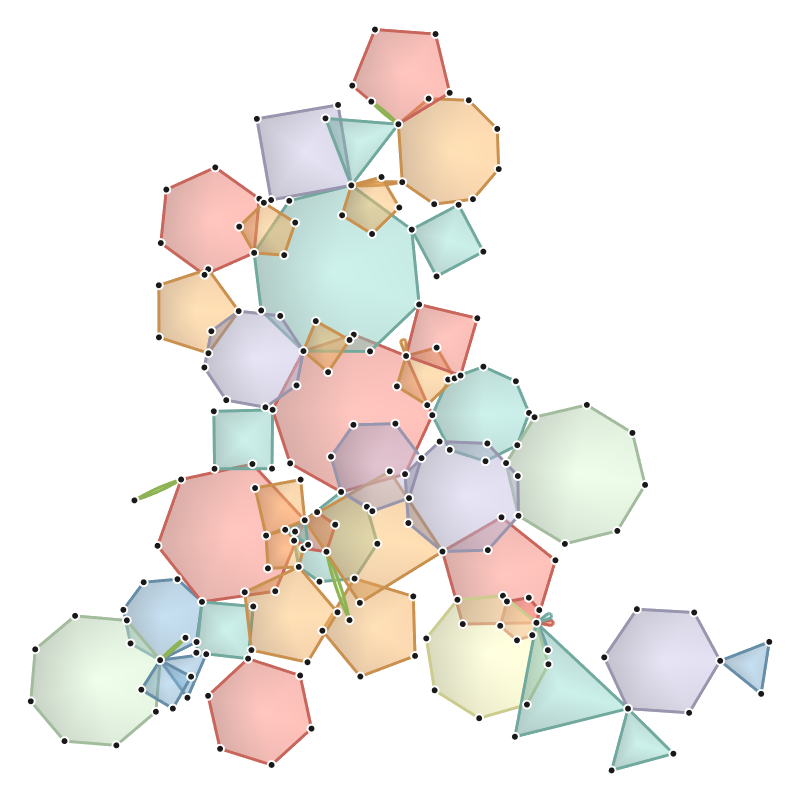}} \hspace{0.125in}
  \subfloat[][$\alpha=0$, $\beta=0$, $\gamma=1$]{\includegraphics[height=1.25in]{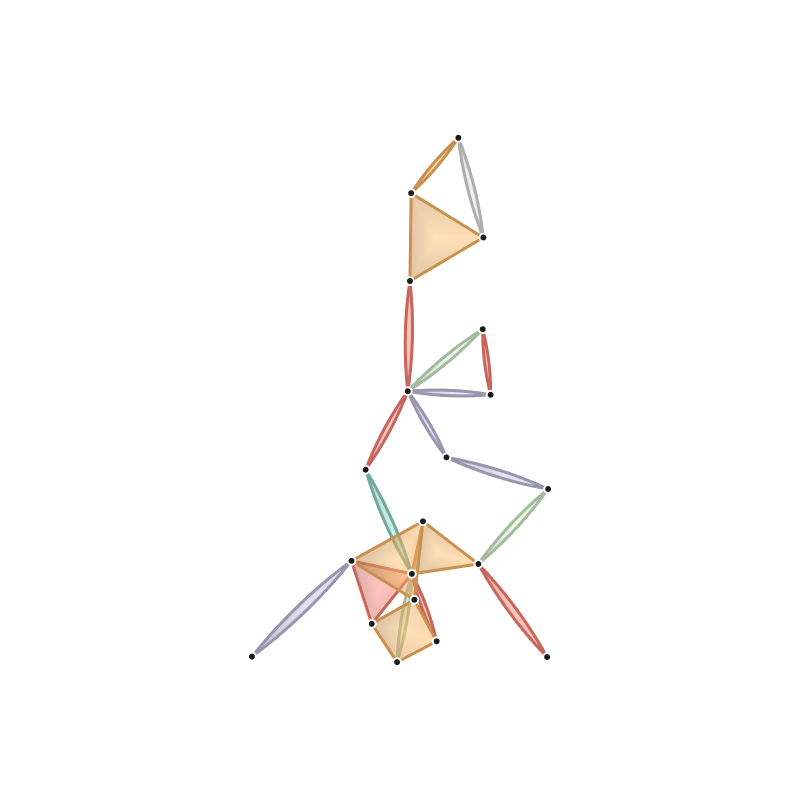}} \hspace{0.125in}
  \subfloat[][${\alpha=0.4,\>\beta=0.4,\>\gamma=0.2}$]{\includegraphics[height=1.25in]{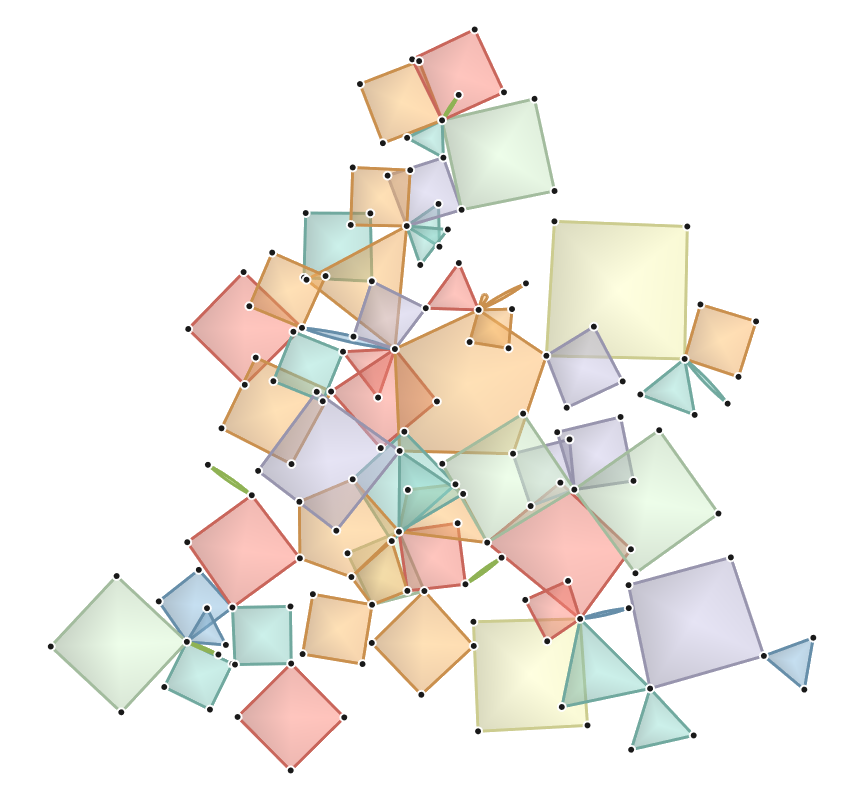}}
  \caption{A paper-author network with $260$ vertices and $83$ hyperedges is simplified with our system using different values for the weight parameters $\alpha,\beta,\gamma$ in \Cref{eq:priority}. In (b), only the term targeting high-degree vertices and high-cardinality hyperedges is used, requiring $220$ atomic operations to make the hypergraph linear. Many of the polygons in this layout have been simplified to digons, making it difficult to determine the relative sizes of the hyperedges in the original data. In (c), only the term targeting elements with high adjacency factor is used, requiring $103$ atomic operations to make the hypergraph linear. The relative sizes of the hyperedges are more accurately maintained, but the visualization still contains many avoidable polygon overlaps, especially around high-degree vertices. In (d), only the term preserving elements with high betweenness centrality is used, requiring $304$ atomic operations to make the hypergraph linear. Most of the information on vertex degree and hyperedge cardinality is lost in this visualization. We can however see evidence of three hypergraph cycles which are not apparent in the other simplifications. In (e), all three terms are used, requiring 135 simplification operations to make the hypergraph linear. This priority weighting scheme and visualization preserves some information on relative hyperedge sizes and also reduces visual clutter around high-degree vertices.}
  \label{fig:weights}
\end{figure*}

\subsection{Simplification Operation Generation}
\label{sec:atomic_operations}

We initially generate removal operations for every vertex and hyperedge in $H_0$, and merger operations between every pair of adjacent vertices and hyperedges. Notice that removing a vertex or hyperedge arbitrarily has the potential to make a hypergraph disconnected. To avoid this, we constrain the \textit{legality} of removal operations. For a hyperedge $e$ at the current hypergraph scale, we mark its removal operation as \textit{illegal} if it would make a pair of vertices $u,v \in e$ non-adjacent. Otherwise, we mark it as \textit{legal}. Similarly, a vertex removal operation is marked illegal if it makes any pair of incident hyperedges non-adjacent. By allowing only legal removal operations, our simplification ensures that the hypergraph remains connected in each simplified scale. We also constrain the legality of merger operations to avoid simplifying portions of the hypergraph that are already linear. For a pair of hyperedges $e,f$ at the current hypergraph scale, we mark their merger operation as legal if their adjacency $a_{ef} \geq 2$ and illegal otherwise. For a pair of vertices $u,v$ in the current hypergraph scale, we mark their merger operation as legal if their adjacency $a_{uv} \geq 2$ and illegal otherwise. After the legality of each generated operation has been determined, we place the legal operations in a priority queue keyed on a simplification priority measure.

Our operation priority measure consists of three terms based on the following statistics: vertex degree (hyperedge cardinality), adjacency factor, and betweenness centrality. As discussed in \Cref{sec:polygon_planarity}, adjacency factor is correlated to the presence of a forbidden sub-hypergraph. Instead of trivially deleting or collapsing forbidden sub-hypergraphs, we use a term based on adjacency factor to promote simplifying sub-hypergraphs until they have a convex polygon representation (\Cref{fig:weights} (c)). We use the term based on vertex degree and hyperedge cardinality to promote reducing the space required by high-degree vertices and large hyperedges in the polygon layouts of simplified scales. This can help to reduce avoidable polygon overlaps (\Cref{fig:weights} (b)). Finally, we use the term based on betweenness centrality to promote preserving centrally located elements that are relevant to the path structure of the input hypergraph (\Cref{fig:weights} (d)). By combining these terms, we are able to generate simplified scales with reduced visual clutter in areas with the most polygon overlaps, while also retaining the relative polygon sizes and core connectivity of the hypergraph (\Cref{fig:weights} (e))

To normalize the distributions of each statistic, we also require the global minimum and maximum values of the input hypergraph $H_0$ (and its dual $H_0'$) for vertex degree and hyperedge cardinality, $d_{min}, d_{max}$, hyperedge adjacency factor, $a_{min}, a_{max}$, and betweenness centrality, $b_{max}, b_{min}$. Given an atomic simplification operation $O$, the final priority measure is a weighted sum of our three terms given by
\begin{align} \label{eq:priority}
  \text{P}(O)\! = \alpha\! \left(\!\frac{\hat{d}_O\!-d_{min}}{d_{max}\!-d_{min}}\right)\! + \beta\! \left(\!\frac{\bar{a}_O\!-a_{min}}{a_{max}\!-a_{min}}\right)\! + \gamma\! \left(\!\frac{b_{max}\!-\bar{b}_O}{b_{max}\!-b_{min}}\right).
\end{align}

Here we use $\bar{a}_O$ to denote the adjacency factor of the removed element in the case of a removal operation, and the average adjacency factor of the merged elements in the case of a merger operation. Similarly, $\bar{b}_O$ denotes the average betweenness centrality of the operand elements. We use $\hat{d}_O$ to denote the maximum vertex degree or hyperedge cardinality in the footprint of $O$ (\Cref{sec:atomic_operations}). That is,
\begin{align} \label{eq:degree}
  \hat{d}_O = \max \biggl\{ \max_{v \in V(H_O)}\bigl\{deg(v)\bigr\}, \max_{e \in E(H_O)}\bigl\{card(e)\bigr\} \biggr\}.
\end{align}
This helps to simplify elements surrounding a high-degree vertex instead of removing or merging the high-degree vertex itself as demonstrated in \Cref{fig:weights} (b).

We compute betweenness centralities once for the input hypergraph $H_0$ and use these values for the entire simplification process. This is because we aim to preserve the path structure of the input hypergraph, not the path structure of the previous simplified scale. This also avoids having to recompute betweenness centrality which can be costly. We do however update vertex degrees, hyperedge cardinalities, and adjacency factors after the application of each operation.

\subsection{Iterative Simplification}
\label{sec:simplification}

At this stage, our priority queue only contains operations that can legally be applied to the input hypergraph $H_0$ with the highest priority operations at the front. Our framework proceeds to iteratively simplify $H_0$ by popping operations from the priority queue and applying them. We apply simplification iteratively to accommodate changes in legality or priority that any operation can incur. With each iteration, the priority queue is re-sorted to keep the highest priority operations at the front. When a legal vertex removal is performed, the vertex is removed from each of its containing hyperedges, reducing their cardinality by one, and subsequently deleted. The process is similar for a hyperedge removal. When a pair of vertices $u,v \in V$ are merged through a legal operation, we first update $v$ to include all the hyperedges incident to $u$ and then remove $u$ from the hypergraph. We call $u$ the \textit{removed} vertex and $v$ the \textit{retained} vertex. The process for hyperedge-based mergers is identical except that the roles of vertices and hyperedges are reversed.

When an operation $O$ is applied, the operands, legality, and priority of operations on elements in the footprint of $O$ may need to be updated. In addition, merger operations have the potential of making previously nonadjacent elements adjacent, making them eligible for merging. After $O$ is applied and the appropriate updates made, a record of the operation is added to a stack data structure containing applied operations eligible for future reversal. When the footprint of each operation is relatively small, the necessary updates can be completed efficiently. If $O$ removes an element $r \in V \cup E$, any merger operations involving $r$ become invalid and are removed from the priority queue. For each element $s \neq r$ in the footprint of $O$, any operation involving $s$ must have its priority updated since the incidence and adjacency relationships within its footprint will have changed. In addition, the legality of the removal operation on $s$ must be revisited since the connectivity within its footprint will have changed. If any previously illegal operation is found to be legal its priority is recalculated and the operation is added to the queue. If any previously legal operation is found to be illegal, it is removed from the priority queue.

If $O$ merges an element $r \in V \cup E$ into $t \in V \cup E$, the removal operation on $r$ becomes invalid and is removed from the priority queue. Similar to a removal operation, for each element $s \notin \{r,t\}$ in the footprint of $O$, any operation involving $s$ must have its priority updated, and if it is a removal operation, its legality must be revisited. Furthermore, if a merger operation exists between $s$ and $r$, the reference to $r$ is replaced with a reference to $t$. Finally, if a pair of elements $s_1,s_2\notin \{r,t\}$ in the footprint of $O$ become newly adjacent, we initialize a new merger operation between them, calculate the priority of the operation, and add it to the priority queue.

We provide several options for defining the coarsest simplified scale $H_n$. The user can set a target number for the vertices or hyperedges in $H_n$, or specify that simplification terminates as soon as the hypergraph becomes linear or is free of forbidden sub-hypergraphs. Once a termination criterion has been met, the stack of applied operations $\{O_n,O_{n-1},\dots,O_1\}$ records the operations in the reverse order of how they were applied.

\subsection{Simplification Reversal and Layout Optimization}
\label{sec:optimization}

Once the coarsest hypergraph scale $H_n$ has been reached, we construct the dual hypergraph $H_n'$ (if used), and apply a modified version of the automatic polygon layout framework of Qu et al.~\cite{Qu:22}. Their objective function for layout optimization includes five energy terms: polygon regularity energy, polygon area energy, polygon separation energy, polygon intersection regularity energy, and primal-dual coordination energy. They minimize this objective function by iteratively adjusting vertex locations in the primal and dual hypergraphs using an L-BFGS quasi-Newton solver~\cite{liu1989limited}. Instead of minimizing each of these energies simultaneously, our modified version starts with a \textit{separation} phase which uses only the polygon separation energy and primal-dual coordination energy, followed by a \textit{regularity} phase that incorporates all five energy terms including the polygon regularity, area, and intersection regularity energies.

The purpose of the separation phase is to unravel any avoidable overlaps present in the initial layout of $H_n$ (i.e. to separate crossing paths or twisted cycles). Qu et al.~\cite{Qu:22} define the separation energy between a pair of polygons as a function of the difference between their current separation and the minimum acceptable separation if the polygons are assumed to be regular. That is, given two polygons $\Gamma_1, \Gamma_2$, the separation energy between them is given by $E_{PS}(\Gamma_1,\Gamma_2) =  f(d(\Gamma_1,\Gamma_2)-d_0(|\Gamma_1|, |\Gamma_2|))$~\cite{Qu:22}. Here $d(\Gamma_1,\Gamma_2)$ is the current distance between polygon centroids and $d_0(|\Gamma_1|, |\Gamma_2|) = \rho_{|\Gamma_1|} + \rho_{|\Gamma_2|} + d_b$ is determined by the circumradii of regular polygons with cardinalities $|\Gamma_1|$ and $|\Gamma_2|$, and a constant buffer distance $d_b$. Since we do not optimize polygon regularity during the separation phase, we alter $d_0$ to be determined by the buffer distance and current radii of $\Gamma_1$ and $\Gamma_2$:
\begin{equation}
  d_0(|\Gamma_1|, |\Gamma_2|) = \frac{1}{2}\left(\max_{u,v \in \Gamma_1}d(u,v) + \max_{u,v \in \Gamma_2}d(u,v)\right) + d_b.
\end{equation}

In the regularity phase of the layout optimization for $H_n$, we re-use the objective function of \cite{Qu:22}. However, instead of using the cardinalities of the hyperedges in $H_n$ for polygon area and separation energy calculations, we use the corresponding cardinalities saved from the original scale $H_0$. This helps to ensure that enough space is reserved for elements that are reintroduced during iterative layout refinement.

Once the layout of the coarsest scale $H_n$ (and $H_n'$) has been optimized, we enter an iterative process in which the applied simplification operations are reversed and the layout is refined. During each iteration, the operation at the top of the stack of recorded operations $O_{i}$ is popped, and its inverse operation $O_{i}^{-1}$ is applied to the current hypergraph scale $H_{i}$. The corresponding inverse operation is applied to the dual of the current hypergraph scale, $H_{i}'$. Whenever a vertex addition is applied to the primal hypergraph, the new vertex is positioned so that it aligns with the center of the corresponding new hyperedge in the dual hypergraph. The same is true when a vertex split is applied to the primal hypergraph: the split vertices are aligned with the centers of their counterparts in the dual hypergraph. We then optimize the positions of vertices inside the footprint of $O$ (in both primal and dual) while keeping all other vertices fixed. The polygon locations in the fixed portion of the layout are still used in the separation energy computation, but the remaining energies are only computed over the footprint of $O$. Confining the layout optimization to the local operation footprint in this way has two benefits: it speeds up the gradient and line search computations used with the L-BFGS solver and helps promote consistent vertex locations between simplified scales.

\section{Case Studies}
\label{sec:casestudies}

We apply our framework to two real-world cases: a network of international trade agreements, and a paper-author collaboration network.

\Cref{fig:trade} shows a network of international regional trade agreements (RTAs) in force as of May 2022 retrieved from the World Trade Organization Regional Trade Agreements Database~\cite{wto:2022}. The dataset excludes bilateral trade agreements as well as trade agreements where one of the parties is itself an RTA. The largest RTA contains 36 participating nations and the largest number of trade agreements that a single nation participates in is 8. In the visualizations, we show both the primal and dual polygon layouts for the original scale (\Cref{fig:trade,fig:trade} (a,c)) and one of the simplified scales (\Cref{fig:trade} (b,d)). In the primal layouts, RTAs are drawn as polygons with incident vertices representing their participating nations. In the dual layouts, the nations are drawn as polygons and trade agreements as vertices. In the bottom left of the original scale primal layout (\Cref{fig:trade} (a)), we see a large pink polygon with many degree-1 vertices. This polygon represents a trade agreement between island nations in the Caribbean with small services-based economies. We observe that many of the other nations participating in only one RTA have relatively small economies. These nations are easier to see in the dual visualization where they are drawn as monogons with a distinctive water-drop shape. In the original scale primal layout (\Cref{fig:trade} (a)), the overlapping polygons in the center of the visualization make it difficult to tell which trade agreement has the most participants. In the dual layout (\Cref{fig:trade} (c)), we can more clearly see a vertex in the center of the visualization with a particularly high degree. This vertex represents the Global System of Trade Preferences among Developing Countries (GSTP) which is the largest RTA in our dataset. In the simplified scale (\Cref{fig:trade} (b,d)), we can see clusters of overlapping polygons in both the primal and dual layouts which indicate different trade blocs. In the upper right of the simplified layouts, we can see a cluster containing nations in Eastern Europe. In both the simplified and original scales, we can observe that this cluster is connected to the rest of the hypergraph through a single path, indicating that the trade bloc in Eastern Europe is somewhat isolated in the global economy.

\begin{figure}[t]
  \centering
  \raisebox{0.375in}{\subfloat[][]{\includegraphics[angle=90,width=1in]{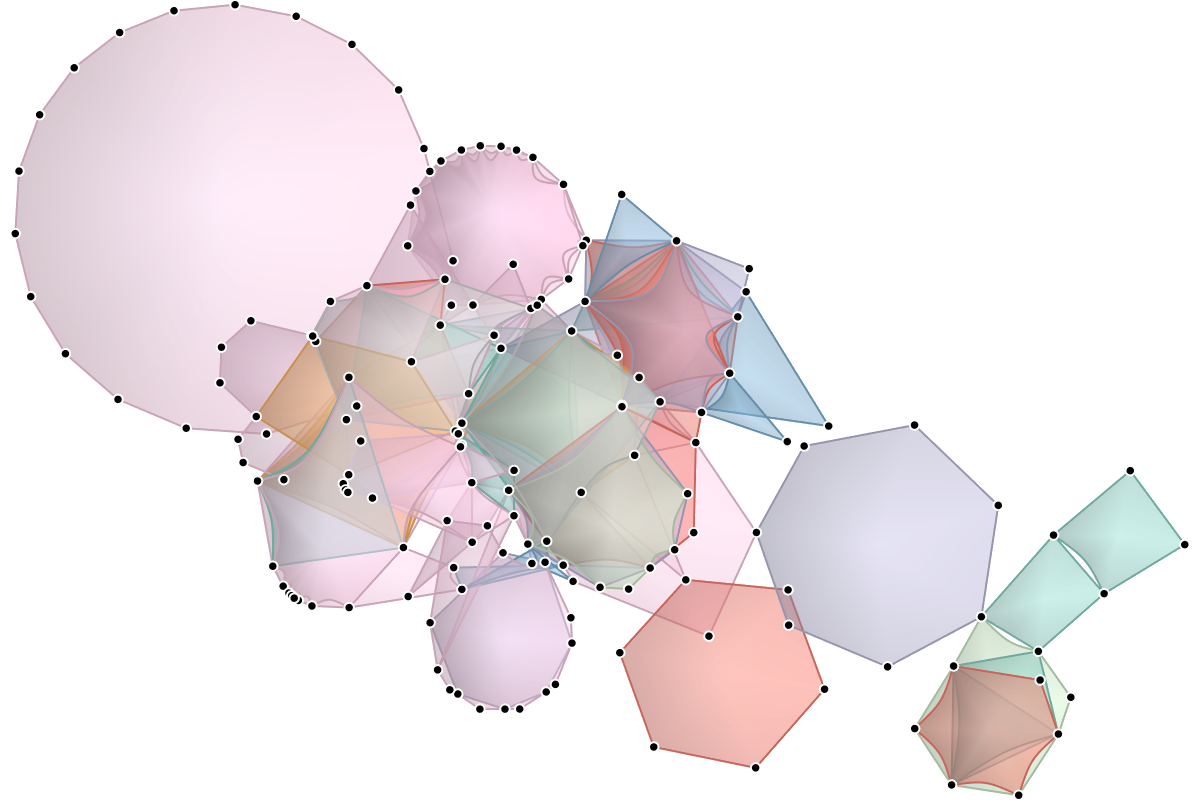}}}
  \hspace{-0.375in}
  \raisebox{0.0in}{\subfloat[][]{\includegraphics[angle=90,width=1in]{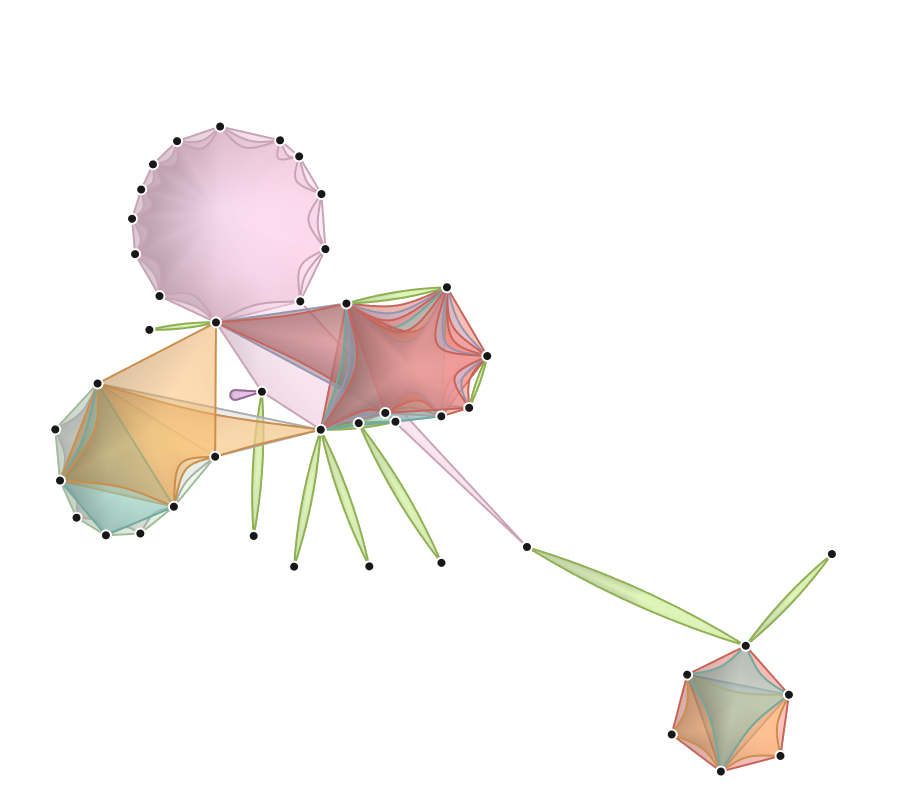}}}
  \raisebox{0.375in}{\subfloat[][]{\includegraphics[angle=90,width=1in]{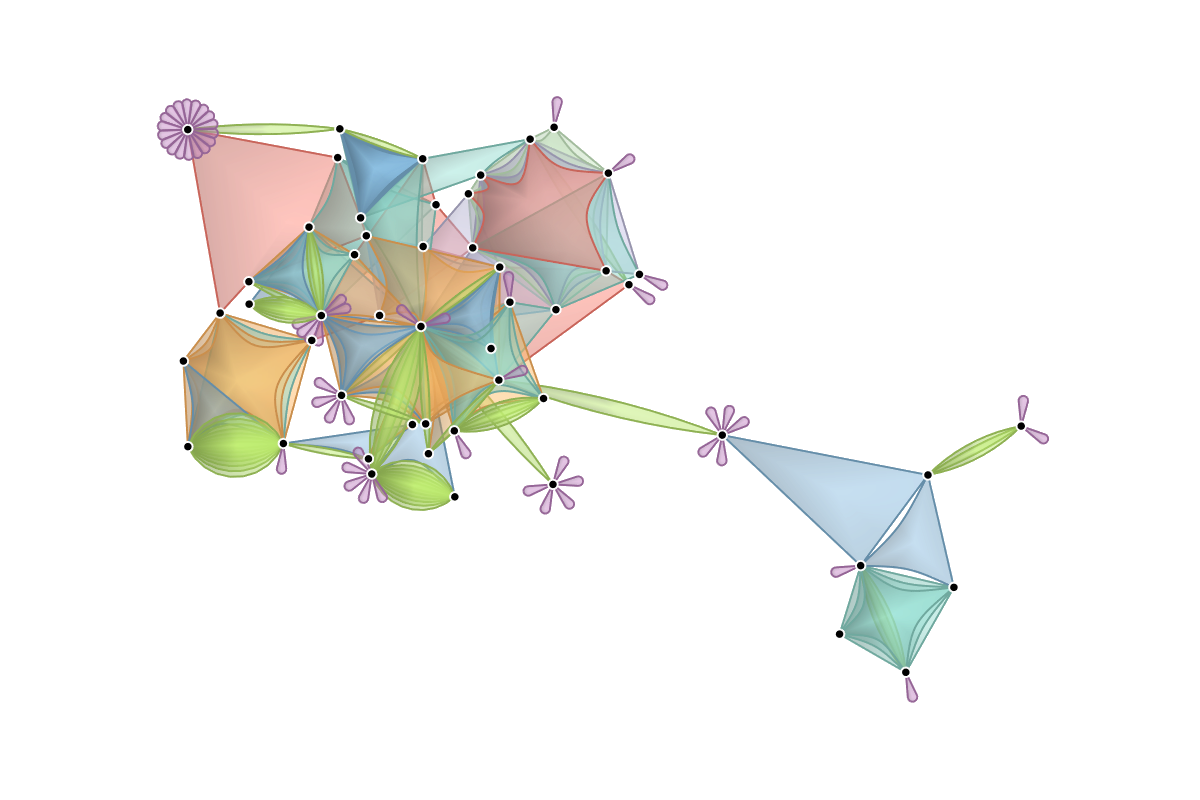}}}
  \hspace{-0.375in}
  \raisebox{0.0in}{\subfloat[][]{\includegraphics[angle=90,width=1in]{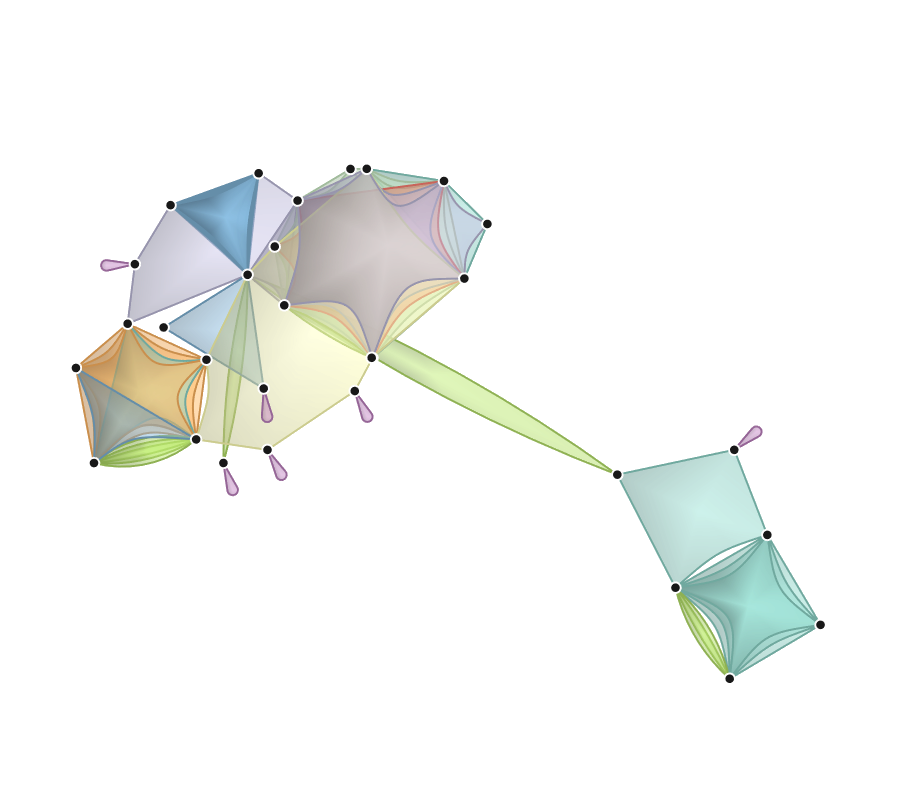}}}
  \caption{Primal and dual visualizations of regional trade agreements and their participating nations. (a) Input primal layout. (b) Simplified primal layout. (c) Input dual layout. (d) Simplified dual layout.}
  \label{fig:trade}
\end{figure}

\begin{figure}[t]
  \centering
  \subfloat[][]{\includegraphics[angle=90,height=2in]{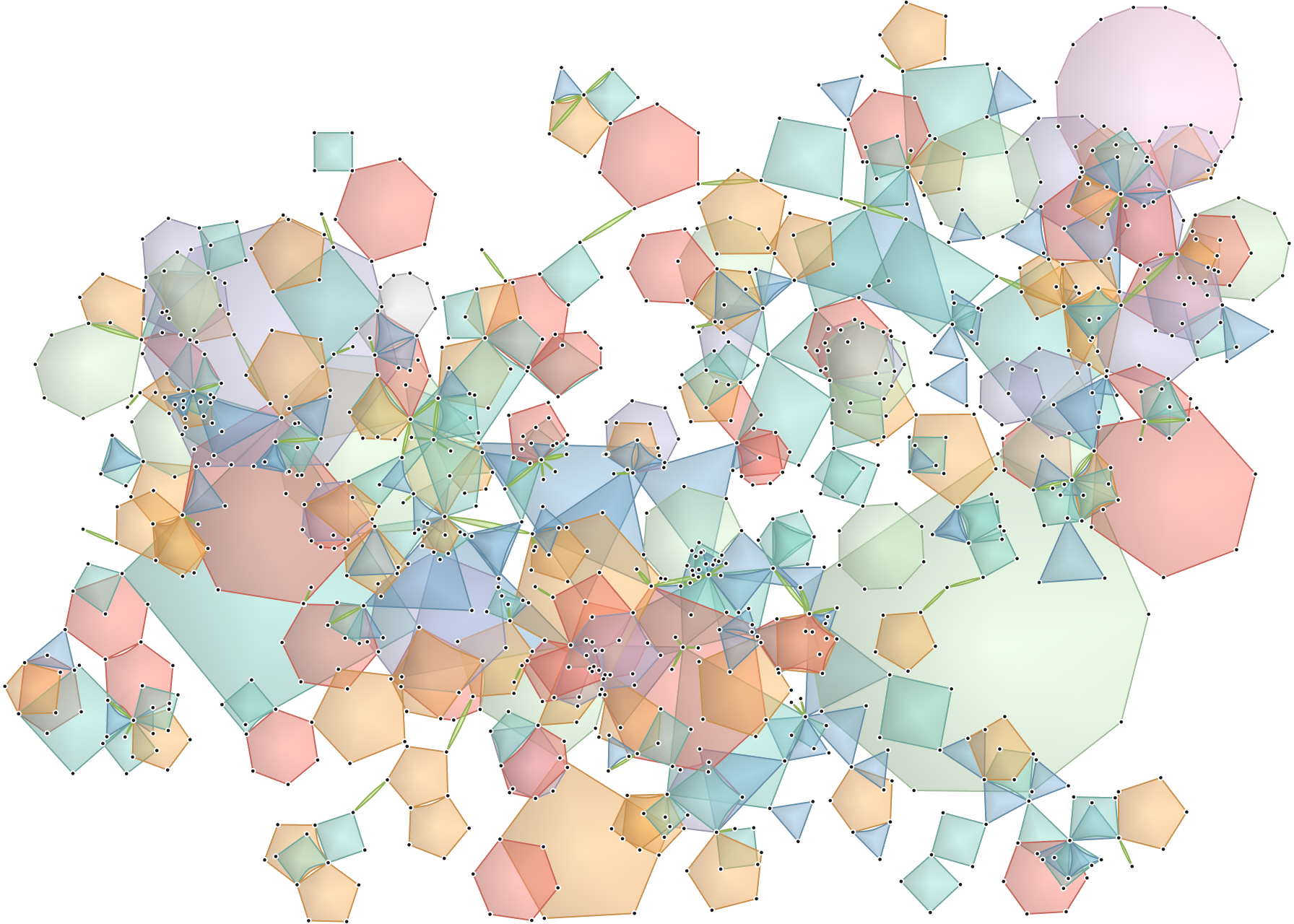}}
  \hspace{0.125in}
  \subfloat[][]{\includegraphics[angle=90,height=2in]{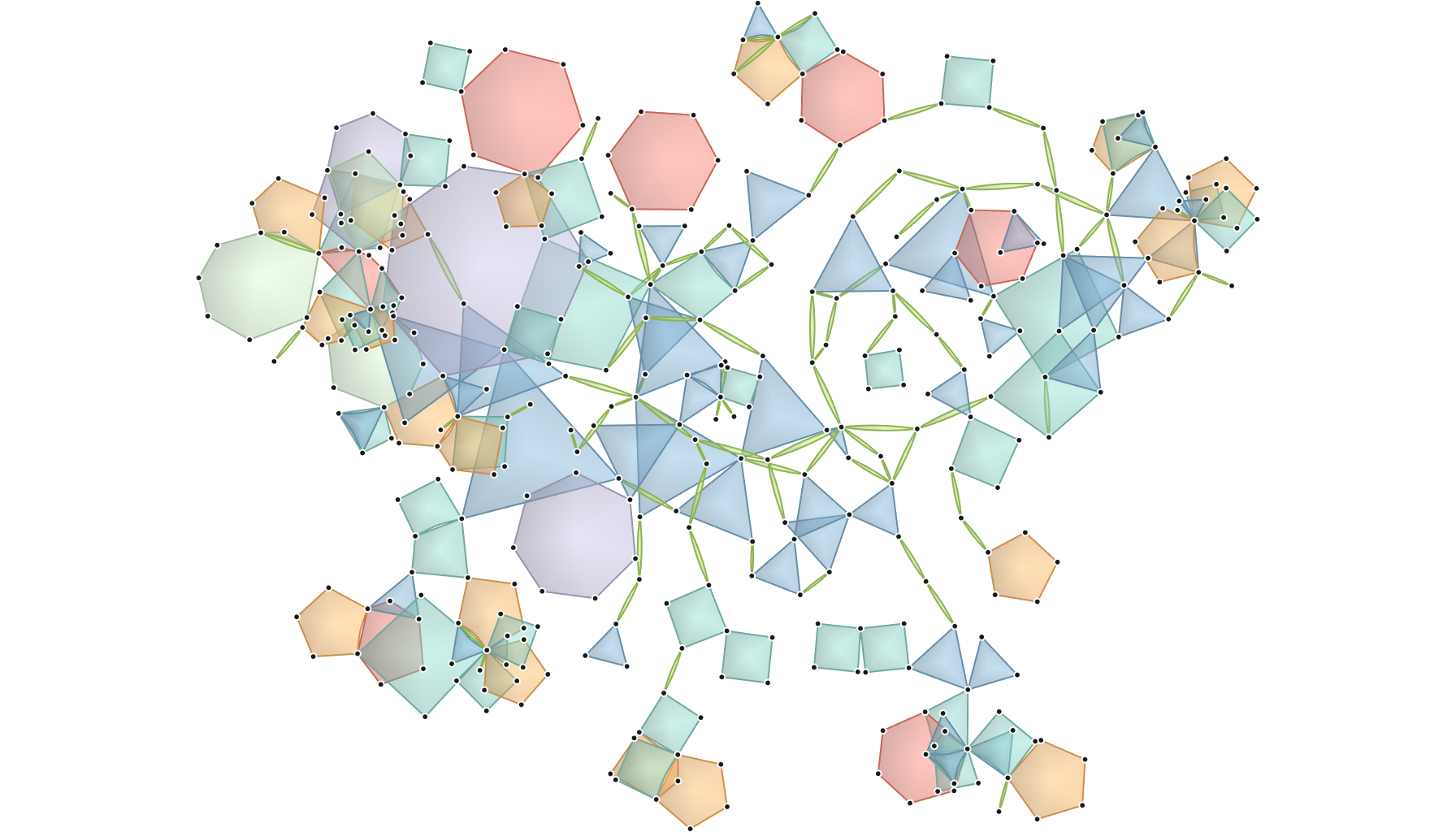}}
  \hspace{0.125in}
  \subfloat[][]{\includegraphics[angle=90,height=2in]{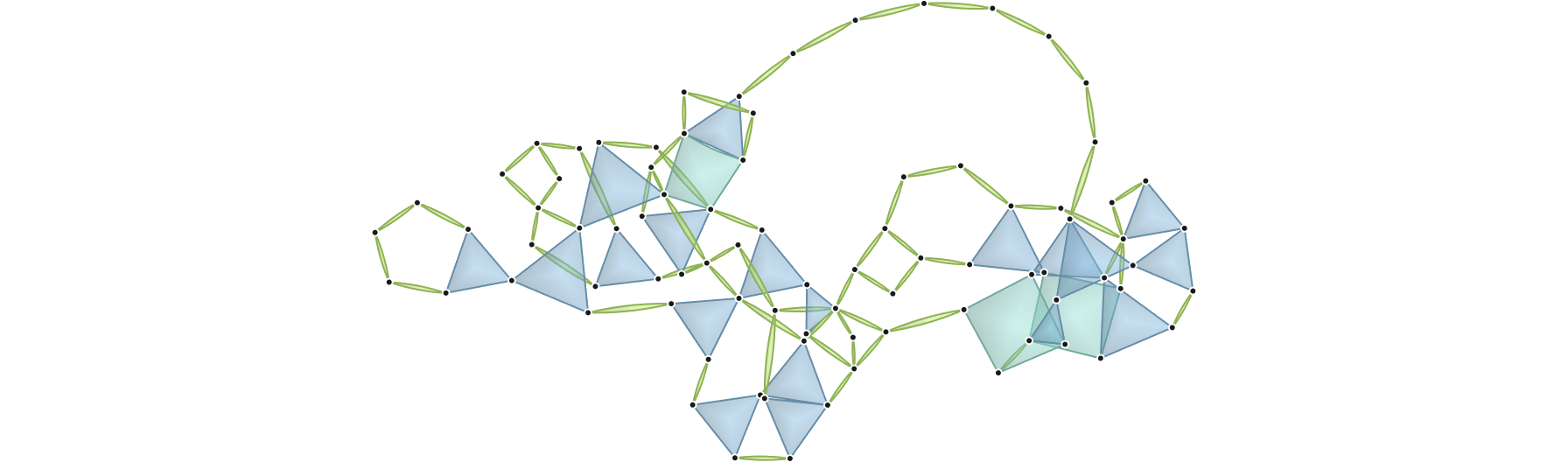}}
  \caption{A paper-author hypergraph dataset containing 1008 vertices and 429 hyperedges (a) is simplified with our framework down to the coarsest allowable scale $H_{1214}$ (c) and the layout is optimized. Then the simplification is iteratively reversed, and the layout is refined at each intermediate scale, an example of which is shown in (b), until the original scale $H_0$ is reached.}\label{fig:paperauthor}
\end{figure}

\Cref{fig:paperauthor} shows the largest connected subset of publications in IEEE Transactions on Visualization and Computer Graphics (TVCG) from $2015$ to $2017$ retrieved from the DBLP database~\cite{author_data}. In these visualizations, each polygon represents a published paper whose incident vertices represent its authors. \Cref{sec:paperauthor} shows an enlarged version of this figure. The dataset contains a total of $1008$ vertices and $429$ hyperedges with a maximum vertex degree of $17$ and a maximum hyperedge cardinality of $20$. \Cref{fig:paperauthor} shows an optimized layout of the original scale (a) as well as the coarsest simplified scale (c) and one intermediate scale (b) generated by our framework. Each of these scales can be used to inspect different aspects of data. In the original scale (\Cref{fig:paperauthor} (a)), we can easily see which papers have the largest number of authors by looking for the largest polygons. We can also see that these papers have numerous authors with only one publication in the dataset. Many such papers concern domain-specific visualization techniques, so we speculate that they may include domain experts as coauthors. For example, the large pink polygon in the top left corner represents a paper on visualizations of concepts from special and general relativity and includes a team of both computer graphics and astrophysics researchers. Another polygon in the bottom left corner represents a paper on interactive visualizations for prostate cancer health risk communication and includes coauthors from computer graphics as well as medical professionals. With the optimized layout of the original scale, we can see that areas with the most polygon overlap contain a forbidden sub-hypergraph or a particularly high degree vertex. In the context of the paper-author network, forbidden sub-hypergraphs like $n$-adjacent clusters indicate a set of authors who collaborate on multiple papers. Such clusters could represent organized research groups that collaborate frequently on a series of papers. High-degree vertices represent authors with many publications who are likely experienced researchers and academic advisors.

In the coarsest simplified scale (\Cref{fig:paperauthor} (c)), we can see several cycle structures among the remaining authors and publications. Where the cycles are small and tangled, we can infer that there is a significant amount of inter-collaboration between the corresponding researchers in the community. Where the cycles are larger and more spread out, there may be less inter-collaboration and more inclusion of experts from other disciplines. In the intermediate simplified scale (\Cref{fig:paperauthor} (b)), we can clearly see branching sub-tree structures along the right side of the visualization. The ends of such branching structures may represent unique subtopics or specific domain applications that are somewhat removed from the main research topics in the journal. For example, the small grouping of polygons at the center-right of the intermediate simplified scale contains the only papers in the dataset studying visualizations of cerebral blood flow in aneurysms.

\section{Eye Tracking Survey}
\label{sec:eyetracking}

We conducted a preliminary user survey among 12 graduate and undergraduate university students to evaluate the usability of our visualizations. We designed the survey to analyze how participants interact with our visualizations when presented with both primal and dual layouts as well as the layouts of several simplified scales. The survey was conducted in person and involved the use of eye-tracking hardware to monitor participants' exploration of the visualizations. The survey included visualizations of our international trade agreement dataset from \Cref{sec:casestudies}, as well as two paper-author collaboration networks retrieved from Isenberg et al.'s openly available Vispubdata dataset~\cite{Isenberg:2017:VMC}. Their dataset contains information on IEEE Visualization publications from 1990-2021. The first of these datasets, shown in \Cref{fig:teaser}, consists of a connected subset of publications containing the keywords ``flow'', ``graph'', and ``machine learning'', while the second consists of the largest connected subset of papers from the InfoVis and SciVis tracks of the IEEE Visualization journal. For each dataset, we asked two questions requiring participants to analyze different properties of specific hypergraph elements. In addition to recording participant responses, we also tracked participant gaze fixation points as they completed each question. We used a Gazepoint GP3 HD 150Hz eye tracking system with a reported accuracy of 0.5-1.0$^{\circ}$~\cite{Gazepoint}. The eye tracking for each survey trial was conducted in a controlled lab environment with a 24-inch 1920x1200p 144Hz monitor.

For the trade agreement dataset, participants were presented with the original scale visualizations of the primal and dual hypergraph layouts generated by our framework (\Cref{fig:trade} (a,c)). The first question asked participants to count the trade agreements involving five or more countries that only participate in one trade agreement. The second question asked participants to count the number of trade agreements that a specific country participates in. Participants answered the first question with $66.7\%$ accuracy and the second question with $83.3\%$ accuracy. Participants who answered the questions correctly spent an average of $58.1\%$ viewing the dual while participants who answered incorrectly spent an average of $22.3\%$ of their time viewing the dual (\Cref{fig:gazetrajectory}). This points to the value of including both the primal and dual layouts for simple analysis tasks.

\begin{figure}[tbp]
  \centering
  \includegraphics[width=1.7in]{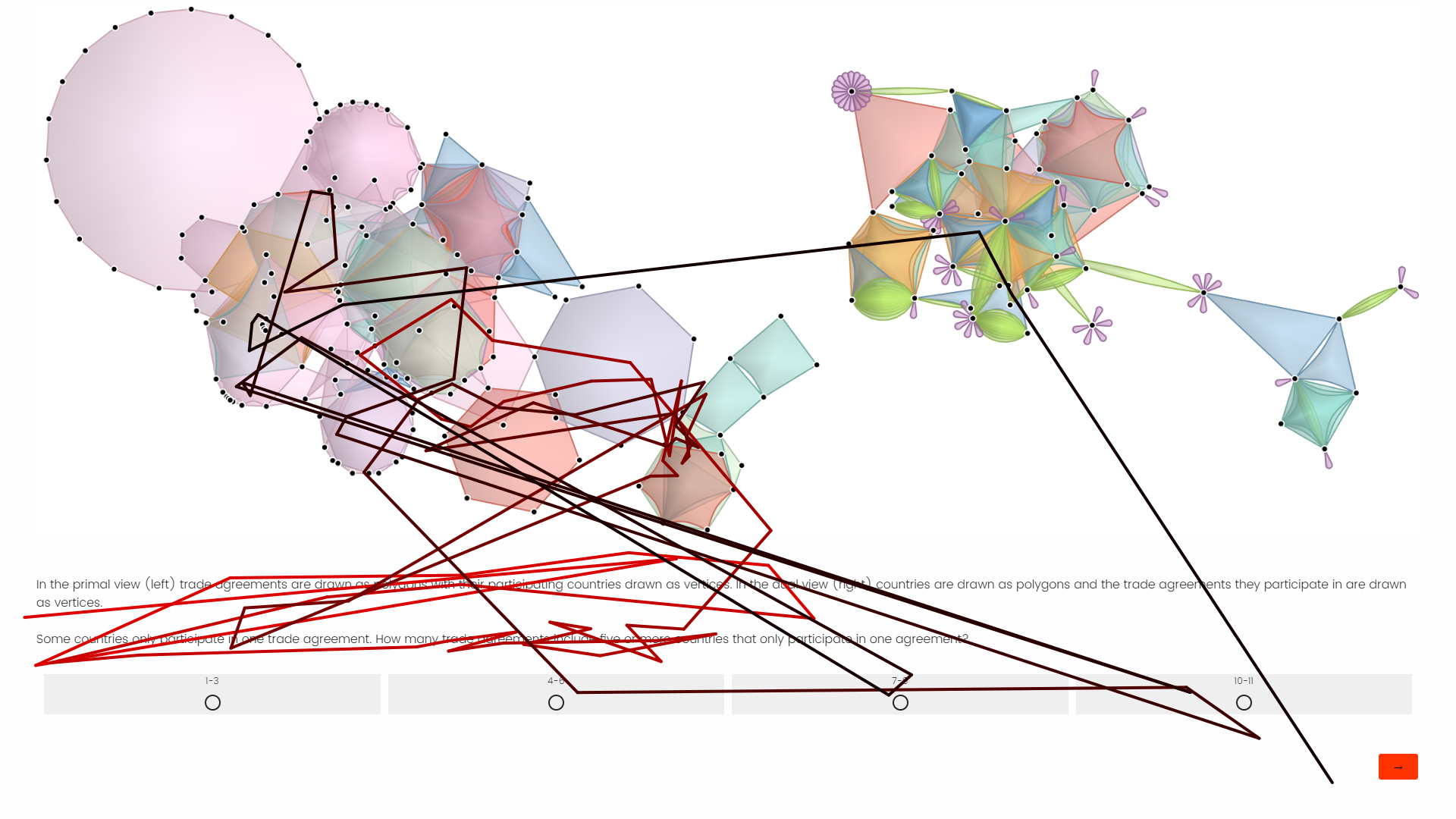}
  \hspace{0.0in}
  \includegraphics[width=1.7in]{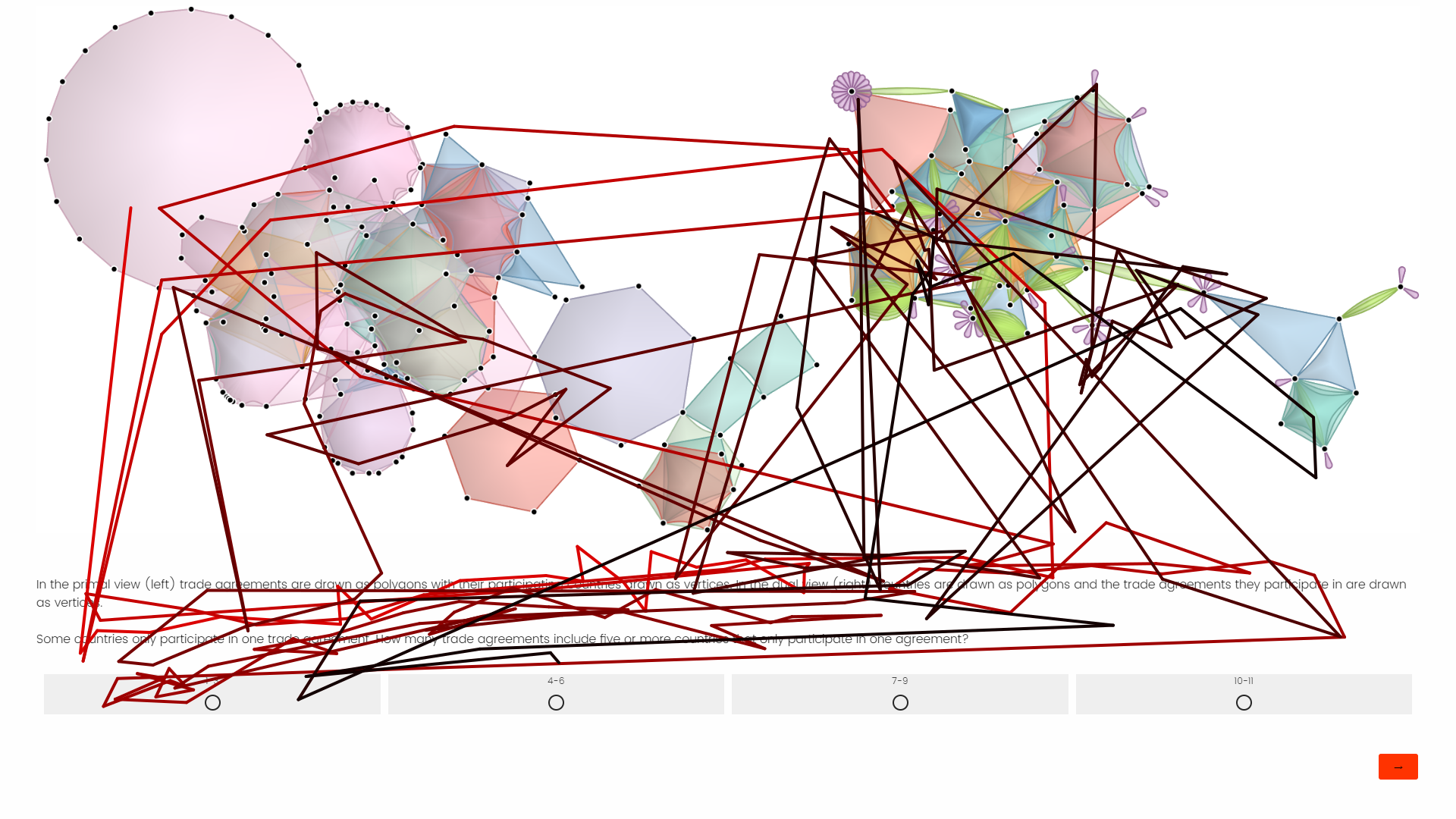}
  \caption{Gaze fixation paths of two participants answering the same question for the trade agreement dataset in our user survey. The participant with the gaze path on the left did not study the dual hypergraph and answered the question incorrectly. The participant with the gaze path on the right studied both the primal and dual hypergraph visualizations and answered the question correctly. Enlarged versions of these plots are available in \Cref{sec:gazetrajectory}.}
  \label{fig:gazetrajectory}
\end{figure}

For the first paper-author dataset (containing $786$ vertices and $318$ hyperedges), participants were presented with both primal and dual layouts of the original hypergraph scale and two simplified scales generated by our framework. The first question for this dataset asked participants to count polygons of a particular color incident to a specific high-degree vertex. The second question asked them to count the number of distinct paths between a pair of vertices. Participants answered the first question with $91.7\%$ accuracy and the second question with $50.0\%$ accuracy. However, not all participants used the simplified scale layouts. Participants who used the simplified scales answered the first question with $100\%$ accuracy and the second question with $66.7\%$ accuracy. Participants who did not use the simplified scales answered the first question with $75.0\%$ accuracy and the second question with 0\% accuracy. On average, participants spent less time viewing the simplified scales than the original scale, however, the participants did not necessarily find the original scale visualization more useful. For example, \Cref{fig:gazetimeline} shows a participant who spent most of their time studying the original scale layouts but was able to arrive at the correct answer soon after viewing the second simplified scale.

\begin{figure}[tbp]
  \centering
  \vspace{12pt}
  \includegraphics[width=3.5in]{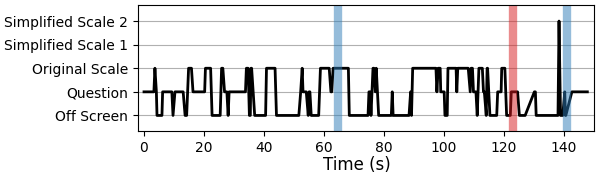}
  \caption{Gaze fixation timeline of a participant answering a question in our user survey. The vertical axis indicates different regions on the participant's screen, including the question text and visualization scales. The horizontal axis represents the time in seconds that a participant spent on the question. The vertical lines in the plot indicate when the participant selected an answer. The blue lines indicate a correct answer and the red lines indicate an incorrect answer. An enlarged version of this plot is available in \Cref{sec:gazetimeline}.}
  \label{fig:gazetimeline}
\end{figure}

\begin{table*}[!t]
  \caption{Comparison of polygon layout optimization methods on paper-author datasets. Each dataset is collected from the DBLP database~\cite{author_data} and consists of a maximal connected subset of publications in the specified year range from the given IEEE journals: Transactions on Robotics (TOR), Transactions on Pattern Analysis and Machine Intelligence (TPAMI), Transactions on Visualization and Computer Graphics (TVCG), Transactions on Human-Machine Systems (THMS), Transactions on Learning Technology (TLT), Transactions on Education (TOE), Transactions on Haptics (TOH), Transactions on Cybernetics (TOC). The first six datasets are the same as those used in \cite{Qu:22}. Note that our method leads to much improvement in terms of fewer avoidable overlaps than the method of Qu et al.~\cite{Qu:22}.}
  \vspace{-8pt}
  \begin{threeparttable}
    \scriptsize
    \setlength{\tabcolsep}{4.7pt}
    \centering
    \begin{tabular}{c c c c p{0.0in} c >{\columncolor[rgb]{0.9,0.9,0.9}}c >{\columncolor[rgb]{0.8,0.8,0.8}}c p{0.0in} c >{\columncolor[rgb]{0.9,0.9,0.9}}c >{\columncolor[rgb]{0.8,0.8,0.8}}c}
      \toprule
      \multicolumn{4}{c}{\textbf{Dataset}} & & \multicolumn{3}{c}{\textbf{Qu et al.}\cite{Qu:22}} & & \multicolumn{3}{c}{\textbf{Ours}} \\
      \cmidrule{1-4}\cmidrule{6-8}\cmidrule{10-12}
      & & & forbidden & & execution & pairwise & sum pairwise & & execution & pairwise & sum pairwise \\
      description & $|V|$ & $|H|$ & sub-hypergraphs & & time (s) & overlap count & overlap area & & time (s) & overlap count & overlap area \\
      \hline
        \rule{0pt}{1.25EM}
      TOR (2015-2020) & 22 & 11 & 1 &         & 0.05 & 4 & 1.48 &         & 0.02 & 1 & 0.42 \\
      TPAMI (2015) & 77 & 24 & 6 &            & 0.42 & 60 & 29.11 &       & 0.43 & 15 & 5.63 \\
      TPAMI (2013-2014) & 93 & 42 & 2 &       & 0.93 & 61 & 7.70 &        & 0.56 & 13 & 2.12 \\
      TOR (2015-2020) & 146 & 56 & 8 &        & 2.48 & 92 & 25.29 &       & 2.09 & 30 & 7.88 \\
      TPAMI (2013-2015) & 314 & 126 & 12 &    & 20.15 & 156 & 37.01 &     & 14.17 & 62 & 15.18 \\
      TOR (2013-2020) & 527 & 232 & 37 &      & 61.08 & 422 & 98.12 &    & 63.07 & 154 & 37.12 \\
      \hline
        \rule{0pt}{1.25EM}
      TVCG (2015-2017)& 1008 & 429 & 113\tnote{$\dagger$} &                   & 45.19\tnote{*} & 6366 & 3806.36 &       & 687.14 & 3119 & 1297.50 \\
      THMS, TLT, TOE, TOH & & & & & & & & & & & \\
      (2013-2023) & \multirow{-2}{*}{1754} & \multirow{-2}{*}{635} & \multirow{-2}{*}{265\tnote{$\dagger$}} &   & \multirow{-2}{*}{1917.49\tnote{*}} & \multirow{-2}{*}{6125} & \multirow{-2}{*}{4598.78} &     & \multirow{-2}{*}{4570.77} & \multirow{-2}{*}{2895} & \multirow{-2}{*}{1154.67} \\
      TPAMI (2013-2020) & 2054 & 947 & 364\tnote{$\dagger$} &                 & 219.92\tnote{*} & 22113 & 17331.60 &    & 8583.15 & 8877 & 3666.70 \\
      TOC (2022) & 3047 & 1000 & 172\tnote{$\dagger$} &                       & 4518.35 & 5566 & 3497.09 &              & 13156.90 & 2510 & 863.64 \\
      \bottomrule
    \end{tabular}
    \begin{tablenotes}
      \item[$\dagger$] Datasets also contain unavoidable overlaps due to $K_5$ and $K_{3,3}$ sub-hypergraphs.
      \item[*] Layout optimization did not converge and terminated early.
    \end{tablenotes}
  \end{threeparttable}
  \label{tab:comparison}
\end{table*}

For the second paper-author dataset (containing $1878$ vertices and $966$ hyperedges), participants were presented with the primal layout of the original hypergraph scale and two simplified scales generated by our framework. We did not include the dual layouts because the primal layout required the entirety of the user's screen to be viewed clearly. The first question asked participants to identify the highest degree vertex in the visualization, and the second question asked them to count the length of the shortest path between two hyperedges. Participants answered the first question with $91.7\%$ accuracy and the second question with $66.7\%$ accuracy. All participants used the simplified scale layouts for both questions. On average, participants spent $21.3\%$ of their time viewing the original scale layout, $27.3\%$ of their time viewing the first simplified scale, and $51.4\%$ of their time viewing the second simplified scale. The increased time spent viewing the simplified scales along with a combined accuracy of $79.2\%$ suggests that participants were able to effectively use the simplified scale visualizations.

\section{Performance Evaluation}
\label{sec:performance}

Our framework leverages simplification operations that generally have small, localized footprints. However, if the input hypergraph is complete, where all the vertices are adjacent to each other, the local footprint of an operation can include the entire hypergraph. This represents the worst-case scenario for the computational complexity of our framework. For a hypergraph $H = \langle V,E\rangle$, recall that $n=|V|$ and $m=|E|$. Our framework first identifies all possible removal operations for vertices and hyperedges, and all merger operations between pairs of adjacent elements, requiring $O((n+m)^2)$ time and generating $O((n+m)^2)$ operations. Our operation priority function (\Cref{eq:priority}) computes a maximum over the footprint of each operation. In the worst case, the footprint of an operation contains $n+m$ elements, so ranking the operations by priority requires $O((n+m)^3)$ time for computation and $O((n+m)^2)\log((n+m)^2))$ time for sorting. Iterative simplification also requires updating the local footprint of each applied operation, adding $O((n+m)^3)$ complexity. Our notation here uses $n$ and $m$ from the original hypergraph scale even though the number of vertices and hyperedges is reduced by a constant factor in each iteration. Altogether, our framework's simplification process requires $O((n+m)^3)$ time. However, the running time of our framework is dominated by the iterative optimization process. Qu et al.~\cite{Qu:22} report that their layout optimization method has a lower bound $\omega((n+m)^4)$. Since our iterative optimization process uses a modified version of their method, we also have a lower bound of $\omega((n+m)^4)$. Each iterative layout refinement then has a lower bound of this form relative to the size of the corresponding operation footprint. Hypergraphs for most practical use cases are far from complete, so operation footprints are smaller than $n+m$. Our testing on real datasets, shown in \Cref{tab:comparison}, indicates that the execution time of our full framework is less than the theoretical bounds. \Cref{tab:comparison} compares our method to that of \cite{Qu:22} on datasets with tens to thousands of elements. These datasets also vary in complexity with respect to the number of unavoidable overlaps they contain. We measure this complexity using the number of forbidden sub-hypergraphs in each dataset. For each method, we display execution time as well as the number of pairwise polygon overlaps and sum of pairwise overlap areas in the optimized layouts.

\section{Conclusion and Future Work}
\label{sec:conclusion}

The main contribution of our work is a multi-scale framework for producing high-quality polygon visualizations of hypergraphs. Our framework features a novel top-down iterative simplification process followed by a bottom-up layout optimization process. To our knowledge, it is the first time that hypergraph simplification has been used specifically for layout optimization. Unlike previous work which focuses on either hyperedge-based sparsification or vertex-based coarsening, we introduce a set of atomic hypergraph simplification operations including both hyperedge and vertex-based operations. Our simplification process is guided by a custom operation priority measure which includes terms for multiple objectives: reducing visual clutter around high-degree vertices, eliminating non-planar sub-hypergraphs, and preserving hypergraph path structures. Additionally, our system is designed to handle primal and dual hypergraphs simultaneously and maintain consistency between them throughout the layout optimization process. We also introduce new and necessary theory on planarity for convex polygon drawings of hypergraphs. This includes a new criterion for convex polygon representations akin to Kuratowski's Theorem for planar graphs~\cite{Kuratowski1930}.

A major challenge that remains for our layout optimization framework is its time complexity. While our framework can handle large datasets, the execution time is constrained by an $\omega((n+m)^4)$ lower bound from the quasi-Newton optimization solver. We plan to continue exploring techniques to enhance the speed of our technique including different possibilities for layout initialization, pre-processing, and employing more efficient optimization solvers.

Our simplification system is designed for hypergraphs that are Zykov planar but lack a convex polygon representation according to our definition. We do not directly address the large class of hypergraphs that are non-planar because they contain structures analogous to $K_5$ and $K_{3,3}$. Furthermore, while our atomic simplification operations can be constrained to ensure the preservation of local connections, they do not consider global topological structures in the hypergraph. We plan to investigate topology-aware simplification methods to handle a larger class of non-planar hypergraphs and create multi-scale representations of hypergraphs that preserve their topological properties.

Preliminary results from our user survey and case studies suggest that the simplified scales used in our layout optimization process can also be used for pattern recognition in hypergraph visualizations. As such, we also plan to pursue multi-scale hypergraph representations that are focused on preserving visual structures in simplified scales. We hope to apply such a visualization system to domains like biology and medicine where interaction networks play an important role, and engineering simulation ensembles where numerous intertwined parameters influence simulation results.

\vspace{5pt}
\acknowledgments{The authors appreciate the constructive feedback from our anonymous reviewers.}

\newpage
\bibliographystyle{abbrv-doi-hyperref}
\bibliography{Scalable_Hypergraph_Paper}

\appendix

\clearpage

\section{Convex Polygon Planarity}
\label{apx:planarity}

Here we present a proof of our planarity criterion for convex polygon drawings of hypergraphs (Theorem~\ref{thm:forbidden} from the main paper), which is inspired by Kuratowski's Theorem that states that a graph is planar if and only if it does not contain a subdivision of the complete graph $K_5$ or complete bipartite graph $K_{3,3}$~\cite{Kuratowski1930}. Similarly, Theorem~\ref{thm:forbidden} states that a hypergraph has a convex polygon representation if and only if it does not contain one of our forbidden sub-hypergraphs described in \Cref{sec:polygon_planarity} (\Cref{fig:forbidden}) of the main paper. 

Our proof requires three intermediate results: (1) the definition of a new graph representation corresponding to some polygon drawing of the hypergraph which we call the \textit{face triangulation graph}, (2) a verification that the face triangulation graph meets the criteria of a convex representation provided by Thomassen~\cite{thomassen1984plane}, and (3) a proof that any articulation vertices in the hypergraph must appear on some face boundary of a convex polygon representation.

We first define connectedness and articulation vertices for hypergraphs as well as facial cycles in graph drawings. A graph is connected if there exists a path between every pair of distinct vertices. Connectedness for hypergraphs is defined similarly (see Bretto~\cite{bretto2013hypergraph}). A graph is said to be \textit{biconnected} if it does not contain any articulation vertices. An articulation vertex (also called cut-vertex) is a vertex whose removal makes the graph disconnected. We define biconnected hypergraphs and articulation vertices for hypergraphs in the same way. From here, it is natural to consider a biconnected component of a hypergraph: a maximal set of vertices $X \in V$ such that the sub-hypergraph induced by $X$ is biconnected. Note that the graph consisting of a single edge and its two endpoint vertices is considered a biconnected graph. Similarly, we consider any hypergraph containing a single hyperedge and its incident vertices to be a biconnected hypergraph. Biconnected graphs and hypergraphs are central to our definition of the face triangulation graph and our first criterion for hypergraph convex polygon planarity. A \textit{face} in a planar drawing of a graph is a region in the plane bounded by a set of vertices and edges. The unbounded region outside of the planar drawing is counted as the \textit{exterior face}. The boundary of each interior face defines an \textit{interior facial cycle}, and the exterior face defines the \textit{exterior facial cycle}. We similarly define a face in a planar polygon drawing of a hypergraph as a region in the plane not covered by a hyperedge that is bounded by a set of vertices and polygon sides.

Our first claim regarding convex polygon representations requires a new definition for the \textit{face triangulation graph} of a polygon drawing for a biconnected hypergraph. Let $H=\langle V,E \rangle$ be a biconnected, Zykov planar hypergraph. Then the K{\"o}nig graph $K(H)=(X,Y,D)$ is a planar graph (\Cref{sec:polygon_planarity}). Recall that $K(H)$ is a bipartite graph containing a vertex $x \in X$ for each hypergraph vertex $v \in V$ and a vertex ${y \in Y}$ for each hyperedge $e \in E$ where $(x,y) \in D$ if the corresponding hypergraph elements $v$ and $e$ are incident in $H$. Let $H$ have a polygon drawing determined by some planar representation of $K(H)$ where each vertex $v \in V$ has the same location in the plane as its corresponding vertex $x \in X$ (\Cref{fig:triangulation} (a,b)). Let the vertices of each hyperedge polygon be ordered according to their angular coordinates relative to the corresponding vertex $y \in Y$. With this polygon drawing of $H$, the \textit{face triangulation graph} $T(H)$ is constructed by the following procedure:

\begin{procedure}{\ \\}
  \begin{enumerate}[nosep]
    \item Let $T(H)$ include all the vertices of $H$.
    \item For two vertices $u,v \in V$, let $(u,v)$ be an edge in $T(H)$ if $(u,v)$ form the side of a polygon in the drawing of $H$.
    \item Let $\{F_1,F_2,\dots\}$ be the interior facial cycles in our current construction of $T(H)$ that correspond to a hypergraph face in the drawing of $H$ (\Cref{fig:triangulation} (b)).
    \item For each interior facial cycle $F_i$, add a vertex $c_i$ located in the interior of $F_i$ and edges $(c_i,v)$ for each vertex $v \in V(F_i)$ (\Cref{fig:triangulation}~(c)).
  \end{enumerate}
  \label{pro:triangulation}
\end{procedure}

With this definition, our goal is to show that the face triangulation graph has a planar drawing with convex facial cycles only for a specific class of hypergraphs. Thomassen~\cite{thomassen1984plane} provides a characterization for such graph drawings which they term convex representations.

\newpage

\begin{theorem}[Thomassen~\cite{thomassen1984plane}]
  Let $G$ be a biconnected planar graph and let $S$ be a cycle which is the face boundary of some plane representation of $G$. Let $\Sigma$ be a convex polygon representing $S$. Then $\Sigma$ can be extended into a convex representation of $G$ if and only if
\begin{enumerate}[nosep]
  \item[(i)] each vertex $x$ in $G-V(S)$ of degree at least 3 is joined to $S$ by three paths that are disjoint except for $x$,
  \item[(ii)] each cycle which is edge-disjoint from $S$ has at least three vertices of degree at least 3, and
  \item[(iii)] no $S$-component has all its vertices of attachment on a path of $S$ corresponding to a straight line segment of $\Sigma$.
\end{enumerate}
\end{theorem}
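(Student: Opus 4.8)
The plan is to prove the two directions separately: necessity of (i)--(iii) is a direct geometric argument, while sufficiency is the substantive part and will proceed by induction on the $2$-separation structure of $G$. Throughout I work in the natural drawing model, where a \emph{convex representation} is a straight-line plane drawing of $G$ whose outer boundary is $\Sigma$ and each of whose interior faces is a convex polygon, allowing a face boundary to carry degree-$2$ vertices at flat ($180^{\circ}$) angles, so that $\Sigma$ itself may have several vertices of $S$ collinear along one of its sides. The entire difficulty is that $G$ is only assumed biconnected, not $3$-connected, so Tutte-type rigidity does not apply directly and three kinds of degeneracy (an interior vertex collapsing onto a line, a cycle bounding zero area, a bridge squeezed onto a boundary line) must be excluded by hand. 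Conditions (i)--(iii) are exactly the obstructions to these three collapses.

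For \textbf{necessity}, I assume a convex representation extending $\Sigma$ exists and refute each failure in turn. If (i) fails, some interior vertex $x$ with $\deg(x)\ge 3$ is separated from $S$ by a set of at most two vertices $\{a,b\}$ (Menger's theorem); the sub-drawing of the bridge containing $x$ is then confined to the region pinched between $a$ and $b$, and convexity of the faces incident to $x$ cannot be maintained there unless $x$ and all vertices cut off from $S$ collapse onto the line $\overline{ab}$, which is impossible for a vertex of degree $\ge 3$. If (ii) fails, a cycle $C$ edge-disjoint from $S$ has at most two vertices of degree $\ge 3$; but a convex polygon of positive area needs at least three non-flat corners, and only degree-$\ge 3$ vertices can be drawn as corners of the region bounded by $C$, so $C$ is forced to bound zero area, a contradiction. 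If (iii) fails, an $S$-component has all its attachment vertices on a single straight side $\ell$ of $\Sigma$; the component must be drawn on one side of $\ell$ inside $\Sigma$, and the face squeezed between $\ell$ and the component then has all its attachments collinear, leaving no convex region for the component to occupy. Each argument is short once the pinching/collinearity lemma is isolated.

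For \textbf{sufficiency}, I would induct on $|V(G)|$, splitting along $2$-cuts. If $G$ is $3$-connected, the result follows from the classical convex-embedding theorem for $3$-connected planar graphs with a prescribed convex outer face (Tutte's barycentric embedding), and in this case (i)--(iii) hold automatically. Otherwise I pick a $2$-cut $\{u,v\}$, decompose $G$ along it into pieces $G_1,\dots,G_r$ each obtained by adding a virtual edge $uv$, and verify that every piece inherits the hypotheses: biconnectivity and planarity are standard, while (i)--(iii) transfer because any interior high-degree vertex, edge-disjoint cycle, or $S$-component of a piece is already one of $G$, its three disjoint paths, its corners, or its attachments surviving, with the virtual edge $uv$ serving as a substitute connection through the cut. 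I then draw each piece convexly by induction with $uv$ realized as a straight boundary segment, and glue the pieces so their copies of $uv$ coincide and the pieces lie on opposite sides of the line through $u$ and $v$; convexity across the seam is maintained by choosing the segment direction so that the angles contributed by the two sides at $u$ and at $v$ each sum to at most $180^{\circ}$, which the inherited conditions guarantee is achievable.

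The main obstacle is the sufficiency bookkeeping around the $2$-cut gluing. Two points need genuine care. First, I must guarantee that a usable $2$-cut exists whose pieces strictly decrease in size and still satisfy (i)--(iii); the delicate case is when the cut vertices $u,v$ both lie on $S$, where the virtual edge may have to be routed as a boundary path rather than an interior chord, and condition (iii) is precisely what keeps this legal. Second, the geometric gluing must simultaneously respect the fixed outer polygon $\Sigma$ and keep \emph{every} face convex, including the new faces created along the seam; this is where I expect to spend the most effort, reconciling independently constructed piecewise-convex drawings into one globally convex, crossing-free drawing. Establishing this seam-compatibility lemma, and checking that it degrades gracefully into the flat-angle cases permitted by $\Sigma$, is the crux of the whole argument.
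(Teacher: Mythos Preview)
The paper does not prove this theorem at all: it is quoted verbatim as a result of Thomassen~\cite{thomassen1984plane} and used as a black box to derive the strictly-convex variant (Theorem~\ref{thm:strict}) and ultimately Theorem~\ref{thm:biconnected}. There is therefore no ``paper's own proof'' to compare against; the authors simply cite Thomassen for the argument.

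Judged on its own, your outline is broadly faithful to Thomassen's original strategy: necessity via direct geometric obstructions, sufficiency via a decomposition that bottoms out in the $3$-connected case where Tutte's barycentric embedding applies. The necessity sketches are adequate at the level of a plan. The sufficiency sketch, however, glosses over a genuine difficulty in the gluing step. When you split along a $2$-cut $\{u,v\}$ and add a virtual edge $uv$ to each piece, you must specify \emph{which} cycle plays the role of the outer face $S$ in each piece and \emph{which} convex polygon represents it---and that polygon is not $\Sigma$ but a new one you must construct, with the segment $uv$ as one side. Verifying that condition~(iii) holds for each piece relative to its \emph{new} outer polygon is not automatic from (iii) for $(G,\Sigma)$: the virtual edge $uv$ is a new straight side, and you must check that no $S$-component of the piece attaches entirely along it. Thomassen handles this by choosing the $2$-cut carefully (essentially taking an ``extremal'' separating pair so that one side is as small as possible) rather than an arbitrary one; your plan of ``pick a $2$-cut'' does not yet guarantee the inherited (iii), and the phrase ``the virtual edge $uv$ serving as a substitute connection through the cut'' is where the argument would currently break. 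You correctly flag the seam-compatibility as the crux, but the missing ingredient is upstream of the geometry: it is the combinatorial choice of cut that makes the inheritance of (i)--(iii) go through.
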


\begin{figure}[tbp]
  \centering
  \subfloat[][Drawing of K{\"o}nig graph $K(H)$.]{\includegraphics[width=2in]{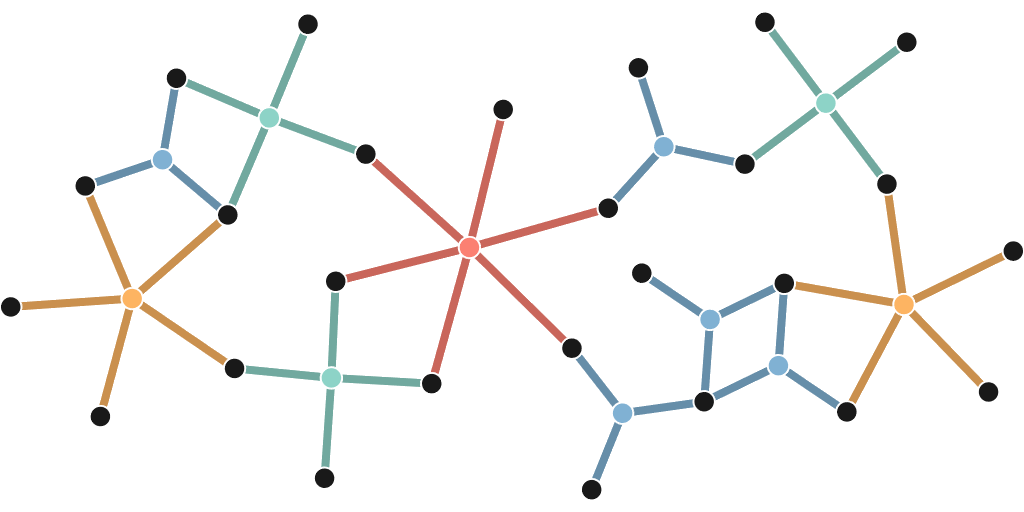}} \\
  \subfloat[][Drawing of hypergraph $H$.]{\includegraphics[width=2in]{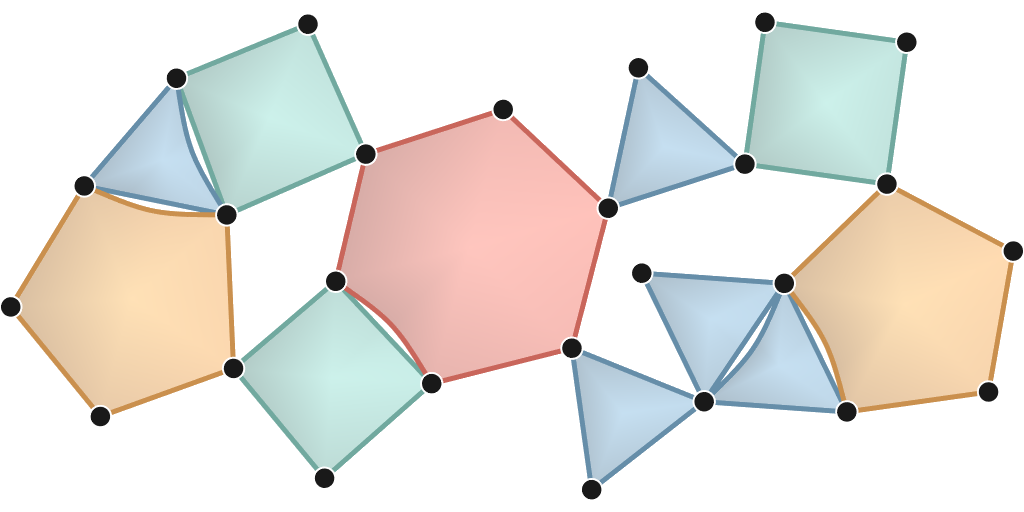}} \\
  \subfloat[][Interior facial cycles $\{F_1,F_2\}$.]{\includegraphics[width=2in]{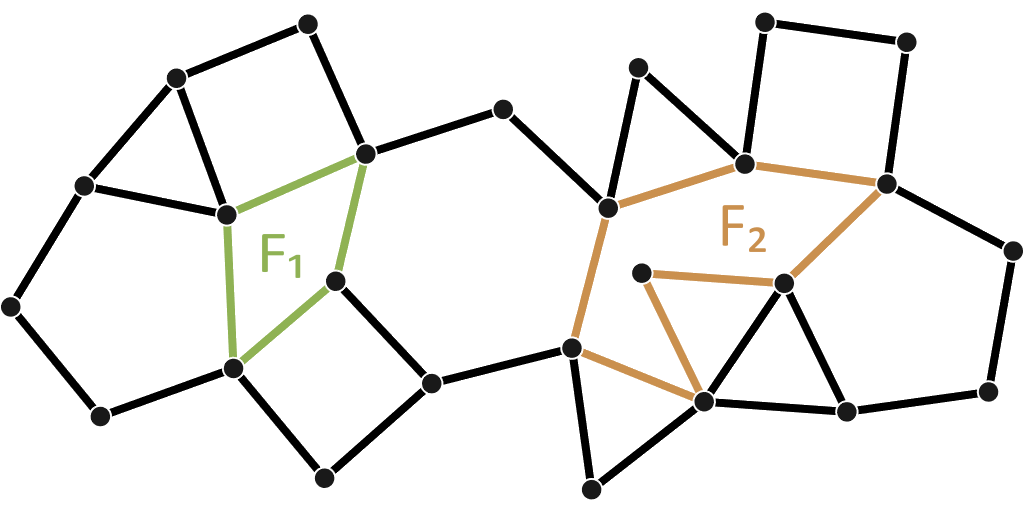}} \\
  \subfloat[][Face triangulation graph $T(H)$.]{\includegraphics[width=2in]{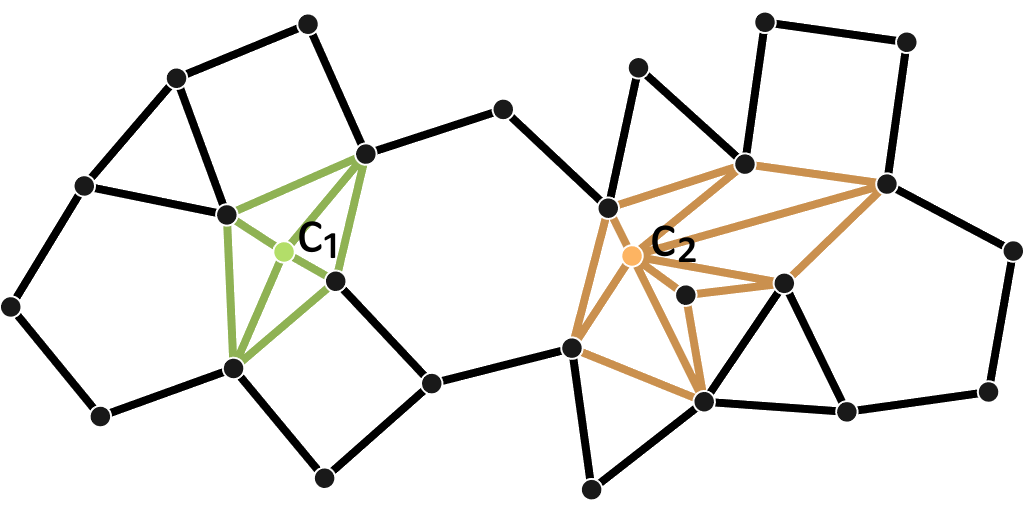}}
  \caption{Construction of the face triangulation graph from a planar polygon drawing of a hypergraph.}\label{fig:triangulation}
\end{figure}

For our polygon layouts of hypergraphs, we require that each hyperedge be drawn as a strictly convex polygon. Thus, we are interested in the case where the face triangulation graph has strictly facial cycles. Thomassen notes that if $\Sigma$ is restricted to being strictly convex, condition (iii) becomes redundant. Thomassen also notes that every vertex $x \in V(G) \setminus V(S)$ with degree 2 must be on a straight line segment in any convex representation of $G$. It follows that the faces whose boundaries include $x$ are not strictly convex. Thus, if we require that every face boundary be strictly convex, Thomassen's Theorem is reduced to the following:

\begin{theorem}[Strictly Convex Representations of Graphs]
  Let $G$ be a biconnected planar graph and let $S$ be a cycle which is the face boundary of some plane representation of $G$. Let $\Sigma$ be a strictly convex polygon representing $S$. Then $\Sigma$ can be extended into a strictly convex representation of $G$ if and only if each vertex $x$ in $G-V(S)$ is joined to $S$ by three paths that are disjoint except for $x$.
  \label{thm:strict}
\end{theorem}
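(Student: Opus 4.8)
The plan is to derive this equivalence from Thomassen's characterization (the theorem stated just above) by showing that, once we insist that \emph{every} face boundary of the target representation be strictly convex, his three conditions (i)--(iii) collapse to the single condition stated here. Both directions of the ``if and only if'' then ride on this reduction: a strictly convex representation is in particular a convex one, so the forward direction is immediate from Thomassen once the conditions are reconciled, and for the backward direction I verify Thomassen's hypotheses and then argue that his construction can be taken strictly convex.

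The heart of the argument is a degree bound on the interior vertices. Since $G$ is biconnected, every vertex has degree at least $2$; I claim that in any representation with all face boundaries strictly convex, no vertex of $G-V(S)$ can have degree exactly $2$. Indeed, as Thomassen observes, a degree-$2$ vertex $x$ must lie in the relative interior of the straight segment joining its two neighbours, so each of the (at most two) faces incident to $x$ has a flat interior angle at $x$ and fails to be strictly convex. Hence every vertex of $G-V(S)$ has degree at least $3$, and Thomassen's condition (i) --- originally quantified only over interior vertices of degree at least $3$ --- becomes exactly the statement ``every vertex of $G-V(S)$ is joined to $S$ by three paths that are disjoint except at that vertex.'' Using the same degree bound, condition (ii) is automatic: a cycle edge-disjoint from $S$ has at least three vertices, and each has degree at least $3$ (an interior vertex by the bound just proved, and a vertex of $S$ because it already carries its two $S$-edges together with the two non-$S$ edges of the cycle). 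Finally, because $\Sigma$ is strictly convex, the only sub-paths of $S$ that lie along a straight segment of $\Sigma$ are single edges, so condition (iii) is vacuous; this is Thomassen's remark that (iii) is superfluous in the strictly convex case.

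With these reductions, the forward direction reads: a strictly convex extension of $\Sigma$ is a convex extension, so Thomassen gives (i)--(iii), and (i) together with the degree bound yields the three-path condition for \emph{every} vertex of $G-V(S)$. For the backward direction, the three-path hypothesis forces each interior vertex to have degree at least $3$, so (i) holds as stated, (ii) and (iii) hold by the paragraph above, and Thomassen produces a convex representation extending $\Sigma$.

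The remaining --- and I expect most delicate --- point is upgrading ``convex'' to ``strictly convex'' in this last step: a face could a priori still acquire a flat angle at some interior vertex even though that vertex has degree at least $3$. The plan is to rule this out either by rerunning the local argument of the degree bound in reverse, or, more robustly, by a small generic perturbation of the interior vertex positions (equivalently, a Tutte/barycentric-style placement relative to the fixed strictly convex boundary $\Sigma$), which preserves planarity and the convexity of every face while destroying all flat angles; this is consistent with Thomassen's own construction, which already accommodates the strictly convex regime. Once strict convexity is secured, the resulting drawing is the required strictly convex representation of $G$, and the equivalence is complete.
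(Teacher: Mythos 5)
Your proposal is correct and follows essentially the same route as the paper: both reduce the statement to Thomassen's theorem by observing that strict convexity of $\Sigma$ makes condition (iii) vacuous, that degree-$2$ interior vertices force flat angles so condition (i) extends to all of $G-V(S)$, and that condition (ii) then holds automatically. The one point you flag as delicate --- upgrading the convex extension to a strictly convex one in the backward direction --- is exactly the part the paper also does not argue in detail, deferring it to Thomassen's original proof, which already treats the strictly convex case.
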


The proof of Theorem~\ref{thm:strict} follows from the proof of Thomassen's Theorem provided in~\cite{thomassen1984plane}. These theorems on convex representations of graphs motivate an extension to convex polygon representations of hypergraphs. We use the face triangulation graph $T(H)$ to connect these theories on graph drawing to hypergraph polygon drawings. First, we must specify the conditions under which a face triangulation graph meets the prerequisites for Theorem~\ref{thm:strict}.

\begin{lemma}\label{lem:planar}
  If the polygon drawing of $H$ corresponds to a plane representation of $K(H)$, then the face triangulation graph $T(H)$ is a planar graph.
\end{lemma}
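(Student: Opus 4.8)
The claim is that the face triangulation graph $T(H)$ built by Procedure~\ref{pro:triangulation} from a polygon drawing of $H$ induced by a plane representation of $K(H)$ is itself a planar graph. The plan is to exhibit an explicit plane drawing of $T(H)$ by reusing the given plane drawing of $K(H)$ together with the polygon drawing of $H$ derived from it. The key observation is that every edge and every new vertex introduced by the procedure is placed inside a region of the plane that is already ``free'' in the hypergraph drawing, so no crossings are created.

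First I would fix the plane drawing of $K(H)$ and the associated polygon drawing of $H$: each hypergraph vertex $v$ sits at the location of its K\"onig vertex $x\in X$, and each hyperedge polygon has its vertices listed in angular order around the K\"onig vertex $y\in Y$. I would first argue that the polygon sides added in step~2 of the procedure do not cross each other or cross into any polygon interior: consecutive vertices of a hyperedge polygon, in angular order around $y$, bound a ``pie slice'' sector at $y$ that contains no other vertex or edge of $K(H)$ incident to $y$, so the chord between them can be routed through that sector; since $y$ lies in the interior of the polygon and the polygon is the union of these sectors, the resulting closed polygonal curve is simple, and distinct hyperedge polygons occupy disjoint faces of $K(H)$ (they meet only at shared vertices), so their boundary curves are interior-disjoint. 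This gives a plane drawing of the subgraph of $T(H)$ consisting of $V(H)$ and the polygon sides. Next I would handle steps~3--4: each interior facial cycle $F_i$ that corresponds to a hypergraph face bounds an open region of the plane not covered by any polygon; placing the new vertex $c_i$ in the interior of that region and joining it by straight segments to the vertices of $F_i$ keeps all those segments inside the region (the region need not be convex, but any simple polygon can be triangulated from a suitably chosen interior point, or one can fan from an ear), hence disjoint from the rest of the drawing and from each other; since distinct such facial regions are disjoint, the spokes for different $c_i$ cannot interfere. Assembling these pieces yields a plane drawing of all of $T(H)$, so $T(H)$ is planar.

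The main obstacle I anticipate is the routing in step~4: an interior facial cycle $F_i$ of the partially built $T(H)$ need not be a convex polygon, nor even star-shaped from an arbitrary point, so one cannot simply pick the centroid and fan out. I would resolve this by appealing to the fact that any simple polygon admits a triangulation, or more directly by using that step~4 only needs the spokes $(c_i,v)$ to be pairwise non-crossing inside the open region bounded by $F_i$; choosing $c_i$ in the kernel if the face is star-shaped, or otherwise subdividing conceptually, suffices. A secondary subtlety is confirming that the facial cycles identified in step~3 really do bound regions disjoint from the polygons and from one another — this follows because those regions are exactly the hypergraph faces in the polygon drawing of $H$, which are by definition the connected components of the plane minus the closed polygons, hence open and pairwise disjoint. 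Once these two points are pinned down, planarity of $T(H)$ is immediate from the constructed drawing.
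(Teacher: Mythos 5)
Your proof is correct and follows essentially the same route as the paper's: both obtain the polygon-side edges by routing them along the two K\"onig edges through the shared hyperedge-vertex $y$ (the paper phrases this as ``smoothing'' $y$ and additionally checks that no $K_5$ or $K_{3,3}$ subdivision arises), and both then insert each face-triangulation vertex $c_i$ inside the corresponding bounded face region, which preserves planarity. Your explicit care with the step-4 routing (using arcs rather than straight segments, since the face need not be star-shaped) addresses a point the paper glosses over, but it does not change the substance of the argument.
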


\begin{proof}
  Since the K{\"o}nig graph $K(H)=(X,Y,D)$ is planar, we know that it does not contain a subgraph homeomorphic to $K_5$ or $K_{33}$. Let us augment $K(H)$ by adding an edge $(u,v)$ for every $u,v \in X$ such that $(u,v)$ forms the side of a polygon drawing of $H$. This augmentation cannot create a subgraph homeomorphic to $K_{33}$ since it does not add any bipartite edges. Notice that each hyperedge in $H$ now corresponds to a wheel subgraph in $K(H)$ which is a planar graph. Since no edges are added between vertices in $Y$, it follows that a subgraph homeomorphic to $K_5$ cannot contain a vertex $y \in Y$ as a non-subdivision vertex. Further, each edge added between a pair of vertices $u,v \in X$ can already be obtained in $K(H)$ by smoothing their common adjacent vertex in $Y$. Thus, our augmentation of $K(H)$ does not affect its planarity. Now let us further augment $K(H)$ by removing all the vertices in $X$ and all the edges in $D$. Then we are left with only the vertices corresponding to hypergraph vertices in $H$ and edges corresponding to the sides of polygons in the drawing of $H$. Clearly, this does not affect the planarity of $K(H)$. We can now obtain the face triangulation graph $T(H)$ by triangulating each interior face of the augmented graph $K(H)$ that corresponds to a hypergraph face in the drawing of $H$. Since an interior face is a bounded region, it follows that we can place the new vertex inside the bounded region and add edges according to step 4 of Procedure~\ref{pro:triangulation} without introducing any edge crossings. Thus, our construction of $T(H)$ is a planar graph.
\end{proof}

\begin{lemma} \label{lem:biconnected}
  If the polygon drawing of $H$ corresponds to a plane representation of $K(H)$, then the face triangulation graph $T(H)$ is biconnected.
\end{lemma}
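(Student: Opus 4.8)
The plan is to prove \Cref{lem:biconnected} in two stages. First I would show that the \emph{polygon skeleton} $P(H)$ --- the subgraph of $T(H)$ consisting of all vertices of $H$ together with exactly the polygon-side edges added in step~2 of Procedure~\ref{pro:triangulation} --- is biconnected. Then I would argue that adjoining the face-center vertices $c_i$ from step~4 cannot destroy biconnectedness, so $T(H)$ is biconnected as well. Throughout, the hypothesis that the drawing corresponds to a plane representation of $K(H)$ is used to guarantee that $P(H)$ is a plane graph in which each hyperedge polygon is a non-crossing cycle and the faces are exactly the hyperedge interiors, the hypergraph faces, and the exterior region.

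For the first stage, $P(H)$ is connected: every vertex of $H$ lies on the boundary polygon of some hyperedge of cardinality at least $2$ (a vertex incident only to cardinality-$1$ hyperedges would be isolated in $H$, contradicting connectedness, which holds since $H$ is biconnected as assumed throughout this section), each such polygon is a cycle of $P(H)$ because its vertex set is a genuine set, two polygons sharing a vertex lie in one component, and connectedness of $H$ then ties all components together. To see that $P(H)$ has no cut vertex, suppose $v$ were one. Deleting $v$ from the polygon cycle of a hyperedge $e\ni v$ of cardinality $\geq 3$ leaves a path through $V_e\setminus\{v\}$ that survives in $P(H)-v$, so the non-$v$ vertices of each hyperedge fall into a single component of $P(H)-v$; moreover, any two hyperedges sharing a vertex other than $v$ land in the same component. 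Since $H$ is biconnected, $H-v$ is connected, so every hyperedge still containing a vertex $\neq v$ is linked to every other by an alternating walk in $H-v$, forcing $P(H)-v$ to be connected --- a contradiction. The degenerate configurations ($|V(H)|\leq 2$, cardinality-$2$ ``digons'', cardinality-$1$ ``monogons'') are dispatched using the single-hyperedge convention for biconnectedness and the fact that monogons contribute no edges to $P(H)$.

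For the second stage, since $P(H)$ is biconnected each interior facial cycle $F_i$ identified in step~3 of Procedure~\ref{pro:triangulation} is a genuine cycle of the plane graph $P(H)$, and $T(H)$ is obtained by repeatedly adjoining a new vertex $c_i$ joined to every vertex of $F_i$. Adjoining a vertex with at least two neighbours to a biconnected graph again yields a biconnected graph, so $T(H)$ stays biconnected; concretely, a putative cut vertex of $T(H)$ must be some $c_i$ --- impossible, since $F_i$ still connects all neighbours of $c_i$ in $T(H)-c_i$ --- or a vertex $v$ of $P(H)$ --- also impossible, because $P(H)-v$ is connected by the first stage, each $c_j$ with $v\notin V(F_j)$ stays attached to it via $F_j$, and each $c_j$ with $v\in V(F_j)$ stays attached via the nonempty path $F_j-v$. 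Hence $T(H)$ is biconnected.

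I expect the main obstacle to be the first stage: faithfully converting biconnectedness of the hypergraph $H$ into biconnectedness of the polygon-side graph $P(H)$, in particular tracking how a hyperedge polygon ``breaks'' into a path when one of its vertices is deleted, and checking that low-cardinality hyperedges and very small hypergraphs do not create spurious separations. Once that is settled, the second stage is routine.
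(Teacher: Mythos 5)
Your proof is correct and follows essentially the same route as the paper's: the hyperedge polygons are cycles, hence biconnected subgraphs of $T(H)$, and the biconnectedness of $H$ glues their union into a biconnected graph. In fact your write-up is more complete than the paper's one-line argument, which stops at the union of the polygon cycles and never checks that adjoining the face-triangulation vertices $c_i$ preserves biconnectedness --- your second stage supplies exactly that step.
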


\begin{proof}
  Each hyperedge in $e \in E(H)$ is replaced by a cycle $C_e$ in $T(H)$ following the sides of the corresponding polygon in the drawing of $H$. Each cycle $C_e$ defines a biconnected subgraph of $T(H)$. Since $H$ is biconnected by assumption, it follows that the union of all cycles $C_e$ for $e \in E(H)$ is also biconnected.
\end{proof}

We now have the appropriate conditions to make the connection between Theorem~\ref{thm:strict} and convex polygon representations of hypergraphs.

\begin{theorem} \label{thm:biconnected}
  Let $H$ be a biconnected, Zykov planar hypergraph. Then $H$ has a convex polygon representation if and only if it has a face triangulation graph $T(H)$ with a strictly convex representation.
\end{theorem}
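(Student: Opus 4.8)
The plan is to prove Theorem~\ref{thm:biconnected} by establishing the two directions of the equivalence, using the face triangulation graph $T(H)$ as the bridge between hypergraph polygon drawings and Thomassen's characterization (Theorem~\ref{thm:strict}). For the forward direction, suppose $H$ has a convex polygon representation. This representation is in particular a planar polygon drawing of $H$, and (by the remarks preceding the theorem, together with the assumption that $H$ is Zykov planar) it corresponds to some plane representation of $K(H)$. Hence Lemmas~\ref{lem:planar} and~\ref{lem:biconnected} apply: $T(H)$ is planar and biconnected. I would then take $S$ to be the exterior facial cycle of $T(H)$ in this drawing and $\Sigma$ the (strictly) convex polygon that its boundary traces out in the given convex polygon representation — here I need to argue that the exterior boundary of a convex polygon representation is itself strictly convex, which should follow because each hyperedge polygon on the outer boundary is strictly convex and the added triangulation vertices/edges lie in the interior. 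Every interior face of $T(H)$ is either a triangle (from step 4 of Procedure~\ref{pro:triangulation}) or lies inside a strictly convex hyperedge polygon, so the existing drawing is already a strictly convex representation extending $\Sigma$; thus $T(H)$ has a strictly convex representation.

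For the reverse direction, suppose $T(H)$ has a strictly convex representation. By construction, each hyperedge $e$ of $H$ corresponds to the cycle $C_e$ in $T(H)$ bounding the polygon with vertex set $V_e$; in a strictly convex representation of $T(H)$ these cycles are drawn as strictly convex polygons. I would argue that the faces introduced in step 4 (the "fan" triangles around each $c_i$) are drawn without overlap because $T(H)$'s drawing is plane, and then simply delete the added vertices $c_i$ and their incident edges, as well as the polygon-side edges, to recover a drawing of $H$ in which each hyperedge is a strictly convex polygon. The key point to check is that no two hyperedge polygons overlap: any overlap region would have to appear as a region of $T(H)$'s drawing covered twice, contradicting planarity — more carefully, two overlapping strictly convex polygons $C_e, C_f$ would force an edge crossing or a vertex of one lying strictly inside the other, and since $T(H)$'s representation is a genuine plane embedding with all of $V_e \cup V_f$ among its vertices, neither can happen. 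This yields a convex polygon representation of $H$.

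The main obstacle, I expect, is the bookkeeping in the reverse direction: translating "the plane embedding of $T(H)$ has no crossings" into "the hyperedge polygons have pairwise zero-area intersection," since the polygons of $H$ are unions of triangular faces of $T(H)$ (the polygon interior is tiled by the fan triangles around its centroid vertex $y$, which is \emph{not} in $T(H)$ — rather the interior faces $F_i$ are the hypergraph faces, and the polygon interiors are the complementary regions). I would need to be careful about exactly which faces of $T(H)$ correspond to hyperedge interiors versus hypergraph faces, and confirm that the edges of $T(H)$ partition the plane so that hyperedge interiors are exactly unions of the non-$F_i$ bounded faces. A secondary subtlety is the choice of outer face $S$ in the forward direction and verifying Thomassen's three-paths condition is automatically satisfied — but since we are handed an actual strictly convex representation rather than needing to construct one, condition (i) of Theorem~\ref{thm:strict} is witnessed by the drawing itself and need not be checked combinatorially. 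I would also note that this theorem only handles the biconnected case; the articulation-vertex argument (item (3) in the proof outline for Theorem~\ref{thm:forbidden}) is deferred and used later to reduce the general case to this one.
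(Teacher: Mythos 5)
Your reverse direction (strictly convex representation of $T(H)$ $\Rightarrow$ convex polygon representation of $H$) is essentially the paper's argument and is fine: the hyperedge boundaries are facial cycles of $T(H)$, so in a strictly convex representation they are disjoint strictly convex polygons, and deleting the triangulation vertices recovers the drawing of $H$. The gap is in your forward direction. You claim that the drawing of $T(H)$ induced by the given convex polygon representation of $H$ ``is already a strictly convex representation extending $\Sigma$,'' with $\Sigma$ the geometric outer boundary. This fails on two counts. First, the exterior boundary of a convex polygon representation of $H$ is in general not convex (two triangles sharing a vertex already produce a reflex outer boundary), so the induced drawing does not have a convex outer face and is not a convex representation in Thomassen's sense; your proposed justification (``each hyperedge polygon on the outer boundary is strictly convex'') does not address the non-convexity of their union. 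Second, the hypergraph faces $F_i$ are complements of unions of convex polygons and need not be star-shaped, so step 4 of Procedure~\ref{pro:triangulation} may not be realizable with straight segments at the given vertex positions; the induced embedding of $T(H)$ is only combinatorial. Consequently you cannot read condition (i) of Theorem~\ref{thm:strict} off the drawing ``for free'' -- that inference would require the drawing to already be a strictly convex representation, which is exactly what is not available.

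What the paper does instead -- and what your proposal is missing -- is to use the given representation only to fix the planar embedding of $T(H)$ (via Lemmas~\ref{lem:planar} and~\ref{lem:biconnected}), choose a fresh strictly convex polygon $\Sigma$ for the combinatorial outer cycle $S$, and then verify Thomassen's three-disjoint-paths condition combinatorially. That verification is the substantive part of the proof: one assumes a vertex $x$ of $T(H)-V(S)$ is separated from $S$ by a pair $\{u,v\}$, examines the two faces $F_1,F_2$ enclosing the resulting component, observes that they cannot both be hyperedge polygons (they would have to share the side $(u,v)$), and uses the triangulation vertex of whichever face is a hypergraph face to exhibit a path escaping the component, iterating until $S$ is reached. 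Without some argument of this kind, the forward direction of the theorem is unproved.
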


\begin{figure}[!t]
  \centering
  \subfloat[][]{\includegraphics[height=1.75in]{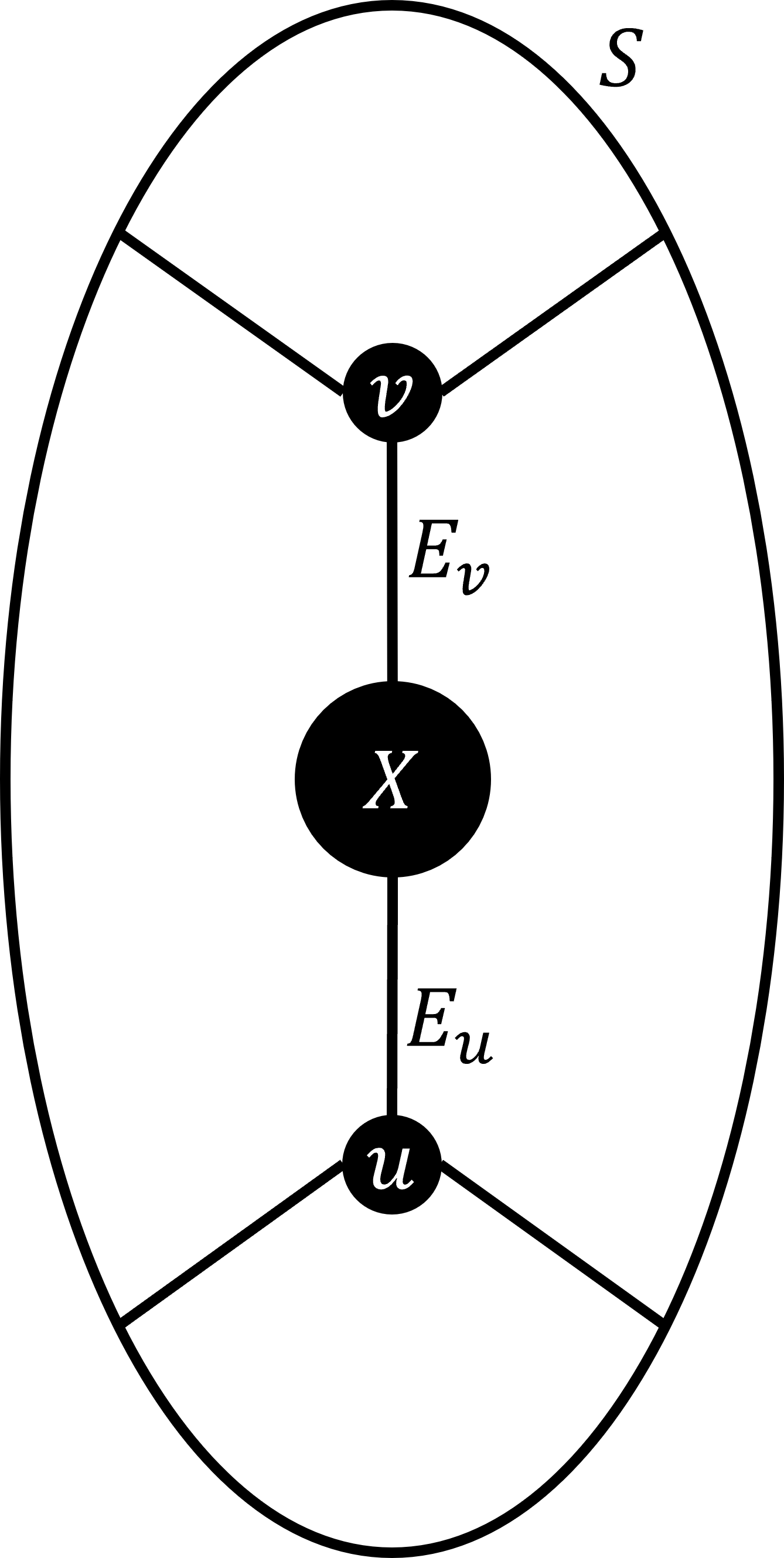}} \hspace{0.125in}
  \subfloat[][]{\includegraphics[height=1.75in]{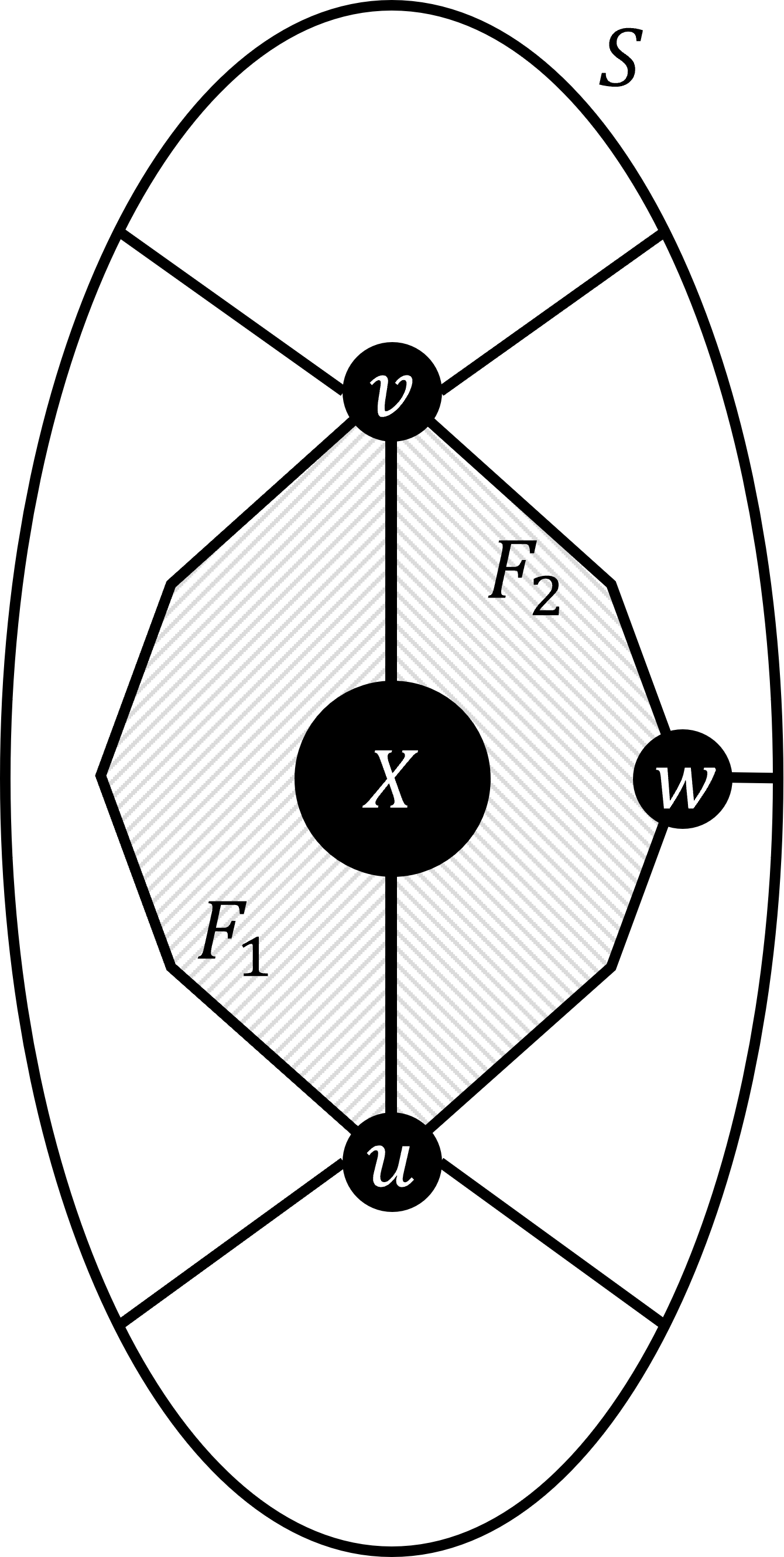}} \hspace{0.125in}
  \subfloat[][]{\includegraphics[height=1.75in]{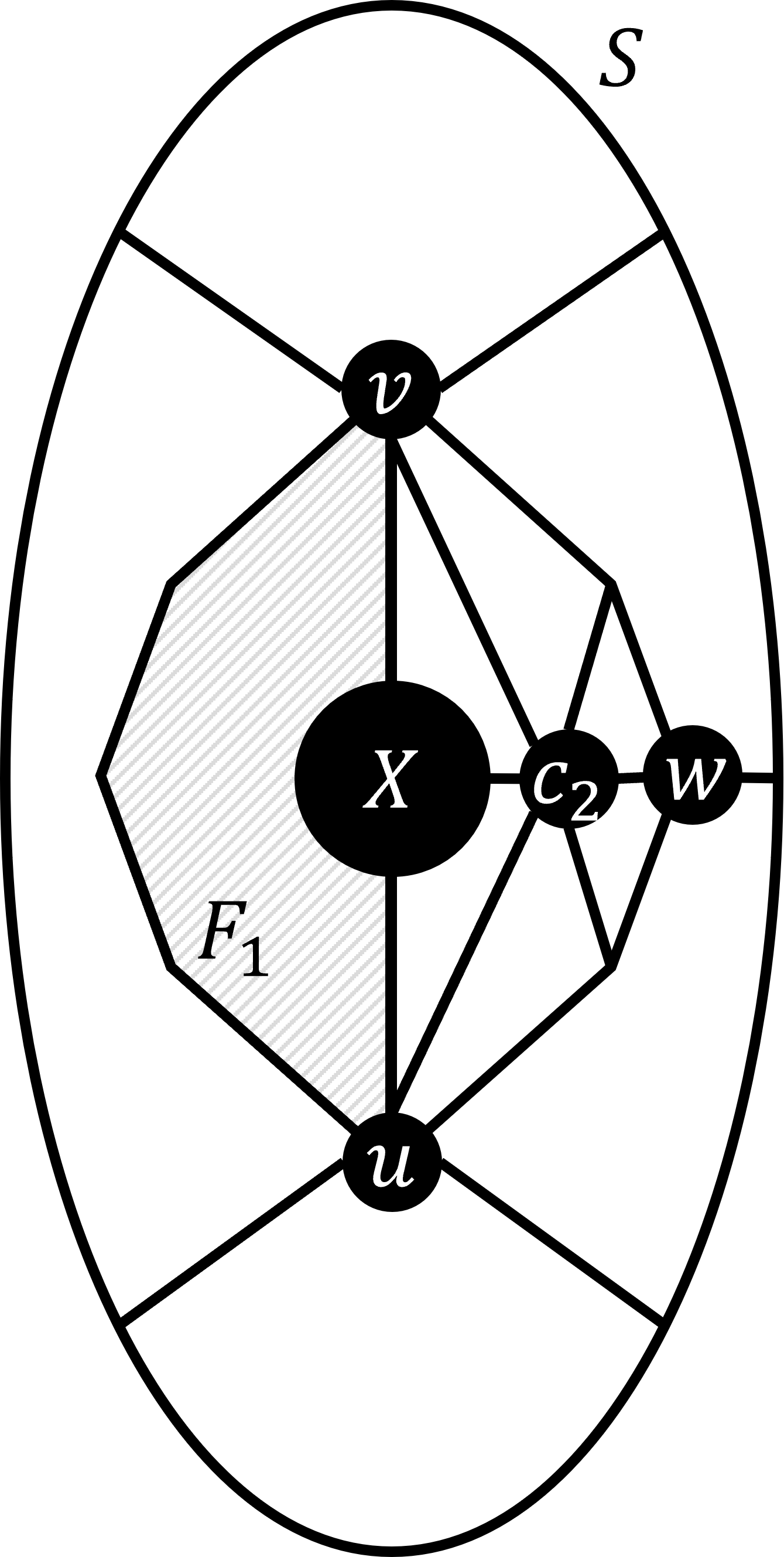}} \\
  \subfloat[][]{\includegraphics[height=1.75in]{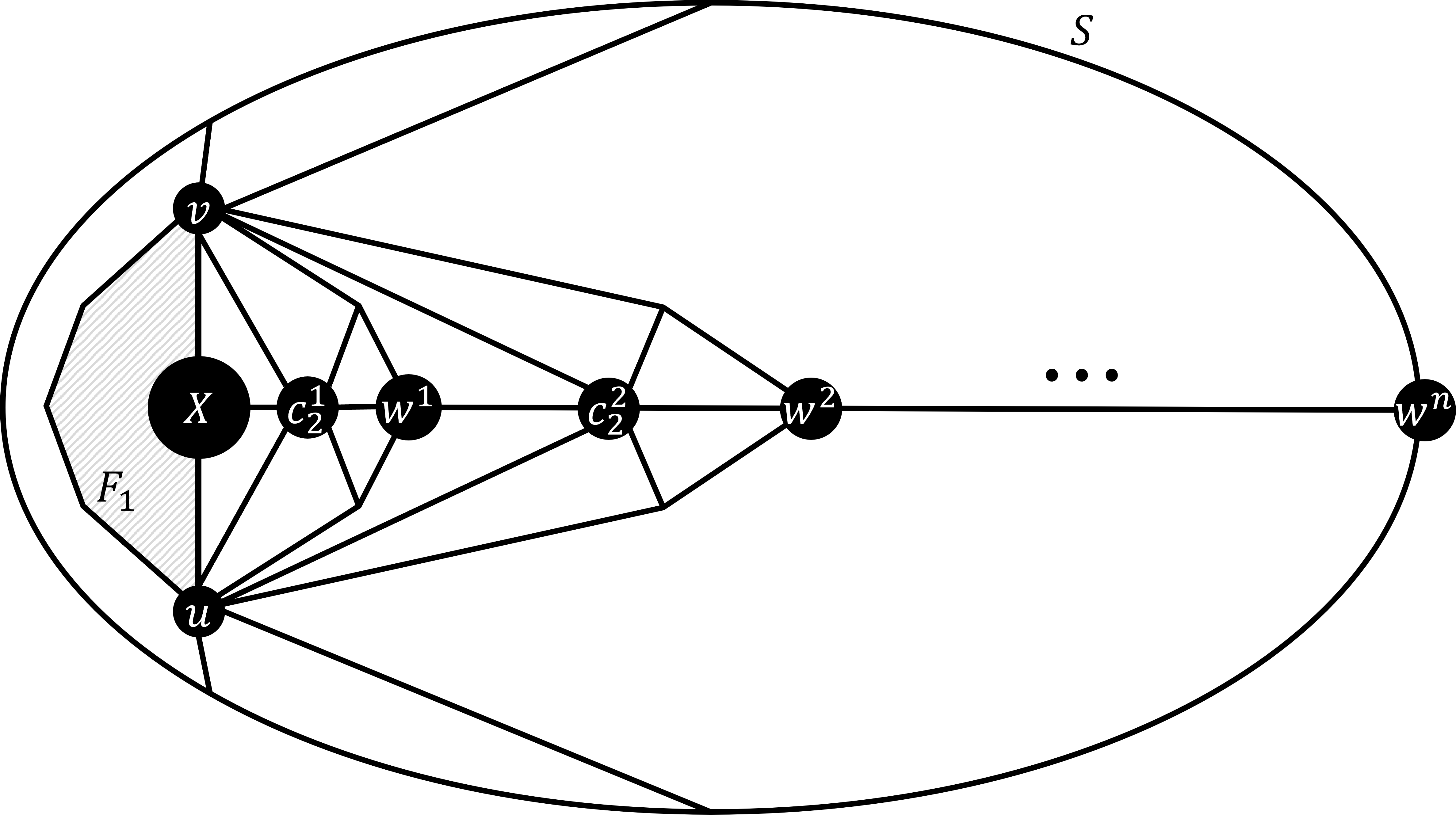}}
  \caption{Illustration of our argument for the proof of Theorem~\ref{thm:biconnected}. In (a), we illustrate the subgraph $X$ in the face triangulation graph which is joined to the exterior face $S$ through two vertices $u$ and $v$. In (b), we illustrate the faces $F_1,F_2$ enclosing $X$. In (c) we illustrate how the triangulation of $F_2$ connects $X$ to a vertex $w$ on $F_2$ through the triangulation vertex $c_2$. In (d) we illustrate how we construct a path from $X$ to $S$ through a sequence of face triangulations $(c_2^1,w^1,c_2^2,w^2,\dots,c_2^n,w^n)$ that does not include $u$ or $v$.}\label{fig:thm3}
\end{figure}

\begin{proof}
  We first show that if $H$ has a face triangulation graph with a strictly convex representation, then it has a convex polygon representation. Suppose there is a drawing of $H$ such that the face triangulation graph $T(H)$ has a strictly convex representation. By the construction of $T(H)$, each hyperedge in $H$ corresponds to a strictly convex facial cycle in the drawing of $T(H)$. It follows that if we draw each hyperedge in $H$ as a polygon following the corresponding facial cycle in $T(H)$, we can obtain a convex polygon representation of $H$.

  Now we show that if $H$ has a convex polygon representation, then it has a face triangulation graph with a strictly convex representation. Suppose that $H$ has a convex polygon representation. Let $T(H)$ be constructed from this representation of $H$ according to Procedure~\ref{pro:triangulation}. By Lemmas~\ref{lem:planar} and \ref{lem:biconnected}, we have that $T(H)$ is planar and biconnected. Let $T(H)$ be drawn according to the convex polygon representation of $H$, and let $S$ be the exterior facial cycle of this drawing. Let $\Sigma$ be a convex polygon representing $S$. Then by Theorem~\ref{thm:strict}, $\Sigma$ can be extended to a graph isomorphic to $T(H)$ with strictly convex facial cycles if and only if each vertex $x \in T(H) - V(S)$ is joined to $S$ by three paths that are disjoint except for $x$. Let $x \in T(H) - V(S)$. Since $T(H)$ is biconnected, there must be at least two vertex-disjoint paths from $x$ to $V(S)$. In the following paragraphs, we prove that $x$ is joined to $S$ through at least three disjoint paths by contradiction.

  Suppose there are two vertices $u,v \in T(H)$ such that every path from $x$ to $V(S)$ includes $u$ or $v$. This implies that if $u$ and $v$ were removed, $T(H)$ would become disconnected, and $x$ would be in a separate connected component from $S$. Let $X$ be the connected component of $T(H)$ containing $x$ when $u$ and $v$ are removed. Let $E_u,E_v$ be the sets of edges between $u$ and $V(X)$, and $v$ and $V(X)$ respectively (\Cref{fig:thm3} (a)). With $T(H)$ drawn according to the convex polygon representation of $H$, it must be that $X$ is drawn in the interior of $S$. It follows that $X$ is enclosed by two faces $F_1$ and $F_2$ in the drawing of $T(H)$ whose boundaries contain $u$ and $v$ (\Cref{fig:thm3} (b)). Notice that $F_1$ and $F_2$ cannot both correspond to strictly convex polygons in the convex polygon representation of $H$ since such polygons would necessarily share a side $(u,v)$. This configuration would preclude $X$ from being incident to both $F_1$ and $F_2$ while also being enclosed by $F_1$ and $F_2$. So, it must be that either $F_1$, $F_2$, or both are faces in the drawing of $T(H)$ that correspond to a hypergraph face in a convex polygon representation of $H$.

  Consider the case where $F_1$ corresponds to a polygon in the convex polygon representation of $H$ and $F_2$ does not. In order for $F_2$ to be drawn as a simple polygon, which must be the case since our drawing of $T(H)$ is planar, it must be that the boundary of $F_2$ contains at least one vertex $w \neq u,v$, $w \notin V(X)$ (\Cref{fig:thm3} (b)). Similarly, in the case where neither $F_1$ nor $F_2$ correspond to a polygon in the convex polygon representation of $H$, it must be that the boundary of either $F_1$ or $F_2$ contains at least one vertex $w \neq u,v$, $w \notin V(X)$. Without loss of generality, suppose that $F_2$ does not correspond to a polygon in the convex representation of $H$, and that the boundary of $F_2$ contains such a vertex $w \neq u,v$, $w \notin V(X)$. In this case, the construction of $T(H)$ would have added a vertex $c_2$ to the interior of $F_2$ and edges $(c_2,y)$ for every vertex $y$ on the boundary of $F_2$. Then $X$ would be connected to the vertex $w$ through a path containing $w$ which contradicts our observation that $w \notin X$ (\Cref{fig:thm3} (c)).

  We can apply the same argument to an updated subgraph $X$ and new faces $F_1$ and $F_2$ enclosing $X$. In this way, we can grow $X$ with a sequence of vertices $(w_1,w_2,\dots,w_n)$ until $w_n$ is a vertex on $S$, which is possible assuming $H$ and $T(H)$ are finite (\Cref{fig:thm3} (d)). Thus, we have shown that there must exist a path from $x$ to $w_n \in V(S)$ that does not contain the vertices $u$ or $v$. This contradicts our assumption that every path from $x$ to $S$ passes through $u$ or $v$. Therefore, each vertex $x \in T(H)-V(S)$ is joined to $S$ by three paths that are disjoint except for $x$, and $\Sigma$ can be extended to a strictly convex representation of $T(H)$ by Theorem~\ref{thm:strict}.
\end{proof}

We also wish to consider convex polygon planarity for hypergraphs that are not biconnected. To do this, we must address the placement of articulation vertices between biconnected hypergraph components, requiring the following lemma.

\begin{lemma} \label{lem:faces}
  Let $H$ be a hypergraph with exterior face $S$ and interior faces $R = \{F_1,F_2,\dots,F_n\}$ for some convex polygon representation of $H$. Then $H$ also has convex polygon representations for each face $F_i \in R$ such that $F_i$ is the exterior face and $R-F_i+S$ are the interior faces.
\end{lemma}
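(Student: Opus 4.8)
The plan is to reduce the statement, via the face triangulation graph, to Thomassen's characterization of strictly convex graph drawings (Theorem~\ref{thm:strict}), exploiting the fact that that characterization makes no distinction between the chosen outer face and the other facial cycles. Given the convex polygon representation of $H$ with exterior hypergraph face $S$, I would fix the underlying combinatorial plane embedding of $H$ and re-draw $H$ with the \emph{same} embedding but with the target face $F_i$ as the unbounded face; call this combinatorial drawing $\mathcal{D}$. This is a purely topological move, always available, and by construction the interior hypergraph faces of $\mathcal{D}$ are exactly $R - F_i + S$ and its exterior hypergraph face is $F_i$. Build the face triangulation graph $T'(H)$ from $\mathcal{D}$ by Procedure~\ref{pro:triangulation}. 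Since $\mathcal{D}$ still corresponds to a plane representation of $K(H)$, $T'(H)$ is again a face triangulation graph of $H$, so Lemmas~\ref{lem:planar} and \ref{lem:biconnected} give that it is planar and (when $H$ is biconnected, see below) biconnected, and its exterior facial cycle is the boundary cycle $\partial F_i$, which has length at least $3$.

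The heart of the argument is then to apply Theorem~\ref{thm:strict} to $T'(H)$ with a strictly convex polygon $\Sigma$ realizing $\partial F_i$: this succeeds precisely when every vertex $x \in T'(H) - V(\partial F_i)$ is joined to $\partial F_i$ by three paths that are pairwise disjoint except at $x$. But this three-path condition is exactly what the proof of Theorem~\ref{thm:biconnected} establishes, and that argument uses only that the graph in question is a biconnected, planar face triangulation graph of $H$ in which every face other than a hyperedge-polygon face is a fan triangle whose apex is joined to all of its boundary vertices, together with the observation that two hyperedge-polygon faces cannot both enclose a separating pair $\{u,v\}$, since they would have to share the side $(u,v)$. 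None of these features single out the original outer face $S$, so the \emph{identical} argument, carried out for $T'(H)$ with $\partial F_i$ playing the role of the outer facial cycle, yields the three-path condition. Hence $\Sigma$ extends to a strictly convex representation of $T'(H)$ with $\partial F_i$ as the outer cycle, and drawing each hyperedge of $H$ along its corresponding facial cycle in this representation (as in the backward direction of Theorem~\ref{thm:biconnected}) produces a convex polygon representation of $H$ in which $F_i$ is the exterior face and the interior faces are $R - F_i + S$, as required.

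It remains to dispense with biconnectedness. One option is to strengthen Lemma~\ref{lem:biconnected} to all connected $H$: at each articulation vertex $v$, the faces of $\mathcal{D}$ straddling two of the blocks meeting at $v$ are uncovered, hence triangulated, so their apex vertices bridge those blocks and no articulation vertex of $H$ survives in $T'(H)$, making $T'(H)$ biconnected regardless. Alternatively, one can induct on the number of biconnected components of $H$, re-drawing only the components that separate $F_i$ from $S$ (each handled by the biconnected case) and re-gluing at the shared articulation vertices. I expect the main obstacle to be the middle step --- being sure that the three-path argument of Theorem~\ref{thm:biconnected} really is symmetric in the choice of outer face, which requires re-deriving it from $T'(H)$ built from the \emph{target} drawing $\mathcal{D}$ (rather than from the original $T(H)$) and re-checking that $T'(H)$ carries all the structural features that argument relies on. The bookkeeping in the non-biconnected reduction is a secondary and more routine technical point.
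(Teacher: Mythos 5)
Your approach is exactly the one the paper intends: the paper offers no proof of this lemma beyond the single remark that it ``can be proven in a similar manner to Theorem~\ref{thm:biconnected},'' and your proposal --- re-embed with $F_i$ as the unbounded face, rebuild the face triangulation graph from that target drawing, re-run the three-disjoint-paths argument, and invoke Theorem~\ref{thm:strict} --- is a faithful and considerably more detailed elaboration of that remark. The subtleties you flag (that the three-path condition must be re-derived for $T'(H)$ built from the new drawing rather than inherited from the original $T(H)$, and the reduction for non-biconnected $H$) are genuine loose ends, but the paper leaves them entirely unaddressed, so your write-up already goes beyond what the paper itself provides.
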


This lemma can be proven in a similar manner to Theorem~\ref{thm:biconnected}. Now we can extend Theorem~\ref{thm:biconnected} to a more general class of connected hypergraphs if we consider each biconnected component individually.

\begin{theorem} \label{thm:components}
  Let $H$ be a Zykov planar hypergraph with $k$ biconnected components. Let $\{B_1,B_2,\dots,B_k\}$ be the sub-hypergraphs induced by the $k$ biconnected components $V(B_i) \subseteq V(H)$. Then $H$ has a convex polygon representation if and only if each sub-hypergraph $B_i$ has a convex polygon representation where every vertex $x \in V(B_i)$ that is also an articulation vertex of $H$ is located on a face boundary of some convex polygon representation of $B_i$.
\end{theorem}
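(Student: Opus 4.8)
The plan is to reduce \Cref{thm:components} to the biconnected case via the block--cut tree of $H$, using \Cref{lem:faces} to reorient the drawing of a single block so that a prescribed articulation vertex lies on its exterior face. Throughout, ``on a face boundary'' is understood in the hypergraph sense of the excerpt: on the boundary of a region of the plane not covered by any polygon. This is the real content of the condition, since a vertex may otherwise be completely enclosed by its incident polygons. I will also read the hypothesis as providing, for each block $B_i$, a \emph{single} convex polygon representation of $B_i$ in which \emph{every} articulation vertex of $H$ lying in $V(B_i)$ is on a face boundary; the forward direction will in fact exhibit such a representation, so this reading is the one the statement needs.

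For the forward direction, suppose $H$ has a convex polygon representation $\Gamma$. Deleting from $\Gamma$ every polygon whose hyperedge does not belong to $B_i$ leaves a drawing $\Gamma_i$ in which each polygon is still strictly convex and no two polygons overlap, so $\Gamma_i$ is a convex polygon representation of $B_i$. Let $x \in V(B_i)$ be an articulation vertex of $H$. Then $x$ belongs to some other block $B_j$ with $j \neq i$, and the union of the polygons of $B_j$ with the point $x$ deleted is a nonempty connected subset of the plane meeting the drawing of $B_i$ only at $x$; by planarity of $\Gamma$ it lies inside a single face of $\Gamma_i$, necessarily one whose closure contains $x$, hence (since $x$ is a vertex of $\Gamma_i$) a face whose boundary contains $x$. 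That face is a region of $\Gamma_i$ uncovered by the polygons of $B_i$, i.e.\ a hypergraph face. So $x$ lies on a face boundary of the convex polygon representation $\Gamma_i$ of $B_i$, and this holds simultaneously for all such $x$.

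For the reverse direction, assume each $B_i$ has a convex polygon representation $\Gamma_i$ of the above kind. Build the block--cut tree $\mathcal{T}$ of $H$, root it at an arbitrary block, and process the blocks in breadth-first order, maintaining a convex polygon representation of the union of the blocks processed so far. Draw the root as $\Gamma_i$. When a block $B_j$ is reached, it attaches to its parent at a single articulation vertex $x \in V(B_j)$; take the representation $\Gamma_j$ and, if $x$ lies on the boundary of an interior face there, apply \Cref{lem:faces} to obtain a convex polygon representation of $B_j$ in which $x$ lies on the \emph{exterior} face boundary, while every other articulation vertex of $H$ in $V(B_j)$ still lies on some face boundary (the set of faces is unchanged; only the exterior/interior labelling changes). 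In the current partial drawing, $x$ lies on the boundary of some hypergraph face $F$: this was true when the parent block was drawn, and each earlier attachment at $x$ only subdivides the free angular sector of a face at $x$ rather than eliminating it. Rescale and rotate the drawing of $B_j$ so that it fits inside a free angular sub-sector of $F$ at $x$, identifying $x$ with $x$; because $B_j$ now lies entirely in a disk whose only point on the designated sector boundary is $x$, no new polygon overlaps are created, and strict convexity is preserved under the affine rescaling. Iterating over all nodes of $\mathcal{T}$ yields a convex polygon representation of $H$.

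The main obstacle is the bookkeeping of the reverse induction: confirming that at every attachment there is genuinely a free angular sector of a hypergraph face at the relevant articulation vertex into which a shrunk copy of the incoming block fits without meeting the parent block or previously attached sibling blocks at the same cut vertex, and that the faces needed to host still-unprocessed descendants survive each embedding. Handling a cut vertex shared by many blocks --- so that its angular neighbourhood must be repeatedly subdivided while always leaving room --- together with checking that \Cref{lem:faces} and the rescaling both preserve the hypergraph-face structure on which these sector arguments rely, is the part that requires the most care, though it is conceptually routine once the sector construction is in place.
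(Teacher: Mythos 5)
Your proposal is correct and follows essentially the same route as the paper: restrict the drawing of $H$ to each block for one direction, and for the other use Lemma~\ref{lem:faces} to move the attaching articulation vertex onto the exterior face and nest each child block inside a face of its parent whose boundary contains that vertex. The only differences are cosmetic --- you argue the restriction direction directly (via the connectedness of the other blocks' polygons minus $x$) where the paper uses a short contradiction through Lemma~\ref{lem:faces}, and you make explicit the block--cut-tree and angular-sector bookkeeping that the paper leaves implicit in its ``repeat this process'' step.
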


\begin{proof}
  To prove Theorem~\ref{thm:components} in the forward direction, suppose that each sub-hypergraph $B_i$ has a convex polygon representation where every vertex $x \in V(B_i)$ that is also an articulation vertex of $H$ is located on a face boundary of the convex polygon representation of $B_i$. Then we can construct a convex polygon representation for $H$ by starting with the convex polygon representation of $B_i$. Now consider a sub-hypergraph $B_j$ incident to $B_i$ through the articulation vertex $x \in V(H)$. By Lemma~\ref{lem:faces}, $B_j$ has a convex polygon representation where the face boundary containing $x$ is the exterior face boundary. It follows that we can draw $B_j$ with this representation inside the face in $B_i$ whose boundary contains $x$ without introducing any polygon intersections (\Cref{fig:biconnected}). We can repeat this process until each of the biconnected sub-hypergraphs is drawn with an appropriate convex polygon representation.

  To prove Theorem~\ref{thm:components} in the reverse direction, suppose that $H$ has a convex polygon representation. Clearly, a convex polygon representation of any sub-hypergraph $B$ can be obtained by removing the vertices and hyperedges not in $B$ from the convex polygon representation of $H$. Let $x$ be an articulation in $H$ belonging to biconnected sub-hypergraphs $B_i$ and $B_j$. To reach a contradiction, suppose that $x$ is not located on a face boundary of some convex polygon representation of $B_i$. Lemma~\ref{lem:faces} implies that $x$ is not located on a face boundary for any convex polygon representation of $B_i$. It follows that if $B_i$ is drawn with a convex polygon representation, there must be some intersection between a pair of hyperedge polygons $e_i \in E(B_i)$ and $e_j \in E(B_j)$ incident to the articulation vertex $x$. This contradicts our assumption that $H$ has a convex polygon representation.
\end{proof}

\begin{figure}[tbp]
  \centering
  \includegraphics[height=1.5in]{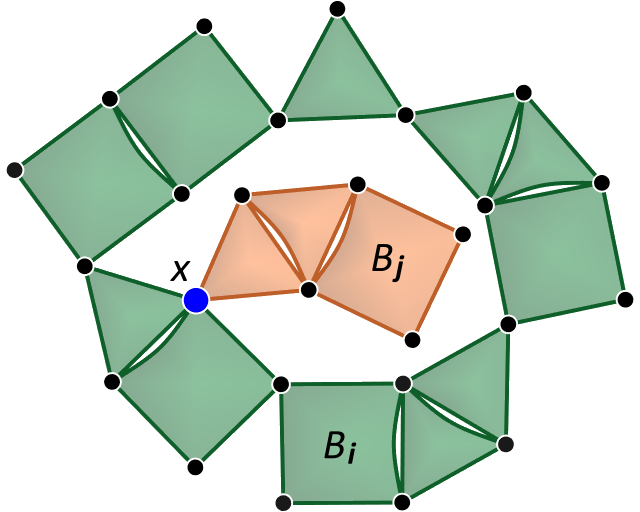}
  \caption{Two biconnected sub-hypergraphs $B_i$ and $B_j$ incident through an articulation vertex $x$ can be drawn without intersection if $x$ is on a face boundary of both sub-hypergraphs.}
  \label{fig:biconnected}
\end{figure}

Finally, we restate our main result for convex polygon representations of hypergraphs.

\begin{theorem}
(Theorem~\ref{thm:forbidden} from the main paper:) Let $H$ be a Zykov planar hypergraph. Then $H$ has a convex polygon representation if and only if it does not contain any of the following as a sub-hypergraph:
  \begin{enumerate}[nosep]
    \item[(a)] A 3-adjacent cluster of 2 hyperedges,
    \item[(b)] A 2-adjacent cluster of 3 hyperedges,
    \item[(c)] A strangled vertex,
    \item[(d)] A strangled hyperedge.
  \end{enumerate}
  \label{thm:forbidden_appendix}
\end{theorem}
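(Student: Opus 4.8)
The plan is to prove the two implications separately, leaning on the results already in hand: Theorem~\ref{thm:components} reduces everything to biconnected components, Theorem~\ref{thm:biconnected} converts the existence of a convex polygon representation of a biconnected $H$ into the existence of a strictly convex representation of its face triangulation graph, and Theorem~\ref{thm:strict} characterizes the latter by Thomassen's three-path condition.

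For the forward implication --- a convex polygon representation of $H$ forbids (a)--(d) --- I would first note that a convex polygon representation restricts to any partial sub-hypergraph or neighborhood, since a corner subset of a strictly convex polygon is again strictly convex and pairwise intersection areas only shrink; hence it suffices to rule out each configuration on its own. If two hyperedges share three vertices $a,b,c$, these are non-collinear (corners of strictly convex polygons), so the non-degenerate triangle $\mathrm{conv}\{a,b,c\}$ lies inside both polygons, and case (a) is impossible. If three hyperedges all contain $u,v$, each of the three polygons covers at least a half-disk around the midpoint of $\overline{uv}$ on one side of the line through $u$ and $v$; with only two sides available, two of the polygons share such a half-disk, and case (b) is impossible. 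For a strangled vertex $w$ with encircling cycle $e_1,\dots,e_n$ (with $v_{i+1}\in e_i\cap e_{i+1}$), non-overlap forces each segment $\overline{w\,v_{i+1}}$ to be a side of both $e_i$ and $e_{i+1}$, so $e_1,\dots,e_n$ are angularly consecutive about $w$ and close up into a full turn, leaving no angular room for the further hyperedge incident to $w$ that a proper subset guarantees; case (c) is impossible. Case (d) is dual: for a strangled hyperedge $e$ with encircling cycle $f_1,\dots,f_n$ through $v_1,\dots,v_n\in V_e$, each $\overline{v_i v_{i+1}}$ must be a side of the polygon $e$ (a chord of $e$ runs through its interior, which $f_i$ would then overlap), so $v_1,\dots,v_n$ are consecutive corners of $e$ forming a closed cycle --- all of its corners, contradicting properness.

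For the reverse implication, assume $H$ is Zykov planar with no sub-hypergraph of type (a)--(d). By Theorem~\ref{thm:components} it suffices to equip each biconnected component $B$ with a convex polygon representation that places the articulation vertices of $H$ lying inside $B$ on face boundaries; this placement can be arranged using Lemma~\ref{lem:faces} together with the observation that a vertex interior to a convex polygon representation is encircled by a hypergraph cycle avoiding it, hence cannot be an articulation vertex. Since every forbidden configuration witnessed inside $B$ by restricted hyperedges is also witnessed in $H$ by the full ones, $B$ too avoids (a)--(d). Now fix a plane embedding of $T(B)$ --- biconnected and planar by Lemmas~\ref{lem:biconnected} and \ref{lem:planar} --- with exterior face $S$. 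By Theorems~\ref{thm:biconnected} and \ref{thm:strict}, $B$ has a convex polygon representation once every $x\in T(B)-V(S)$ is joined to $S$ by three internally disjoint paths, equivalently (Menger) once no pair $\{u,v\}$ of vertices separates such an $x$ from $V(S)$ in $T(B)$.

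It remains to show that a separating pair $\{u,v\}$ in $T(B)$ would expose one of (a)--(d) in $B$, which I would do by cases on the types of $u$ and $v$, each being either an original hypergraph vertex or a triangulation center $c_F$. If both are original vertices, the arcs of polygon cycles through $u$ and $v$ that fence off $x$ belong to at least two hyperedges all incident to both $u$ and $v$; depending on how $x$ is enclosed this produces either a third such hyperedge (a $2$-adjacent cluster of $3$ hyperedges) or, after enlarging the fenced-off region one hyperedge at a time in the manner of the proof of Theorem~\ref{thm:biconnected} (see \Cref{fig:thm3}), a strangled vertex or strangled hyperedge. If some $c_F$ is involved, the hypergraph face $F$ participates in the fence, and its boundary together with the polygon cycles around it yields a cycle among the hyperedges incident to some vertex, or among the vertices incident to some hyperedge, inside the enclosed region --- again a strangled vertex or strangled hyperedge. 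The delicate part of the whole proof is exactly this case analysis: making the enlarging process terminate in precisely one of the four listed configurations, and correctly handling separators that involve triangulation centers (together with the matching bookkeeping for the articulation-vertex placement). By contrast the geometric forward direction and the structural reductions of the reverse direction are comparatively routine given the machinery already developed.
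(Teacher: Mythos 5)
Your architecture is the same as the paper's: reduce to biconnected components via Theorem~\ref{thm:components}, pass to the face triangulation graph via Theorem~\ref{thm:biconnected}, invoke Theorem~\ref{thm:strict}, and handle the geometric direction by direct arguments about shared vertices of strictly convex polygons. Your geometric direction is complete and matches the paper's (your side-versus-chord argument for the strangled hyperedge is arguably cleaner than the paper's convex-hull case split).

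The one substantive divergence is in the part you explicitly defer. You propose to classify the failing configurations by the types of the two separator vertices $u,v$ (via Menger), whereas the paper works directly with the deficient vertex $x$ joined to $S$ by fewer than three disjoint paths and cases on the kinds of face boundaries containing $x$: all boundaries are hyperedge cycles (yielding a 3-adjacent cluster of 2 hyperedges, or a pass to a neighbor $y$), a boundary that is neither a hyperedge nor a hypergraph face (forcing $x$ into a polygon interior, hence a 2-adjacent cluster of 3 hyperedges), or a triangulated hypergraph face (forcing $x$ to be, or be adjacent to, a triangulation center $c$, hence a strangled hyperedge). Notably, the paper does \emph{not} obtain the strangled vertex from the within-component separator analysis at all: it falls out of the articulation-vertex condition of Theorem~\ref{thm:components} --- an articulation vertex that cannot be placed on any face boundary is exactly one whose incident hyperedges form a cycle around it. Your sketch folds the strangled-vertex case into the $\{u,v\}$-separator analysis, which is not obviously wrong but is a different (and unverified) route; since the deferred case analysis is precisely where the proof's content lives, you should either carry out your separator-type classification in full or adopt the paper's organization, where the face-boundary trichotomy makes the exhaustiveness of the cases easier to see.
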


\begin{proof}
  $\Longrightarrow$ We prove the contrapositive statement: If $H$ does not have a convex polygon representation, then it contains one of the forbidden sub-hypergraphs. By Theorem~\ref{thm:components}, $H$ does not have a convex polygon representation if the sub-hypergraph induced by one of its biconnected components does not have a convex polygon representation. Let $B$ represent such a biconnected sub-hypergraph. Then by Theorems~\ref{thm:biconnected} and \ref{thm:strict}, it must be that every face triangulation graph $T(B)$ drawn with exterior face boundary $S$ contains a vertex $x \in T(B)-V(S)$ joined to $S$ by fewer than three paths that are disjoint except for $x$. Without loss of generality, let $T(B)$ and $S$ represent the drawing of the face triangulation graph of $B$ containing the fewest such vertices $x$.

  Consider the case where all of the face boundaries containing $x$ represent hyperedges in $B$. If $x$ is on exactly two such face boundaries, it must be that the corresponding hyperedges in $B$ share at least 3 common vertices including $x$. This matches the definition of a 3-adjacent cluster of 2 hyperedges (\Cref{fig:thm4} (a)). If $x$ is on more than two such boundaries, it follows that $x$ is adjacent to at least three other vertices. At least one of these adjacent vertices, call it vertex $y$, must also be joined to $S$ by fewer than three disjoint paths, otherwise, $x$ would be joined to $S$ by three disjoint paths. Without loss of generality, we can consider vertex $y$ instead of vertex $x$, which may be contained in a different set of face boundaries, and could fall under one of the other following cases.

  Now consider the case where $x$ is on a face boundary that neither represents a hyperedge in $H$ nor corresponds to a part of a hypergraph face in $B$. It follows that in the polygon drawing of $B$, $x$ is positioned in the interior of some hyperedge polygon. This indicates the existence of a 2-adjacent hyperedge cluster of 3 hyperedges in $B$ (\Cref{fig:thm4} (b)).

  Now consider the case where $x$ is on a face boundary corresponding to part of a hypergraph face in $B$. Then $x$ is either one of the face triangulation vertices $c$ from step 4 of Procedure~\ref{pro:triangulation}, or is adjacent to such a vertex. If it is the latter, it follows that the adjacent face triangulation vertex $c$ is also joined to $S$ by fewer than three disjoint paths. Without loss of generality, assume that $x=c$. Then $x$ is the central vertex of a wheel subgraph $W$ in $T(B)$. Since $x$ is adjacent to every other vertex in $W$, it follows that $W$ is joined to the rest of $T(B)$ by fewer than three disjoint paths. This configuration corresponds to a strangled hyperedge in $B$ (\Cref{fig:thm4} (c)).

  Thus, we have accounted for all possible configurations of $x$, all of which indicate the existence of a forbidden sub-hypergraph in the biconnected sub-hypergraph $B$. Theorem~\ref{thm:components} further implies that $H$ does not have a convex polygon representation if it contains an articulation vertex $x$ such that $x$ does not appear on a face boundary of any convex polygon representation of some sub-hypergraph $B$ induced by a biconnected component of $H$ containing $x$. Then the hyperedges that are incident to $x$ in $B$  completely surround $x$ in every convex polygon representation of $B$. This can only be possible if the hyperedges incident to $x$ in $B$ form a cycle in $B-x$. This matches the definition of a strangled vertex sub-hypergraph (\Cref{fig:forbidden} (c) from the main paper).

  \begin{figure}[tbp]
    \centering
    \subfloat[][3-adjacent cluster of 2 hyperedges]{\includegraphics[width=1in]{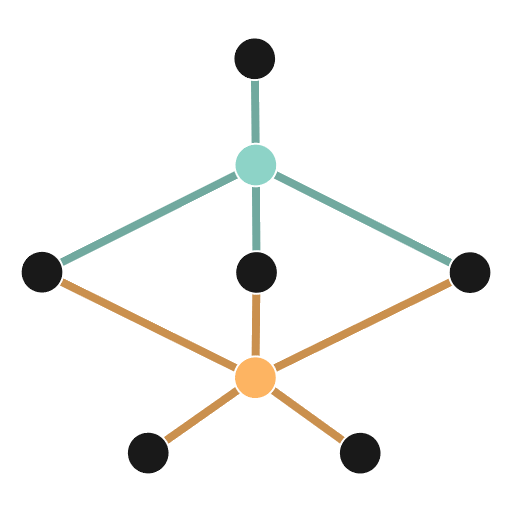} \hspace{0.125in} \includegraphics[width=1in]{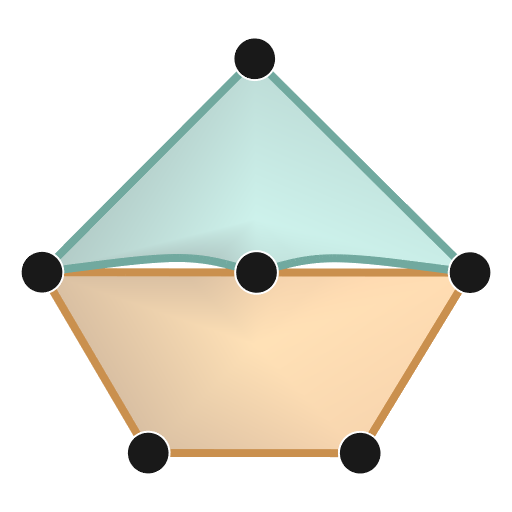} \hspace{0.125in} \includegraphics[width=1in]{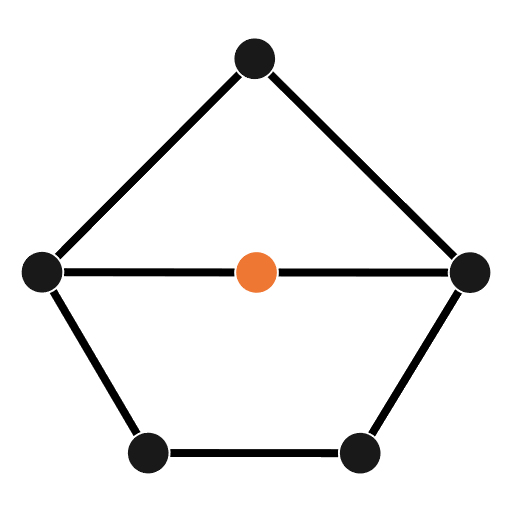}} \\
    \subfloat[][2-adjacent cluster of 3 hyperedges]{\includegraphics[width=1in]{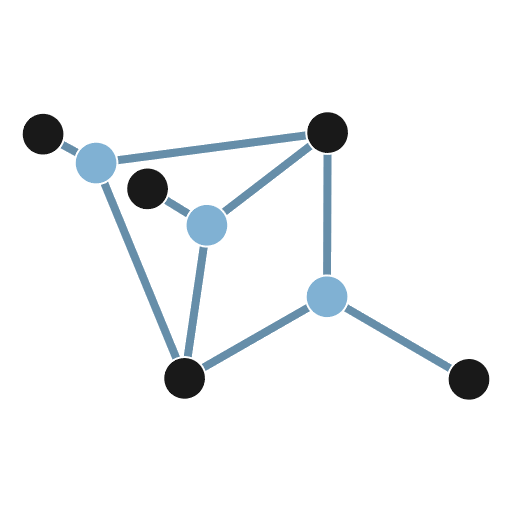} \hspace{0.125in} \includegraphics[width=1in]{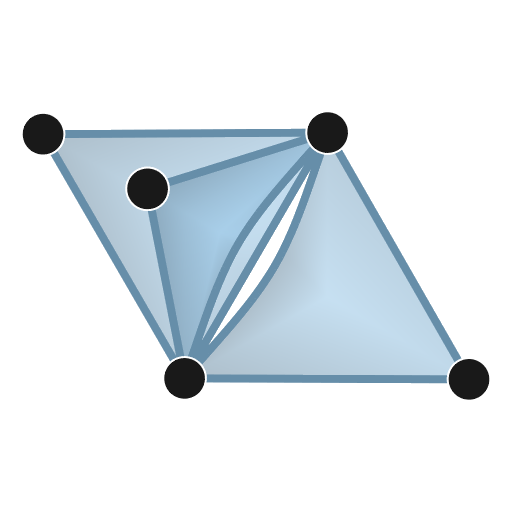} \hspace{0.125in} \includegraphics[width=1in]{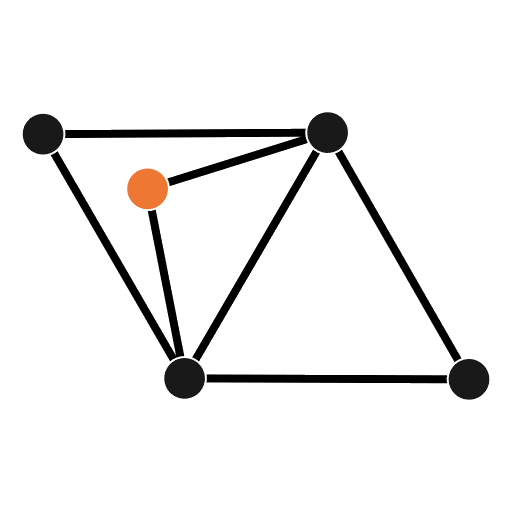}} \\
    \subfloat[][Strangled hyperedge]{\includegraphics[width=1in]{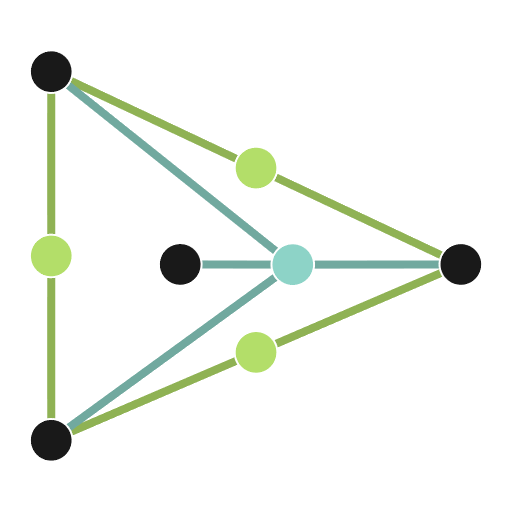} \hspace{0.125in} \includegraphics[width=1in]{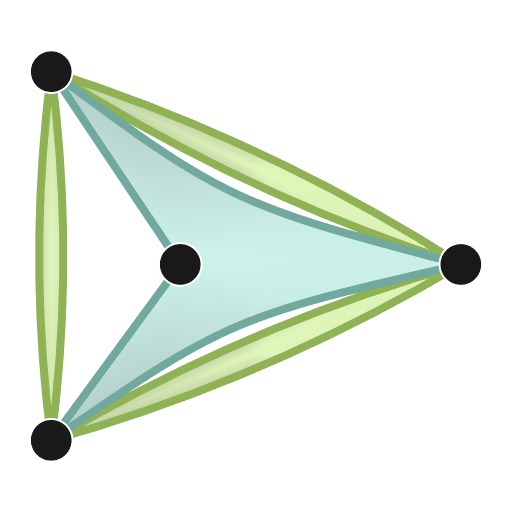} \hspace{0.125in} \includegraphics[width=1in]{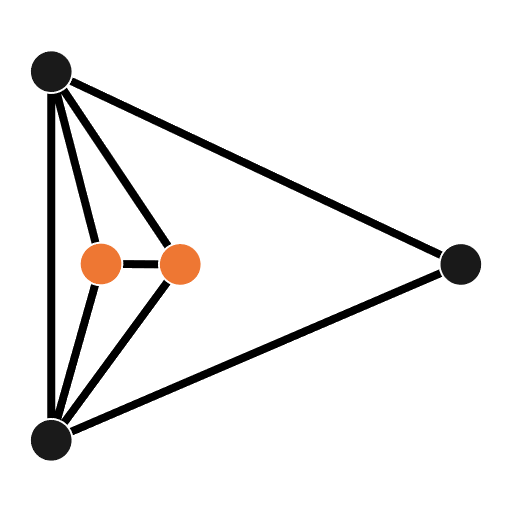}}
    \caption{Forbidden sub-hypergraphs (middle) drawn according to a plane embedding of their K{\"o}nig graphs (left) and their corresponding face triangulation graphs (right). The vertices highlighted in orange have fewer than 3 disjoint paths to the exterior face boundary.}\label{fig:thm4}
  \end{figure}

  $\Longleftarrow$ We prove the contrapositive statement: if $H$ contains any of the forbidden sub-hypergraphs, it does not have a convex polygon representation. First, consider the case where $H$ contains a 3-adjacent cluster of 2 hyperedges. When embedded in the plane, the three shared vertices in the 3-adjacent cluster must either form a triangle or be colinear. If they form a triangle, the intersection of two convex polygons containing the vertices must at least equal the area of the triangle. If the vertices are colinear, then the polygons containing them are not strictly convex. Thus, $H$ does not have a convex polygon representation.

  Now consider the case where $H$ contains a 2-adjacent cluster of 3 hyperedges. Then the locations of the two shared vertices in the cluster define a line splitting the plane into two half-planes. Let $\{e_1,e_2,e_3\}$ be the three hyperedges in the cluster. Without loss of generality, $e_1$ can be drawn with its remaining vertices in one half plane, and $e_2$ can be drawn with its remaining vertices in the other half plane, and there is no intersection between the polygons for $e_1$ and $e_2$. For $e_3$ to be drawn as a convex polygon, its remaining vertices must be drawn in one half plane or the other, so it must have a nonzero intersection with the polygon for $e_1$ or $e_2$, and $H$ does not have a convex polygon representation.

  Now consider the case where $H$ contains a strangled vertex $x \in V(H)$. Let $C$ be the cycle among a proper subset of the vertices adjacent and hyperedges incident to $x$. If the vertices $V(C)$ are positioned in the plane such that their convex hull does not match their order in the cycle, it must be that the cycle crosses over itself and the drawing is non-planar. Otherwise, if $x$ is located outside the convex hull of $V(C)$, it must be that one or more of the hyperedges in $E(C)$ have polygons crossing the interior and boundary of the convex hull, so the drawing is non-planar. If $x$ is located inside the convex hull of $V(C)$, it follows that if each hyperedge in $E(C)$ is drawn as a convex polygon, the interior of the hull is completely tiled by these polygons. Thus, any other hyperedge polygon incident to $x$ must intersect with one of the hyperedge polygons in $E(C)$, so $H$ does not have a convex polygon representation.

  Now consider the case where $H$ contains a strangled hyperedge $e \in E(H)$. Let $C$ be the cycle among a proper subset of the vertices incident and hyperedges adjacent to $e$. If the vertices $V(C)$ are located such that their convex hull does not match their order in the cycle, it must be that the cycle crosses over itself and the drawing is non-planar. Otherwise, if a vertex $x$ incident to $e$ but not in $V(C)$ is drawn inside the convex hull of $V(C)$, it follows that $e$ cannot be drawn as a strictly convex polygon. If $x$ is drawn outside the convex hull of $V(C)$, it follows that the drawing of $e$ has nonzero intersection with at least one hyperedge polygon in $E(C)$, so $H$ does not have a convex polygon representation.
\end{proof}

\section{Paper-Author Results}
\label{sec:paperauthor}

\Cref{fig:paperauthor_enlarged}: An enlarged version of \Cref{fig:paperauthor} from the main paper. A paper-author hypergraph dataset containing 1008 vertices and 429 hyperedges (a) is simplified with our framework down to the coarsest allowable scale $H_{1214}$ (c) and the layout is optimized. Then the simplification is iteratively reversed, and the layout refined at each intermediate scale, an example of which is shown in (b), until the original scale $H_0$ is reached.

\section{Eye Tracking Gaze Paths}
\label{sec:gazetrajectory}

\Cref{fig:gazetrajectory_enlarged}: An enlarged version of \Cref{fig:gazetrajectory} from the main paper. Gaze fixation paths of two participants answering the same question for the trade agreement dataset in our user survey. The participant with the gaze path on the left did not study the dual hypergraph and answered the question incorrectly. The participant with the gaze path on the right studied both the primal and dual hypergraph visualizations and answered the question correctly.

\section{Eye Tracking Fixation Timelines}
\label{sec:gazetimeline}

\Cref{fig:gazetimeline_enlarged}: An enlarged version of \Cref{fig:gazetimeline} from the main paper. Gaze fixation timelines of two participants answering a question in our user survey. The vertical axis indicates different regions on the participant's screen, including the question text and visualization scales. The horizontal axis represents the time in seconds that a participant spent on the question. The vertical lines in the plot indicate when the participant selected an answer. The blue lines indicate a correct answer and the red lines indicate an incorrect answer.

\begin{figure*}[tbp]
  \centering
  \subfloat[][Original Scale]{\includegraphics[width=5in]{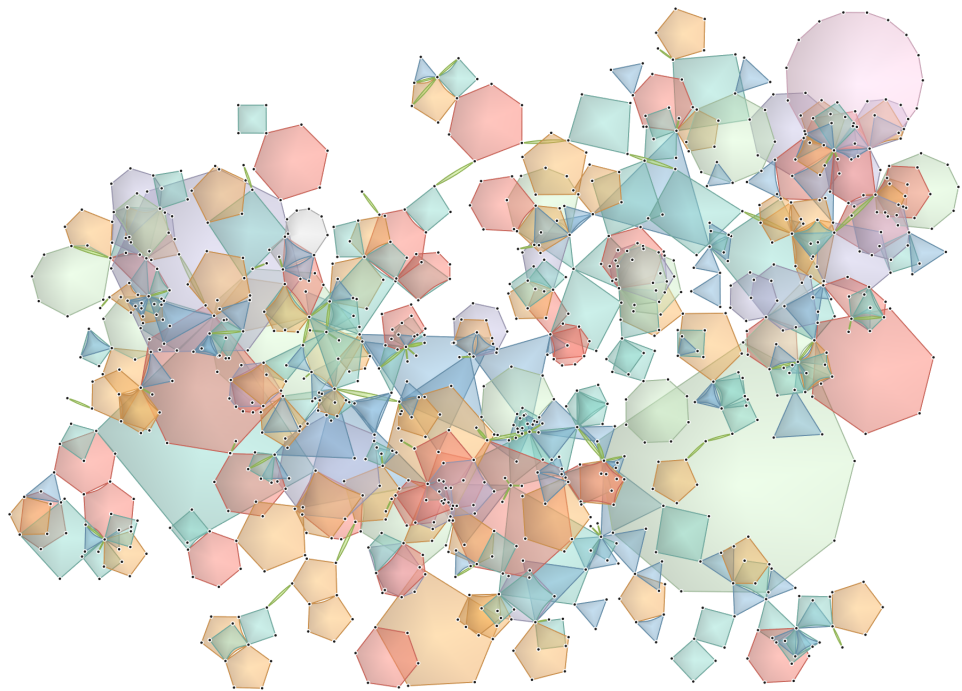}} \\
  \vspace{0.22in}
  \subfloat[][Intermediate Simplified Scale]{\includegraphics[width=5in]{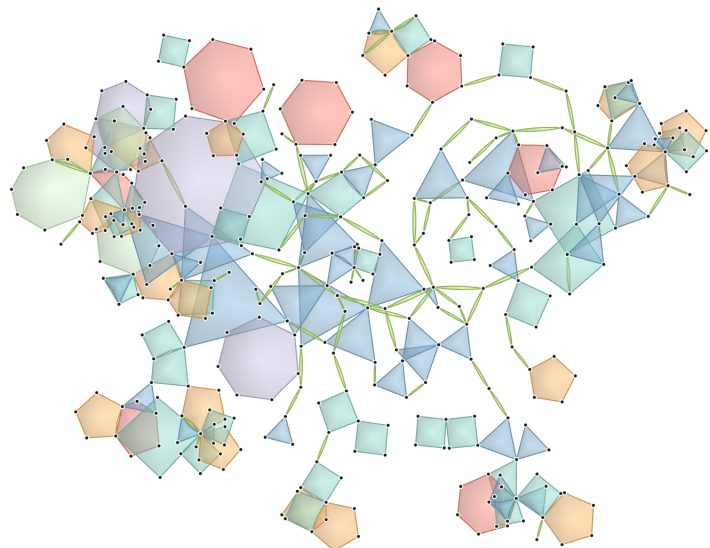}} \\
  \vspace{0.22in}
  \subfloat[][Coarsest Simplified Scale]{\includegraphics[width=5in]{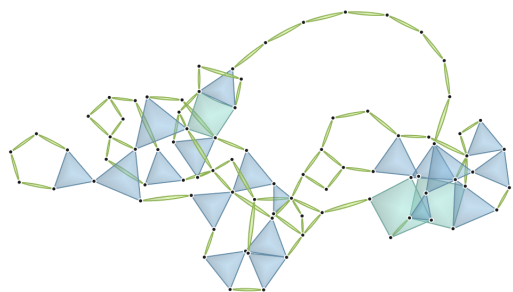}}
  \caption{Enlarged versions of the images in \Cref{fig:paperauthor} from the main paper. Final optimized layout of the original scale (a), coarsest simplified scale (c), and one intermediate simplified scale (b) for a paper-author hypergraph dataset.}\label{fig:paperauthor_enlarged}
\end{figure*}

\begin{figure*}[tbp]
  \centering
  \includegraphics[width=3.5in]{Kevin_redtrade_Q1.png}
  \includegraphics[width=3.5in]{Su_red_trade_Q1.png}
  \caption{Enlarged versions of the images in \Cref{fig:gazetrajectory} from the main paper. Gaze fixation paths of two user survey participants.}\label{fig:gazetrajectory_enlarged}
\end{figure*}

\begin{figure*}[tbp]
  \centering
  \includegraphics[width=7in]{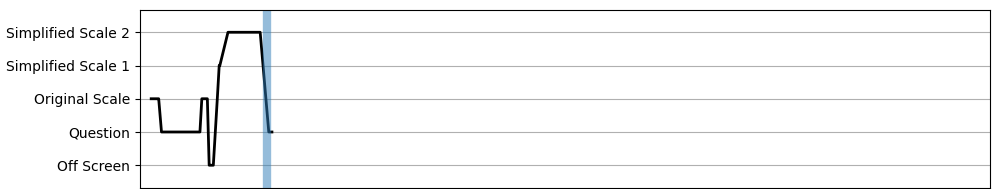} \\
  \includegraphics[width=7in]{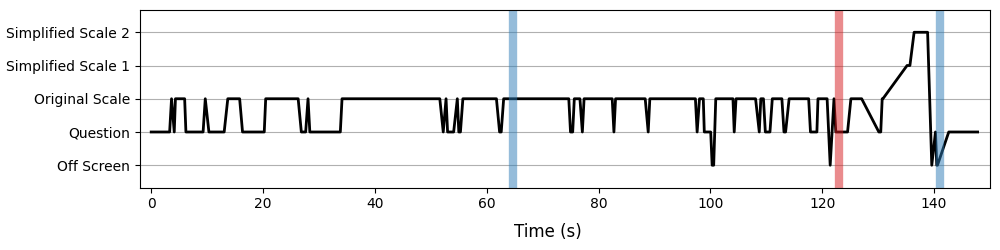}
  \caption{Enlarged versions of the images in \Cref{fig:gazetimeline} from the main paper. Gaze fixation timelines of two user survey participants.}
  \label{fig:gazetimeline_enlarged}
\end{figure*}

\end{document}